\def\ps@headings{%
\def\@oddhead{\mbox{}\scriptsize\rightmark \hfil \thepage}%
\def\@evenhead{\scriptsize\thepage \hfil \leftmark\mbox{}}%
\def\@oddfoot{}%
\def\@evenfoot{}}
\newtheorem{theorem}{Theorem}
\newtheorem{proposition}{Proposition}
\newtheorem{definition}{Definition}
\newtheorem{conjecture}{Conjecture}
\begin{document}

\title{Queueing Delay Minimization in Overloaded Networks}

\author{Xinyu~Wu,~\IEEEmembership{Student Member,~IEEE,}
       Dan~Wu,~\IEEEmembership{Member,~IEEE,}
       Eytan~Modiano,~\IEEEmembership{Fellow,~IEEE}
       \thanks{Xinyu Wu and Eytan Modiano are with the Laboratory for Information and Decision Systems, Massachusetts Institute of Technology, Cambridge, MA, 02139 USA. email: xinyuwu1@mit.edu, modiano@mit.edu.}
       \thanks{Dan Wu is with School of Electrical and Electronic Engineering, Huazhong University of Science and Technology, Wuhan, China. email: danwuhust@hust.edu.cn.}
       \thanks{}
       \thanks{Under journal review.}
       }


\maketitle

\IEEEpeerreviewmaketitle

\begin{abstract}
We develop link rate control policies to minimize the queueing delay of packets in overloaded networks. We show that increasing link rates does not guarantee delay reduction during overload. We consider a fluid queueing model that facilitates explicit characterization of the queueing delay of packets, and establish explicit conditions on link rates that can minimize the average and maximum queueing delay in both single-hop and multi-stage (switching) networks. These min-delay conditions require maintaining an identical ratio between the ingress and egress rates of different nodes at the same layer of the network. We term the policies that follow these conditions \emph{rate-proportional} policies. We further generalize the rate-proportional policies to \emph{queue-proportional} policies, which minimize the queueing delay asymptotically based on the time-varying queue length while remaining agnostic of packet arrival rates. We validate that the proposed policies lead to minimum queueing delay under various network topologies and settings, compared with benchmarks including the backpressure policy that  maximizes network throughput and the max-link-rate policy that fully utilizes bandwidth. We further remark that the explicit min-delay policy design in multi-stage networks facilitates co-optimization with other metrics, such as minimizing total bandwidth, balancing link utilization and node buffer usage. This demonstrates the wider utility of our main results in data center network optimization in practice.

\end{abstract}

\section{Introduction}

Network overload occurs when the total demand of network users exceeds the network capacity \cite{le2010optimal,georgiadis2006optimal}. Severe overload leads to heavy congestion with high delay and packet loss so that network performance is seriously degraded \cite{zhang2022aequitas, shah2011fluid, venkataramanan2013queue}. Network overload occurs more frequently in data center networks, notably driven by the upsurge of machine learning applications and higher traffic demands due to the larger size of training models  \cite{openai2018ai, zheng2023traffic, narayanan2021efficient}. Meanwhile, the slowdown of Moore's law compared with traffic growth further increases the likelihood of overload \cite{ballani2018bridging,cai2021understanding}. These facts pose challenges for network service providers to utilize communication bandwidth effectively to maintain network performance under overload \cite{poutievski2022jupiter, mellette2017rotornet, ballani2020sirius}. Another source of overload is capacity reduction due to network breakdowns, including network drains during data center maintenance  \cite{singh2015jupiter, poutievski2022jupiter}, unexpected failures of nodes and links \cite{al2008scalable, poutievski2022jupiter}, and cyberattacks such as denial-of-service \cite{fu2019fundamental, fu2019network} and node hijacking \cite{sermpezis2018survey, al2016bgp}. {Previous research investigated network overload extensively in various aspects: throughput maximization \cite{tekin2012dynamic,katsalis2013dynamic}, overload balancing \cite{li2014dynamic,georgiadis2006optimal,li2014receiver,chan2010fairness,qu2017mitigating}, traffic dynamics and their convergence \cite{talreja2008fluid,perry2013fluid,perry2016chattering}, and rapid recovery from overload \cite{perry2015achieving}.}


In this paper, we focus on \emph{delay minimization} in overloaded networks. Reducing network delay is crucial to both network users and enterprises, given delay-sensitive applications such as short-form videos and quantitative trading \cite{wu2022queueing}, and   service level delay objectives are difficult to meet under overload in large-scale data centers \cite{zhang2022aequitas}. Enterprise revenues are sensitive to delay: Google reports that advertisement revenues will decrease by $20\%$ if web search delay increases from 0.4s to 0.9s, and Amazon reports that an extra 100ms response time decreases the sales by $1\%$ \cite{sun2016delay}. 

The delay increase caused by network overload is mainly due to the \emph{queueing} delay of packets, since the queue buffers are increasingly backlogged due to the overload. A common approach to reducing queueing delay is active queue management, which drops packets when overload is observed or the queueing delay exceeds a specific upper bound \cite{athuraliya2001rem,cho2020overload,addanki2022abm}. Alternative approaches include network calculus-based scheduling with worst-case latency guarantees \cite{cruz1991calculus,zhang2022aequitas}, traffic shaping and pacing \cite{mellanox,Swift2020}, and smart buffer design to absorb traffic spikes \cite{broadcom}. However, these approaches are heuristic with no performance guarantees. Understanding optimal policies that globally minimize queueing delay holds potential for further delay reduction, which remains a hard problem \cite{ji2012delay,neely2012delay,bertsekas2021data}. {Extensive previous work demonstrates the power of load balancing in achieving close-to-zero queueing delay over heavily-loaded parallel servers \cite{sun2016delay, hsieh2017delay, liu2018achieving, wang2019delay, weng2020achieving}. Techniques like Kleinrock Independence Approximation have been applied to approximate the mean queueing delay of packets in general networks \cite{kleinrock2007communication, pang1986approximate,takagi1988queuing,modiano1996simple}. However, developing policies to minimize the queueing delay remains intractable.}

In this paper, we design optimal policies for queueing delay minimization in overloaded networks. Network overload further raises technical challenges in characterizing queueing delay using stochastic models since Little's law \cite{bertsekas2021data}, the foundation of queueing delay analysis in stationary systems, no longer holds in overloaded networks as the long-term expectation of delay is infinite. Hence, policies that achieve close-to-minimum delay when the network is not overloaded may no longer perform well in overloaded networks. We give an intuitive example in the $2\times 1$ single-hop network of Fig. \ref{fig:2x1-single-hop}, where the link capacities are $c_1=4$, $c_2=2$, and external packets arrive at node $s_i$ with rate $\lambda_i~ (i=1,2)$ and are transmitted to the shared buffer at node $d$ whose service rate is $\mu = 2$. Suppose that at most one link can be activated at a time. In this case, the maxweight scheduling policy that activates the link that is connected to the source node with longer queue backlog \cite{hsieh2017delay} has been shown to have near-optimal delay performance when the network is not overloaded $(\lambda_1+\lambda_2<\mu)$. However, the maxweight scheduling policy fails in delay minimization for packets injected into node $s_2$ during overload when $(\lambda_1, \lambda_2)=(8,3)$, as $s_1$ always has longer queue backlog than $s_2$, thus blocking packets in $s_2$ until the overload ends. For the case where the simultaneous activation of the two links is allowed, we show later in this paper that neither the throughout-optimal backpressure policy \cite{tassiulas1990stability} nor serving packets with maximum link rates can minimize delay under overload, while instead fixing the rate of link $(s_1,d)$ to be $2$, and link $(s_2,d)$ to be $0.75$, can minimize the delay in this example. These counter-intuitive observations reveal the necessity to redesign link rate control policies for queueing delay minimization in overloaded networks.

\begin{figure}[!htbp]
\centering
\includegraphics[width=0.82\linewidth]{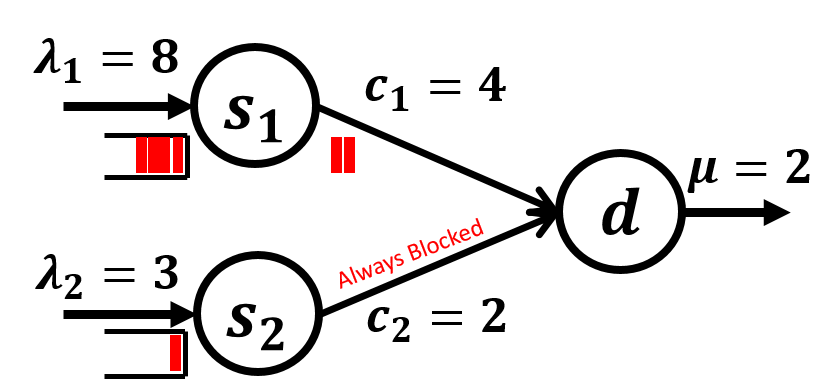}
\caption{An example of a $2\times 1$ single-hop network: Under maxweight policy, link $(s_1,d)$ is always activated while $(s_2,d)$ is blocked since the queue in $s_1$ grows with rate $\lambda_1-c_1=4$ while the queue in $s_2$ grows with rate at most $\lambda_2=3$.}
\label{fig:2x1-single-hop}
\end{figure}


To this end, we develop a deterministic fluid queueing model that elegantly solves the technical challenges of queueing delay characterization in overloaded networks. 
The fluid model regards network traffic as continuous flows instead of discrete packets. It well approximates the discrete packet transmission when the time unit is sufficiently small. 
Based on the model, we develop link rate control policies that minimize the queueing delay of the packets that arrive to the network within a bounded time interval, which corresponds to the duration of the overload. We demonstrate that our proposed policies minimize queueing delay in both single-hop and multi-stage switching networks, which serve as the basic structure of data center networks including Clos \cite{zhao2019minimal, zhang2021gemini, singh2015jupiter} and Tree \cite{heller2010elastictree, leiserson1985fat}. Hence, our results shed light on policy design to reduce delay in data centers under overload. 

We summarize our contributions as follows. (i) 
We derive explicit conditions on link rates that minimize both the average and maximum queueing delay of packets in general single-hop and multi-stage networks. 
These conditions correspond to a \emph{rate-proportional} policy which maintains an identical ratio between the ingress and egress rates of each node at the same layer, i.e., the ingress rates of all the nodes at a layer should be proportional to their egress rates. (ii) We generalize the rate-proportional policies to \emph{queue-proportional} policies, that can minimize queueing delay asymptotically based on real-time queue backlogs, and do not require knowledge of packet arrival rates \cite{wu2022overload, weng2020achieving}. (iii) We validate that our proposed policies achieve minimum delay in various settings of single-hop and multi-stage networks, and demonstrate further delay reduction compared with benchmarks including the backpressure policy \cite{georgiadis2006optimal, tassiulas1990stability} that maximizes network throughput and the max-link-rate policy that fully utilizes bandwidth. (iv) We demonstrate that the proposed explicit min-delay policies  facilitate co-optimization with other metrics that are important in data centers: minimizing total bandwidth, balancing link utilization, and balancing overload rates at different node buffers. We also determine a set of more relaxed min-delay conditions for  tree data center structures, and conjecture on the sufficient and necessary condition minimizing queueing delay in general multi-stage networks.




The rest of the paper is organized as follows: Section \ref{sec:model} introduces the models, definitions, and problem formulation for queueing delay minimization in overloaded networks; Section \ref{sec:static_delay_optimal} investigates the rate-proportional policies that minimize the queueing delay in both single-hop and multi-stage networks; Section \ref{sec:queue_policy} generalizes the rate-proportional policies to queue-proportional policies; Section \ref{sec:evaluation} validates the min-delay performance of the proposed policies and the delay reduction compared with the benchmarks; and Section \ref{sec:extension} discusses the practical and theoretical extension of the proposed min-delay policies. 

\section{Models, Definitions and Problem Formulation}
\label{sec:model}

In this section, we introduce the single-hop and multi-stage network models and the fluid queueing model of packet flows, and define network overload. We then characterize the queueing delay of a packet based on the fluid model and formulate the queueing delay minimization problem based on the derived explicit forms of the average and maximum queueing delay of packets.

\subsection{Network Models: Topology and Dynamics}



\subsubsection{Single-hop networks} A single-hop network contains a set of ingress nodes and egress nodes. We model an $N_S\times N_D$ single-hop network as a bipartite graph $(\mathcal{V},\mathcal{E})$ with $\mathcal{V}:=\{\mathcal{V}_S,\mathcal{V}_D\}$, where $\mathcal{V}_S$ denotes the set of ingress nodes with size $|\mathcal{V}_S| = N_S$, and $\mathcal{V}_D$ denotes the set of egress nodes with size $|\mathcal{V}_D| = N_D$, and $\mathcal{E}$ denotes the set of transmission links from $\mathcal{V}_S$ to $\mathcal{V}_D$. Fig.~\ref{fig:single-hop}(a) visualizes the single-hop structure. Examples of single-hop networks include switched networks as Fig.~\ref{fig:single-hop}(b) and server farms as Fig.~\ref{fig:single-hop}(c). The single-hop structure is the basic network unit that constitutes many data center networks \cite{al2008scalable,singh2015jupiter}.


\begin{figure}[!htbp]
\centering
\includegraphics[width=0.98\linewidth]{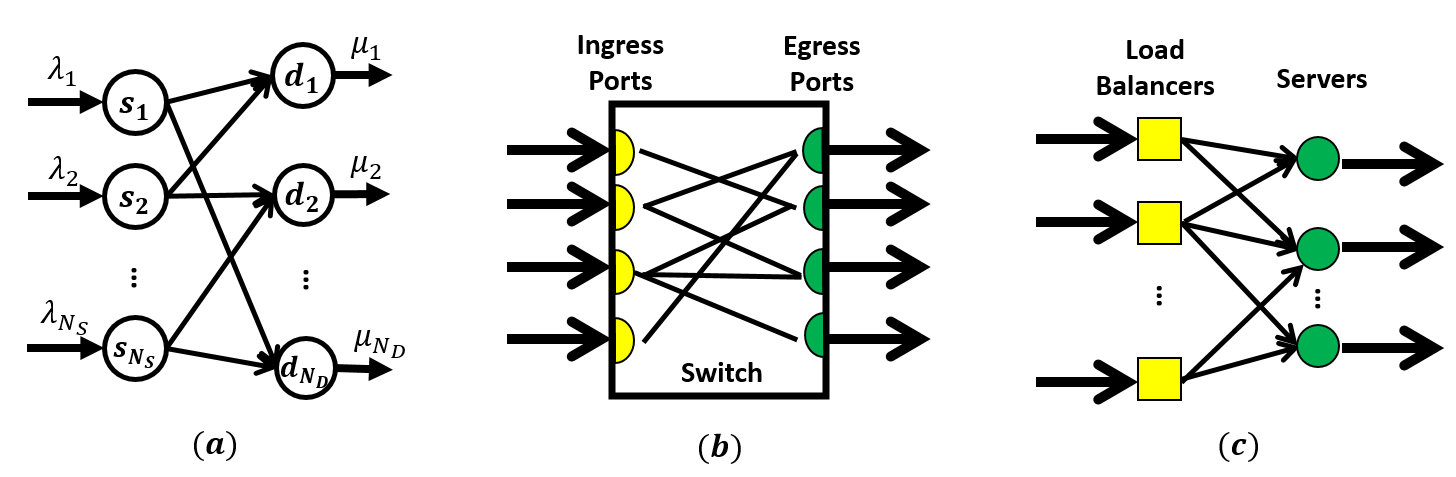}
\caption{(a) A single-hop network structure; (b) A switched network with ingress and egress ports; (c) A server farm with load balancers as ingress and servers as egress.}
\label{fig:single-hop}
\end{figure}


We denote the $i$th ingress node by $s_i$ and the $j$th egress node by $d_j$ in single-hop networks. {We consider that a packet injected into an ingress node can be dispatched to any connected egress node and depart.} We denote the packet arrival rate at ingress node $s_i$ by $\lambda_i$, which represents the average number of packets injected into node $s_i$ in a time unit. 
We use $\boldsymbol{\lambda}:=\{\lambda_i\}_{i=1}^N$ to represent the packet arrival rate vector {which we assume to be static (i.e., time-invariant)}. We further assume that at each node, packets in the buffer follow the first-come-first-serve service which is common in real network infrastructures \cite{bertsekas2021data}. We denote the queue length in node $k$ at time $t$ by $q_k(t)$. 
We denote the packet transmission rate on link $(s_i,d_j)$ at time $t$ by $g_{s_id_j}(t)$, which represents the number of packets transmitted over $(s_i,d_j)$ at time $t$. Each link $(s_i,d_j)$ is associated with a capacity value $c_{s_id_j}$, which is the maximum transmission rate, i.e., $0\leq g_{s_id_j}(t) \leq c_{s_id_j},~\forall t,~\forall (s_i,d_j)\in \mathcal{E}$. Note that trivially $g_{s_id_j}(t)\equiv 0$ for any $(s_i,d_j)\notin \mathcal{E}$, and $g_{s_id_j}(t)=0$ when $q_{s_i}(t)=0$ for any $(s_i,d_j)\in \mathcal{E}$, which means no packets will be transmitted through $(s_i,d_j)$ when there is no queue backlog in node $s_i$.  We use $\mathbf{g}(t):=\{g_{s_id_j}(t)\}_{(s_i,d_j)\in \mathcal{E}}$ to denote the transmission rate vector and $\mathbf{c}:=\{c_{s_id_j}\}_{(s_i,d_j)\in \mathcal{E}}$ to denote the capacity vector. We consider that each egress node $d_j$ serves packets in a work-conserving manner with its maximum service rate denoted by $\mu_{j}$, whenever there exists queue backlog in the buffer. It is clear that work-conserving service at the egress nodes is a necessary condition for queueing delay minimization. Therefore we can merely focus on setting the transmission rate vector $\mathbf{g}(t)$ between the ingress and egress nodes to minimize queueing delay.

We apply a fluid queueing model to characterize the queueing dynamics: Packets are modeled as continuous traffic flows instead of discrete units, which means the queue length can be fractional. The fluid model is based on the flow conservation law, which states that the net increase of queue length equals to the difference between the number of new arrivals and departures at a node at any time, i.e.,
\begin{equation}
\label{eqn:ODE}
\begin{cases}
\dot{q}_{s_i}(t)=\lambda_i-\sum_{d_j\in \mathcal{V}_D} g_{s_id_j}(t), ~\forall i = 1,\dots,N_S \\
\dot{q}_{d_j}(t)=\sum_{s_i\in \mathcal{V}_S} g_{s_id_j}(t)-g_{d_j}(t), ~\forall j = 1,\dots,N_D
\end{cases}
\end{equation}
where under the work-conserving mechanism at egress nodes,  $g_{d_j}(t) := \mu_j$ if $q_{d_j}(t)>0$.
The dynamics \eqref{eqn:ODE} 
provide a simplified framework for flow control analysis compared with the discrete queueing model \cite{georgiadis2006optimal}. Note that it is different from the fluid model defined in some prior works which captures the scaled limit of the queue backlog \cite{dai2005maximum,shah2011fluid,markakis2018delay}, an indicator for queue stability but not suited to study queueing delay. 

\subsubsection{Multi-stage networks} We extend the definitions to multi-stage networks. A multi-stage network contains multiple layers of network nodes, and transmission links connecting nodes at adjacent layers. A multi-stage network with $L$ layers of nodes is composed of an ingress layer where packets are injected into the network, an egress layer where packets depart from the network, and $L-2$ middle layers between them. We index the layers in order where the ingress layer is layer $1$ and the egress layer is layer $L$. We can view an $L$-layer network as a cascade of $L-1$ single-hop networks, where a packet at any node at layer $l$ can be dispatched to any of its connected node at layer $l+1$, and finally departs at some node at the egress layer. Each packet will traverse one node in each layer, and all the traversed nodes form the path of this packet. Different packets may take different paths. Fig.~\ref{fig:multi-stage} gives an example of a multi-stage network with $L=4$. Multi-stage networks are the common structures in data center infrastructures like Fat-tree \cite{al2008scalable, lebiednik2016survey}, Clos \cite{heller2010elastictree, zhao2019minimal}, and the direct-connect topology with spine blocks removed \cite{poutievski2022jupiter, wang2023topoopt}.

We use the following notations in multi-stage networks. Denote the set of nodes at layer $l$ by $\mathcal{V}_l$ with size $|\mathcal{V}_l| = N_l$, the $i$-th node at layer $l$ by $n_i^l$, and the transmission rate and capacity of link $(n_i^l, n_j^{l+1})$ between layer $l$ and $l+1$ by $g_{n_i^l,n_j^{l+1}}$ and $c_{n_i^l,n_j^{l+1}}$ respectively. The packet arrival rate to the ingress node $n_i^1$ is $\lambda_i$, and the maximum service rate at the egress node $n_j^L$ is $\mu_j$. Similarly, all the egress nodes operate in a work-conserving manner, and $g_{n_{i}^{l-1}, n_j^{l}}(t) = 0$ if $q_{n_{i}^{l-1}}(t) = 0$ and $g_{n_i^{L}}(t) = 0$ if $q_{n_i^{L}}(t) = 0$. We define the queueing dynamics in multi-stage networks in \eqref{eqn:ODE_multi_stage}, which is an extension of \eqref{eqn:ODE} from 2 layers to $L$ layers.  

\begin{equation}
\label{eqn:ODE_multi_stage}
\begin{cases}
\dot{q}_{n_i^{1}}(t)=\lambda_i-\sum\limits_{n_j^{2} \in \mathcal{V}_2} g_{n_i^{1}, n_j^{2}}(t), ~\forall i = 1,\dots,N_1 \\
\quad \\
\dot{q}_{n_i^{l}}(t)=\sum\limits_{n_{k}^{l-1} \in \mathcal{V}_{l-1}} g_{n_k^{l-1},n_i^{l}}(t) - \sum\limits_{n_{j}^{l+1}\in \mathcal{V}_{l+1}} g_{n_i^{l}, n_{j}^{l+1}}(t), 
\\ \qquad \qquad \qquad \qquad \qquad \forall i = 1,\dots,N_l, ~\forall l = 2, \dots, L-1 \\
\quad \\
\dot{q}_{n_i^{L}}(t)=\sum\limits_{n_{k}^{L-1}\in \mathcal{V}_{L-1}} g_{n_{k}^{L-1}, n_i^{L}}(t)-g_{n_i^{L}}(t), ~\forall i = 1,\dots,N_L
\end{cases}
\end{equation}


\begin{figure}[!htbp]
\centering
\includegraphics[width=0.98\linewidth]{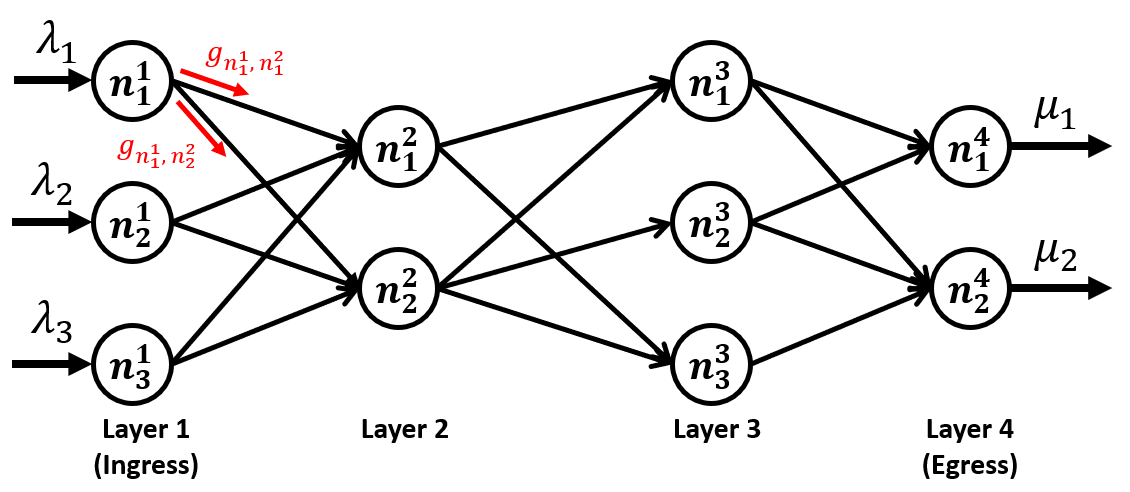}
\caption{An example of a 4-layer multi-stage network}
\label{fig:multi-stage}
\end{figure}

In this work, {we start from the special case of queueing dynamics  \eqref{eqn:ODE} and \eqref{eqn:ODE_multi_stage} under \emph{static} transmission policies where the transmission rates ${g}_{ij}(t)$ of each link $(i,j)$ at different times $t$ are a constant value $g_{ij}$ when $q_i(t)>0$.} We demonstrate below that the study over static policies facilitates the characterization of queueing delay of packets and the policy design for delay minimization, and the results inspire the \emph{dynamic} policy design based on real-time queue backlog information instead of packet arrival rates at the ingress layer. 





\textbf{Remark}: The equations in \eqref{eqn:ODE} and \eqref{eqn:ODE_multi_stage} give the general formulation  of queueing dynamics without restrictions on packet routing, where the packets at a node can be transmitted to any of its connected nodes at the next layer. We can add routing constraints for example forcing $g_{ij}(t) \equiv 0$ which means packets at node $i$ cannot be dispatched to $j$. We derive the policies that minimize queueing delay under the unrestricted dynamics and demonstrate that they still hold with routing restrictions, explained in Section \ref{subsec:practical}.

\subsection{Network Overload}

We say that a network is overloaded if there is no transmission policy that guarantees bounded queueing backlog over all the node buffers in the network. 
\begin{definition}
\label{def:overload_general}
A network is overloaded if there is no transmission policy $\{\mathbf{g}(t)\}_{t\geq 0 }$ that can guarantee $\lim_{t\rightarrow \infty}q_i(t) < \infty,~\forall i \in \mathcal{V}$. 
\end{definition}
Definition \ref{def:overload_general} requires that no transmission policy can stabilize the network, which can be interpreted as the packet arrival rate vector $\boldsymbol{\lambda}$ beyond the network capacity region \cite{neely2010stability}. We can purely focus on overloaded networks since if $\boldsymbol{\lambda}$ is interior to the capacity region, there must exist a static policy $\mathbf{g}$ which guarantees that the total egress link rates of any node is greater than its total ingress traffic rate. Then it is trivial to apply this policy so that the queueing delay is zero under the deterministic fluid queueing model.

Under static transmission policies, we can derive more explicit conditions for a single-hop and a multi-stage network being overloaded in Definition \ref{def:overload_single_hop} and \ref{def:overload_multi_stage} respectively.

\begin{definition}
\label{def:overload_single_hop}
An $N_S \times N_D$ single-hop network under static policies is overloaded if there is no transmission rate vector $\mathbf{g}$ such that $g_{ij} \in [{0}, {c}_{ij}],~\forall (i,j)\in \mathcal{E}$ and
\begin{equation}
\label{eqn:overload_single_hop}
\begin{cases}
\sum_{d_j: (s_i,d_j) \in \mathcal{E}} g_{s_id_j} \geq \lambda_i ,~\forall i = 1,\dots,N_S \\
\sum_{s_i: (s_i,d_j) \in \mathcal{E}} g_{s_id_j}  \leq \mu_j,~\forall j = 1,\dots,N_D \\
\end{cases}
\end{equation}
\end{definition}


\begin{definition}
\label{def:overload_multi_stage}
An $L$-layer network under static policies is  overloaded if there is no transmission rate vector $\mathbf{g}$ such that $g_{ij} \in [{0}, {c}_{ij}],~\forall (i,j)\in \mathcal{E}$ and
\begin{equation}
\label{eqn:overload_multi_stage}
\begin{cases}
\sum_{n_j^{2} \in \mathcal{V}_2} g_{n_i^{1},n_j^{2}} \geq \lambda_i ,~\forall i = 1,\dots,N_1 \\
\sum_{n_k^{l-1}\in \mathcal{V}_{l-1}} g_{n_k^{l-1},n_i^{l}} \leq \sum_{n_j^{l+1} \in \mathcal{V}_{l+1}} g_{n_i^{l},n_j^{l+1}}, \\ 
\qquad \qquad \qquad \forall i = 1,\dots,N_l, ~\forall l = 2, \dots, L-1  \\
\sum_{n_k^{L-1} \in \mathcal{V}_{L-1}} g_{n_k^{L-1},n_i^{L}}  \leq \mu_i,~\forall i = 1,\dots,N_L
\end{cases}
\end{equation}
\end{definition}







\subsection{Queueing Delay Characterization}

We characterize the queueing delay of packets in overloaded networks under static transmission policies. The queueing delay dominates other network delays including preprocessing delay, transmission delay, and propagation delay in overloaded network, since the overload leads to severe increase of queue backlog in node buffers. Moreover, the other network delays are  independent of the packet arrival rates and transmission policies. Therefore, we ignore the other delays in our analysis, and the delay only represents the queueing delay below. 

We derive the explicit form of the total queueing delay of a packet that arrives at an ingress node. We first consider a 2-node network with a single link in Fig.~\ref{fig:two-node-model} to explain the derivation. Consider the shaded packet at the tail of node 1. We assume it arrives at node 1 at time $t$. The queueing delay of this packet at node 1 is $q_1(t)/g_{12}$, as the shaded packet has to wait for all of the packets ahead of it to be served. The packet departs from node 1 and arrives at node 2 at time $t^{\prime}:=t+q_1(t)/g_{12}$, and thus its queueing delay at node 2 is $q_2\left(t^{\prime}\right)/\mu$. Therefore the total queueing delay for this packet is

\begin{equation}
\label{eqn:two-node-model}
\small
\begin{aligned}
\frac{q_1(t)}{g_{12}} + \frac{q_2\left(t^{\prime}\right)}{\mu} &= \frac{q_1(t)}{g_{12}} + \max\left\{\frac{q_2(t)+ \frac{q_{1}(t)}{g_{12}}(g_{12}-\mu)}{\mu},0\right\}
\\&=\max\left\{\frac{q_1(t)+q_2(t)}{\mu}, \frac{q_1(t)}{g_{12}}\right\}
\end{aligned}
\end{equation}
where the queue growth rate at node 2 is $g_{12} - \mu$, and thus $q_2\left(t^{\prime}\right)$ is equal to $q_2(t)$ plus the total growth of packets in the buffer over time length $q_1(t)/g_{12}$. The $\max$ term in the second line is to take into account that $q_2(t^{\prime})$ may reach $0$ when $g_{12}<\mu$. 


\begin{figure}[!htbp]
\centering
\includegraphics[width=0.88\linewidth]{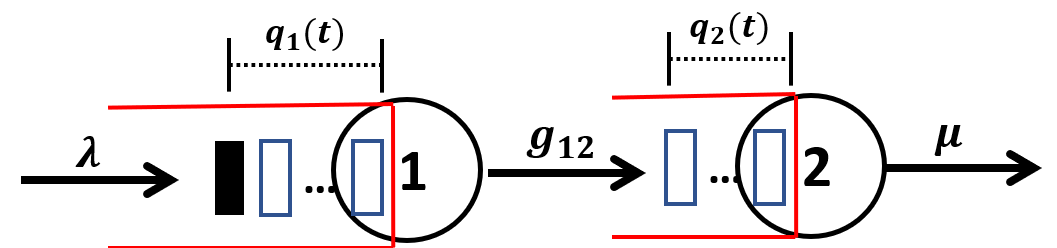}
\caption{An example of queueing delay characterization of a packet passing two nodes}
\label{fig:two-node-model}
\end{figure}

We extend the derivation in \eqref{eqn:two-node-model} to an $N_S\times N_D$ single-hop network. We denote by $D_{s_id_j}(t)$ the queueing delay of a packet that arrives at the ingress node $s_i$ at time $t$ and departs at the egress node $d_j$. The queueing delay of this packet at $s_i$ is ${q_{s_i}(t)} /{\sum_{d_k: (s_i,d_k)\in \mathcal{E}} g_{s_id_k}}$ where $\sum_{d_k: (s_i,d_k)\in \mathcal{E}} g_{s_id_k}$ is the sum of the egress link rates over all the links starting from $s_i$. Suppose that this packet is dispatched to $d_j$. The time it arrives at $d_j$ is $t^{\prime} = t +  q_{s_i}(t)/{\sum_{d_k: (s_i,d_k)\in \mathcal{E}} g_{s_id_k}}$, and its queueing delay at $d_j$ is $q_{d_j}(t^{\prime}) / \mu_j$. We consider the case where the static transmission rate vector $\mathbf{g}$ guarantees $q_{d_j}(t) > 0,~\forall t$, i.e., node $d_j$ keeps serving with rate $\mu_j$. In this case, we can express the total delay of this packet as
\begin{equation}
\small
\label{eqn:two-hop-delay}
\begin{aligned}
&D_{s_id_j}(t) = \frac{q_{s_i}(t)}{\sum_{d_k} g_{s_id_k}}  + \frac{q_{d_j}(t^{\prime})}{\mu_j}
\\& = \frac{q_{s_i}(t)}{\sum_{d_j} g_{s_id_j}} + \frac{1}{\mu_j}\left(q_{d_j}(t) + \frac{q_{s_i}(t)}{\sum_{d_k} g_{s_id_k}} \left( \sum\nolimits_{s_k} g_{s_kd_j} - \mu_j \right) \right)
\\& = \frac{1}{\mu_j}\left(q_{d_j}(t) + \frac{\sum_{s_k} g_{s_kd_j}}{\sum_{d_k} g_{s_id_k}} q_{s_i}(t)\right)
\end{aligned}
\end{equation}
where the queueing delay of a single packet that arrives at the network at any time $t$ can be expressed by a linear combination of the queue length at time $t$ at the ingress and egress nodes that this packet traverses. {We can generalize \eqref{eqn:two-hop-delay} to multi-stage networks with $L$ layers to characterize the total queueing delay of packets taking any path $p$, denoted by $D_p(t)$, as a linear combination of the queue length at time $t$ at all the nodes on the path.} We show in later sections that the explicit forms like \eqref{eqn:two-node-model} and \eqref{eqn:two-hop-delay} facilitate the derivation of static policies $\mathbf{g}$ that minimize queueing delay.





\subsection{Problem Formulation}


We define two queueing delay metrics that we try to minimize in this work: 
(i) the average delay $\bar{D}_{\text{avg}}$ (ii) the maximum ingress delay $\bar{D}_{\text{max}}$, whose  definitions are introduced later. At a high level, $\bar{D}_{\text{avg}}$ reflects the overall delay performance of all the arrived packets, which in practice is relevant to data centers where the overall performance is important \cite{zhang2022aequitas,poutievski2022jupiter,zhang2021gemini}; $\bar{D}_{\text{max}}$ represents the largest delay of the packets that arrive at different ingress nodes, which in practice is related to the fairness and flow completion time of tasks parallelized to different ingress nodes \cite{zats2012detail,chowdhury2015coflow,wang2023topoopt}. We focus on minimizing both metrics for packets that arrive to the network in some \emph{bounded} time interval $[t_0, t_0+T]$ where $t_0$ is the initial timestamp and $T<\infty$ is the time duration, given that network overload is a temporary event in practice. 

We give the formal definitions of $\bar{D}_{\text{avg}}$ and $\bar{D}_{\text{max}}$. We first consider an $N_S\times N_D$ single-hop network. Denote the average queueing delay of packets that arrive at the ingress node $s_i$ within $[t_0, t_0+T]$ by $\bar{D}_i$, which is
\begin{equation}
\label{eqn:bar_D_i}
\bar{D}_i = \frac{1}{T}\int_{t_0}^{t_0+T} \sum_{j=1}^{N_D} \left(\frac{g_{s_id_j}}{\sum_{k=1}^{N_D} g_{s_id_k}} D_{s_id_j}(t)\right) dt
\end{equation} 
for $\forall i=1,\dots,N_S$, where $D_{s_id_j}(t)$ is as in \eqref{eqn:two-hop-delay}. Note that \eqref{eqn:bar_D_i} contains two layers of averaging: (i) averaging over different arrival times $t$ within $[t_0, t_0+T]$, which is an unweighted integral; (ii) averaging over packets sent to different egress nodes, which is weighted by $g_{s_id_j}/{\sum_{k=1}^{N_D} g_{s_id_k}}$, i.e., the portion of packets that arrive at $s_i$ and will depart from $d_j$. With \eqref{eqn:bar_D_i}, we formulate the two delay metrics to be optimized in this paper as

\begin{subnumcases}
\\
\bar{D}_{\text{avg}} = \sum_{i=1}^{N_S} \frac{\lambda_i}{\sum_{j=1}^{N_S} \lambda_j} \bar{D}_i \label{eqn:metric_avg};
\\
\bar{D}_{\text{max}} = \max_{i=1,\dots,N_S} \bar{D}_i \label{eqn:metric_max}.
\end{subnumcases}

The $\bar{D}_{\text{avg}}$ in \eqref{eqn:metric_avg} introduces an additional layer of averaging, weighted by the ratio $\lambda_i T/\left(\sum_{j=1}^N \lambda_j T\right) =\lambda_i /\left(\sum_{j=1}^N \lambda_j \right)$ that is the portion of the packets that arrive at the ingress node $s_i$ within $[t_0, t_0+T]$. The $\bar{D}_{\text{max}}$ in \eqref{eqn:metric_max} takes the maximum over all $\bar{D}_i$'s, which represents the highest average queueing delay of packets among all the ingress nodes.

We extend the definitions of both metrics to general multi-stage networks. We solely need to modify \eqref{eqn:bar_D_i} into 
$$
\bar{D}_i = \frac{1}{T}\int_{t_0}^{t_0+T} \sum_{p:~p[0]=n_i^1} w_{p} D_{p}(t) dt, ~\forall i=1,\dots,N_S
$$
where $\{p: p[0]=n_i^1$\} contains all the paths $p$ that start from the ingress node $n_i^1$, and $w_p$ represents the proportion of packets taking the path $p$ among all packets starting from $n_i^1$. We show in the proof of the results in Section \ref{subsec:general} that $w_p$ can be expressed as a function of the transmission rate vector $\mathbf{g}$. The formulation of $\bar{D}_{\text{avg}}$ and $\bar{D}_{\text{max}}$ are the same as \eqref{eqn:metric_avg} and \eqref{eqn:metric_max} respectively.






\section{Static Min-Delay Policy Design}
\label{sec:static_delay_optimal}

In this section, we develop the static policies that minimize $\bar{D}_{\text{avg}}$ and $\bar{D}_{\text{max}}$. We start from $N\times 1$ single-hop networks with $N_S=N$ ingress nodes and $N_D=1$ egress node. Examples include a single server receiving requests from multiple sources, and packets that arrive from multiple upstream links sharing a single port of a downstream switch between two stages in a data center \cite{singh2015jupiter}. We prove a sufficient and necessary condition on link rates that minimize both the delay metrics. The condition requires that the link rates from all the ingress nodes to the egress node are in the same proportion to their corresponding packet arrival rates. We term any policy under which the condition holds as a \emph{rate-proportional} policy. We unveil a counter-intuitive corollary that using larger link rates may increase delay. We then demonstrate that the rate-proportional policy can be extended to general single-hop networks and multi-stage networks, which guarantees minimum $\bar{D}_{\text{avg}}$ and $\bar{D}_{\text{max}}$.


\subsection{$N\times 1$ Networks}


Consider an $N\times 1$ network as shown in Fig.~\ref{fig:Nx1}. We identify the transmission rate vector $\mathbf{g}:=\{g_i\}_{i=1}^N$ that minimizes $\bar{D}_{\text{avg}}$ and $\bar{D}_{\text{max}}$, where we abbreviate link rate $g_{s_i,d}$ as $g_i$. We first derive the policies that minimize both delay metrics given unlimited link capacities, and then discuss how the capacities affect the result. 

\begin{figure}[!htbp]
\centering
\includegraphics[width=0.9\linewidth]{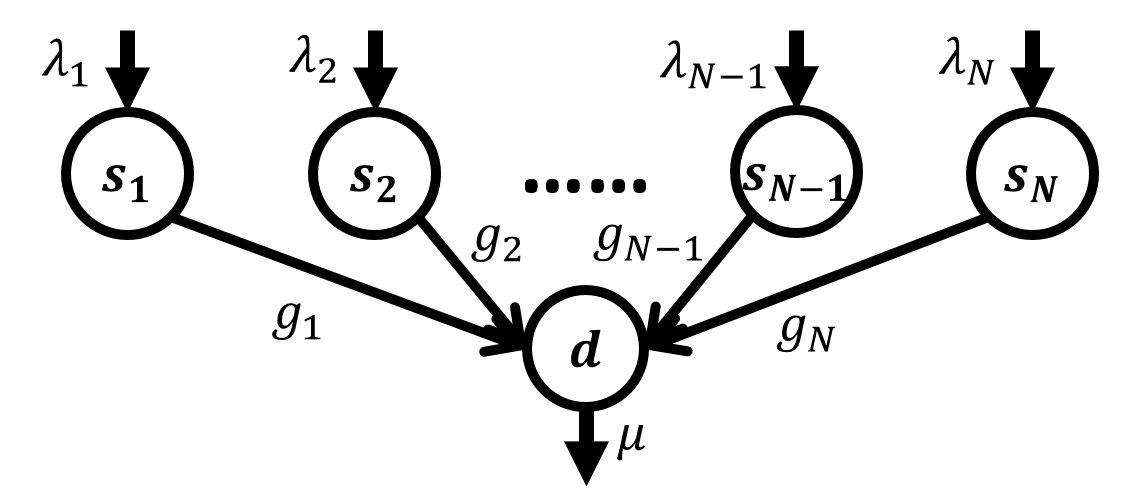}
\caption{An example of an $N\times 1$ single-hop network}
\label{fig:Nx1}
\end{figure}


We identify a sufficient and necessary condition on $\mathbf{g}$ that minimizes $\bar{D}_{\text{avg}}$ and $\bar{D}_{\text{max}}$ in Theorem \ref{thm:Nx1}. We give the detailed proof under $N=2$ for brevity, which can be extended to general $N$. We also derive the result under zero initial queue length for brevity. 

\begin{theorem}
\label{thm:Nx1}
Given an $N\times 1$ single-hop network with unlimited link capacity. For $\forall T>0$, the set of $\mathbf{g}=\{g_i\}_{i=1}^N$ that minimizes $\bar{D}_{\text{avg}}$ and $\bar{D}_{\text{max}}$ of the packets that arrive within $[t_0,t_0+T]$ where $\mathbf{q}(t_0)=\boldsymbol{0}$ is
\begin{equation}
\label{eqn:Nx1}
\begin{aligned}
& \left\{\left(\sum_{i=1}^N g_i\geq \mu \right)\cap \left(\frac{g_1}{\lambda_1}=\dots=\frac{g_N}{\lambda_N}\right)\right\} 
\\& \qquad \qquad  \cup \left\{g_i\geq \lambda_i,~\forall i=1,\dots,N\right\},
\end{aligned}
\end{equation}
under which $\bar{D}_{\text{avg}}=\bar{D}_{\text{max}}=\frac{T}{2\mu}\max\{\sum_{i=1}^N \lambda_i-\mu, 0\}$.
\end{theorem}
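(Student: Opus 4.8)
The plan is to reduce the $N=2$ case to a one-dimensional calculus problem by first computing $\bar D_1,\bar D_2$ as explicit functions of $(g_1,g_2)$ under a static policy, then minimizing the weighted average and the maximum separately and showing both are minimized exactly on the set \eqref{eqn:Nx1}. Since $N_D=1$, the delay formula \eqref{eqn:two-hop-delay} collapses to $D_{s_id}(t)=\frac{1}{\mu}\bigl(q_d(t)+\frac{g_1+g_2}{g_i}q_{s_i}(t)\bigr)$, and with zero initial queues the ingress queues grow linearly, $q_{s_i}(t)=(\lambda_i-g_i)^+ (t-t_0)$, while the egress queue grows at rate $(g_1+g_2-\mu)^+$ once both ingress nodes are backlogged (with a short transient while one ingress queue may empty). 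First I would dispose of the easy region: if $g_i\ge\lambda_i$ for both $i$, then no ingress queue ever builds, $q_{s_i}\equiv 0$, so the only delay is at the egress node, and since the total arrival rate into $d$ is $\min(g_1,\lambda_1)+\min(g_2,\lambda_2)=\lambda_1+\lambda_2$ regardless of the exact rates, $q_d(t)=(\lambda_1+\lambda_2-\mu)^+(t-t_0)$ and a direct integration of $\bar D_i=\frac1T\int_{t_0}^{t_0+T}\frac{q_d(t)}{\mu}\,dt$ gives $\bar D_1=\bar D_2=\frac{T}{2\mu}(\lambda_1+\lambda_2-\mu)^+$. This already shows that every point in the second set of \eqref{eqn:Nx1} achieves the claimed common value, and that $\bar D_{\text{avg}}=\bar D_{\text{max}}$ there.

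Next I would handle the overloaded-ingress region where, say, $g_1<\lambda_1$ (the case that matters, since otherwise we are in the easy region up to symmetry). Here $q_{s_1}(t)=(\lambda_1-g_1)(t-t_0)>0$, and one must carefully integrate $\bar D_1=\frac1T\int D_{s_1d}(t)\,dt$: the egress-queue term contributes a piece depending on when $q_d$ is nonempty, and the ingress-queue term contributes $\frac{1}{\mu T}\frac{g_1+g_2}{g_1}\int q_{s_1}(t)\,dt=\frac{g_1+g_2}{2\mu}\frac{\lambda_1-g_1}{g_1}T$. The key structural observation to extract is: $\bar D_1$ is \emph{strictly decreasing} in $g_1$ when $g_1<\lambda_1$ (raising $g_1$ drains the ingress queue faster and the extra load on $d$ is at most offset), so any optimum must push each $g_i$ up to at least $\lambda_i$ unless doing so is prevented by the constraint that we are trying to keep $\sum g_i$ moderate — but here capacities are unlimited, so there is no such constraint, and one is tempted to conclude the optimizer is just the second set. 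The subtlety the rate-proportional set \eqref{eqn:Nx1} captures is that we do \emph{not} need $g_i\ge\lambda_i$: if $g_1/\lambda_1=g_2/\lambda_2=:\rho<1$ but $g_1+g_2\ge\mu$, then the ratio $\frac{g_1+g_2}{g_i}=\frac{\lambda_1+\lambda_2}{\lambda_i}$ becomes independent of $\rho$, and one checks that $q_d$ and the combined delay expression again reduce to $\bar D_i=\frac{T}{2\mu}(\lambda_1+\lambda_2-\mu)$ — the $\rho$-dependence cancels. So the plan is: (a) write $\bar D_i$ explicitly, (b) show $\bar D_{\text{avg}}=\sum\frac{\lambda_i}{\lambda_1+\lambda_2}\bar D_i\ge\frac{T}{2\mu}(\lambda_1+\lambda_2-\mu)^+$ always, by a convexity/rearrangement argument on the term $\sum_i\frac{\lambda_i}{\lambda_1+\lambda_2}\cdot\frac{g_1+g_2}{g_i}\lambda_i\cdot\frac{\lambda_i-g_i}{?}$ — effectively an application of Cauchy–Schwarz or the weighted AM–HM inequality showing $\sum_i\lambda_i\frac{\lambda_i-g_i}{g_i}\ge 0$ with appropriate sign bookkeeping — and (c) characterize equality as exactly the set \eqref{eqn:Nx1}.

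For $\bar D_{\text{max}}$, the plan is: show $\bar D_{\text{max}}=\max_i\bar D_i\ge\bar D_{\text{avg}}\ge\frac{T}{2\mu}(\lambda_1+\lambda_2-\mu)^+$ automatically, and that equality in the first inequality forces $\bar D_1=\bar D_2$; combined with the equality analysis for $\bar D_{\text{avg}}$ this again pins down \eqref{eqn:Nx1}. The main obstacle I anticipate is step (b)/(c): correctly accounting for the transient interval during which one ingress queue might be draining while the other is still building (which happens precisely when $g_i<\lambda_i$ for one $i$ but not the other), since then $q_d$ has a piecewise-linear rather than linear trajectory and the naive integration is wrong. I would isolate this by arguing that at any proposed optimum one can assume WLOG (by the monotonicity in step (a)) that either both $g_i\ge\lambda_i$ or the policy is rate-proportional with $\sum g_i\ge\mu$, in either of which cases all ingress queues are simultaneously zero or simultaneously growing, eliminating the awkward transient. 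The extension from $N=2$ to general $N$ is notational: the same explicit delay formulas, the same weighted-AM–HM equality condition $g_i/\lambda_i$ constant, and the same monotonicity argument all carry over verbatim.
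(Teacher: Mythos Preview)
Your approach---explicit delay formulas, the easy region $g_i\ge\lambda_i$ first, then a Cauchy--Schwarz/AM--HM lower bound with equality characterization---is essentially the paper's method, which partitions $[0,\infty)^2$ into the four quadrants $\mathcal{R}_1,\dots,\mathcal{R}_4$ around $(\lambda_1,\lambda_2)$ and computes $\bar D_{\text{avg}},\bar D_{\text{max}}$ in each. Two points deserve correction.

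First, your monotonicity step does not work as stated. You claim ``$\bar D_1$ is strictly decreasing in $g_1$ when $g_1<\lambda_1$, so any optimum must push each $g_i$ up to at least $\lambda_i$.'' But raising $g_1$ increases the inflow $g_1+g_2$ into $d$, and in the region where $g_2<\lambda_2$ this \emph{increases} $\bar D_2=\frac{T}{2\mu}\bigl(\lambda_2\frac{g_1+g_2}{g_2}-\mu\bigr)$; monotonicity of $\bar D_1$ alone says nothing about $\bar D_{\text{avg}}$ or $\bar D_{\text{max}}$. The paper avoids this by simply computing $\bar D_{\text{avg}}$ directly in the mixed regions $\mathcal{R}_3,\mathcal{R}_4$ (e.g.\ in $\mathcal{R}_3$ one has $\bar D_1=\tfrac{T}{2\mu}(\lambda_1+g_2-\mu)$, $\bar D_2=\tfrac{T}{2\mu}(\tfrac{\lambda_1+g_2}{g_2}\lambda_2-\mu)$) and checking that the constrained optimum sits on the boundary $g_2=\lambda_2$. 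If you want a monotonicity argument, it must be for $\bar D_{\text{avg}}$ itself, not for a single $\bar D_i$.

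Second, your worry about a ``piecewise-linear transient'' in the mixed region is misplaced: with $\mathbf q(t_0)=\mathbf 0$, if $g_1\ge\lambda_1$ then $q_{s_1}(t)\equiv 0$ for all $t$ and the rate into $d$ is simply $\lambda_1+g_2$ throughout, so $q_d$ grows linearly from the start---there is no transient to integrate through. The mixed-region formulas are therefore just as clean as the others, and handling $\mathcal{R}_3,\mathcal{R}_4$ directly is no harder than $\mathcal{R}_1$.
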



\begin{proof}
Consider $N=2$. The main idea of the proof is that we divide the feasible link rate region of $\mathbf{g}=(g_1,g_2)$, which is $[0,\infty)\times[0,\infty)$, into 4 sub-regions:
\begin{equation}
\label{eqn:R1_R4}
\begin{cases}
\mathcal{R}_1:=\{\mathbf{g}\mid g_1\in[0,\lambda_1],g_2\in[0,\lambda_2]\}\\
\mathcal{R}_2:=\{\mathbf{g}\mid g_1\in[\lambda_1,\infty),g_2\in[\lambda_2,\infty)\}\\
\mathcal{R}_3:=\{\mathbf{g}\mid g_1\in[\lambda_1,\infty),g_2\in[0,\lambda_2]\}
\\
\mathcal{R}_4:=\{\mathbf{g}\mid g_1\in[0,\lambda_1],g_2\in[\lambda_2,\infty)\}
\end{cases}
\end{equation}
and we identify the optimal $\mathbf{g}$'s restricted in each of these sub-regions, denoted by $\mathbf{g}_{(1)}^*,\mathbf{g}_{(2)}^*,\mathbf{g}_{(3)}^*,\mathbf{g}_{(4)}^*$ respectively. We show that each $\mathbf{g}_{(i)}^*$ leads to the same average queueing delay $\bar{D}_{\text{avg}}=\frac{T}{2\mu}\max\{(\lambda_1+\lambda_2-\mu),0\}$ and the same maximum ingress delay $\bar{D}_{\text{max}}=\frac{T}{2\mu}\max\{(\lambda_1+\lambda_2-\mu),0\}$. 

We define $D_{i}(t)$ as the total queueing delay of a packet injected into $s_i$ at time $t$. According to \eqref{eqn:two-hop-delay}, for $i=1,2$,
$$
\begin{aligned}
D_{i}(t)
&=\frac{q_{s_i}(t)}{g_i}+\max\left\{\frac{q_d(t)+ \frac{q_{s_i}(t)}{g_i}(g_1+g_2-\mu)}{\mu},0\right\}
\\&=
\begin{cases}
\frac{1}{\mu} \left(q_d(t)+ \frac{q_{s_i}(t)}{g_i}(g_1+g_2)\right), \qquad g_1+g_2\geq\mu \\
\frac{q_{s_i}(t)}{g_i}, \qquad g_1+g_2<\mu
\end{cases}
\end{aligned}
$$
due to $\mathbf{q}(t_0) = \boldsymbol{0}$ which guarantees that when $g_1+g_2<\mu$, $q_d(t)$ will keep zero and thus the only queueing delay is at the ingress nodes.
The average delay for packets that arrive to ingress node $s_i$ within $[t_0,t_0+T]$ is 
$
\bar{D}_i = \frac{1}{T}\int_{t_0}^{t_0+T} D_{i}(t) dt,~i=1,2,
$
based on which we can formulate the delay metrics $\bar{D}_{\text{avg}}$ given by \eqref{eqn:metric_avg} and $\bar{D}_{\text{max}}$ given by \eqref{eqn:metric_max} as functions of $\mathbf{g}$.


\textbf{Case 1: $\mathcal{R}_1:=\{\mathbf{g}\mid g_1\in[0,\lambda_1],g_2\in[0,\lambda_2]\}$}

In $\mathcal{R}_1$, we first consider the case when $g_1+g_2\geq \mu$.
$$
\small
\begin{aligned}
\bar{D}_{i}&:=\frac{1}{T}\int_{t_0}^{t_0+T}D_i(t) dt = \frac{1}{T}\int_{t_0}^{t_0+T} \frac{q_d(t)+ \frac{q_{s_i}(t)}{g_i}(g_1+g_2)}{\mu} dt
\\&=\frac{1}{T\mu}\int_{t_0}^{t_0+T} (t-t_0)\max\{g_1+g_2-\mu,0\}
\\&\quad \qquad + \frac{g_1+g_2}{g_i}(t-t_0)\max\{\lambda_i-g_i,0\} dt
\\&=\frac{T}{2\mu} \left(\lambda_1\frac{g_1+g_2}{g_i}-\mu\right),~i=1,2
\end{aligned}
$$
Then according to \eqref{eqn:metric_avg} and \eqref{eqn:metric_max},
$$
\footnotesize
\begin{aligned}
&\bar{D}_{\text{avg}}={\frac{\lambda_1}{\lambda_1+\lambda_2}} \bar{D}_{1}+{\frac{\lambda_2}{\lambda_1+\lambda_2}} \bar{D}_{2}\\&=\frac{T}{2\mu}\left(\frac{\lambda_1}{\lambda_1+\lambda_2} \left(\lambda_1\frac{g_1+g_2}{g_1}-\mu\right)+\frac{\lambda_2}{\lambda_1+\lambda_2} \left(\lambda_2\frac{g_1+g_2}{g_2}-\mu\right)\right)
\end{aligned}
$$
and
$$
\footnotesize
\begin{aligned}
\bar{D}_{\text{max}}&=\max\{\bar{D}_{1}, \bar{D}_{2}\}=\frac{T}{2\mu}\left\{\left(\lambda_1\frac{g_1+g_2}{g_1}-\mu\right),  \left(\lambda_2\frac{g_1+g_2}{g_2}-\mu\right) \right\}
\end{aligned}
$$

For $\bar{D}_{\text{avg}}$, we can obtain by Cauchy-Schwartz inequality that the optimal solutions are all $\mathbf{g}$ that satisfy
$
g_1+g_2\geq \mu,~ \frac{g_1}{g_2}=\frac{\lambda_1}{\lambda_2}
$
under which the average delay is
$\bar{D}_{\text{avg}} =\frac{T}{2\mu}(\lambda_1+\lambda_2-\mu)$. 
For $\bar{D}_{\text{max}}$, we can obtain that the set of $\mathbf{g}$'s that satisfy \eqref{eqn:Nx1} achieves the minimum $\bar{D}_{\text{max}}=\frac{T}{2\mu}(\lambda_1+\lambda_2-\mu)$.

We then consider the case when $g_1+g_2\leq\mu$. In this case there will be no queue backlog in the egress node, and thus
$$
\small
\begin{aligned}
\bar{D}_{i}&=\frac{1}{T}\int_{t_0}^{t_0+T} \frac{q_{s_i}(t)}{g_i} dt=\frac{\max\{\lambda_i-g_i,0\}}{g_iT}\int_{t_0}^{t_0+T} (t-t_0) dt
\\&=\frac{T}{2} \frac{\lambda_i-g_i}{g_i},~i=1,2.
\end{aligned}
$$
Therefore
$$
\begin{cases}
\bar{D}_{\text{avg}}
=\frac{T}{2(\lambda_1+\lambda_2)}\left(\frac{\lambda_1^2}{g_1}+\frac{\lambda_2^2}{g_2}-\lambda_1-\lambda_2\right) \\
\bar{D}_{\text{max}} = \frac{T}{2} \max\left\{\frac{\lambda_1-g_1}{g_1}, \frac{\lambda_2-g_2}{g_2}\right\}
\end{cases}
$$
Then under $g_1+g_2\leq\mu$, the optimal metric values are $\bar{D}_{\text{avg}}=\bar{D}_{\text{max}}=\frac{T}{2\mu}(\lambda_1+\lambda_2-\mu)$, achieved only at $g_1=\frac{\lambda_1}{\lambda_1+\lambda_2}\mu,~g_2=\frac{\lambda_2}{\lambda_1+\lambda_2}\mu$, which is on the boundary $g_1+g_2=\mu$.

\textbf{Case 2: $\mathcal{R}_2:=\{\mathbf{g}\mid g_1\in[\lambda_1,\infty),g_2\in[\lambda_2,\infty)\}$}

When $g_1\geq \lambda_1$, $D_{1}(t)=\frac{q_{s_1}(t)}{g_1}+\frac{q_d\left(t+\frac{q_{s_1}(t)}{g_1}\right)}{\mu}=\frac{q_d(t)}{\mu}$
as $q_{s_1}(t)$ keeps $0$ since $q_{s_1}(t_0)=0$, which means the queueing delay only occurs at node $d$. Similar for $D_{2}(t)$. Since $\lambda_1+\lambda_2>\mu$, packets will accumulate at node $d$ and at time $t$, and 
$q_d(t)=(\lambda_1+\lambda_2-\mu)t$. Thus
$$
D_{1}(t)=D_{2}(t)=\frac{q_d(t)}{\mu}=\frac{\lambda_1+\lambda_2-\mu}{\mu}t.
$$
and $D_{s_1}=\frac{1}{T}\int_{t_0}^{t_0+T}D_{1}(t) dt=\frac{T}{2\mu}(\lambda_1+\lambda_2-\mu)=D_{s_2}$, and hence $\bar{D}_{\text{avg}}=\bar{D}_{\text{max}}=\frac{T}{2\mu}(\lambda_1+\lambda_2-\mu)$ for $\forall \mathbf{g}\in \mathcal{R}_2$.

\textbf{Case 3: $\mathcal{R}_3:=\{\mathbf{g}\mid g_1\in[\lambda_1,\infty),g_2\in[0,\lambda_2]\}$}

Based on the derivation in case 1 and 2 respectively, we have $D_{s_1}=\frac{T}{2\mu}(\lambda_1+g_2-\mu)$ as packets that arrive at $s_1$ only suffer from delay at $d$, and $D_{s_2}=\frac{T}{2\mu}\left(\frac{\lambda_1+g_2}{g_2}\lambda_2-\mu\right)$ where packets that arrive at $s_2$ suffer from delay at $s_2$ and $d$. We can verify easily that any optimal $\mathbf{g}\in \mathcal{R}_3$ that achieves minimum $\bar{D}_{\text{avg}}=\bar{D}_{\text{max}}=\frac{T}{2\mu}(\lambda_1+\lambda_2-\mu)$ should be guaranteed that $g_2=\lambda_2$ holds.

\textbf{Case 4: $\mathcal{R}_4:=\{\mathbf{g}\mid g_1\in[0,\lambda_1],g_2\in[\lambda_2,\infty)\}$} Similar to case 3, where any optimal $\mathbf{g}\in \mathcal{R}_4$ satisfies $g_1=\lambda_1$.
\end{proof}

We term \eqref{eqn:Nx1} as the \emph{min-delay region} of the transmission rate vector $\mathbf{g}$, which consists of a line segment connecting two points $\left\{\frac{\lambda_i}{\sum_{j=1}^N \lambda_j}\mu\right\}_{i=1}^N$ and $\left\{\lambda_i\right\}_{i=1}^N$ in an $N$-dimensional space, and a polytope of $\mathbf{g}$'s where $g_i\geq \lambda_i,~\forall i=1,\dots,N$. Theorem \ref{thm:Nx1} demonstrates that to achieve minimum delay, if there exists one link $(s_i,d)$ with transmission rate $g_i$ higher than the packet arrival rate $\lambda_i$ at $s_i$, then all the other links should be as well; if on the contrary $g_i\leq \lambda_i$ for some $i$, then we need to guarantee that the link rates should be in the same proportion to the packet arrival rates among all the ingress nodes in order to achieve minimum delay, i.e. $g_i/\lambda_i,~\forall i=1,\dots,N$ are the same. We term any policy that satisfies this condition as a \emph{rate-proportional} policy. An important implication of this result is that setting link rates in the same proportion to the packet arrival rates can achieve minimum delay as done by setting them greater than the packet arrival rates but with less total bandwidth required. We give an example when $N=3$ in Fig.~\ref{fig:3x1_optimal_region_example}: {Both using the rate-proportional policy with $g_i \leq \lambda_i$ as Fig.~\ref{fig:3x1_optimal_region_example}(a) and setting $g_i \geq \lambda_i$, $i=1,2,3$ as Fig.~\ref{fig:3x1_optimal_region_example}(b) lead to minimum $\bar{D}_{\text{avg}}$ and $\bar{D}_{\text{max}}$, while the transmission rate vector in Fig.~\ref{fig:3x1_optimal_region_example}(c) is not in the min-delay region.} Moreover, the min-delay regions for both $\bar{D}_{\text{avg}}$ and $\bar{D}_{\text{max}}$ in an $N\times 1$ single-hop network are the same, which demonstrates that we can simultaneously achieve minimum average delay and in the meantime  balance the delay of packets injected into different ingress nodes. 

\begin{figure}[!htbp]
\centering
\includegraphics[width=1.0\linewidth]{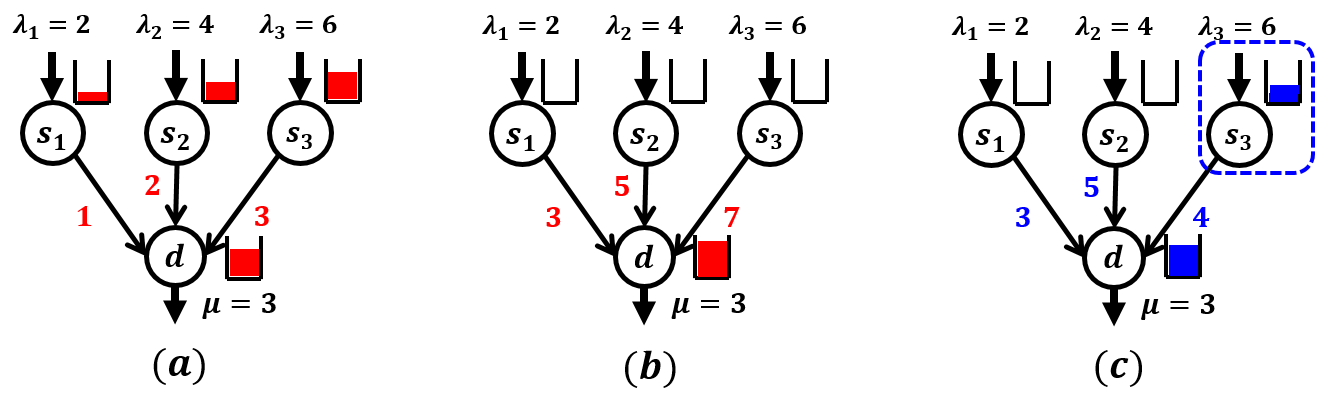}
\caption{A 3x1 example of Theorem \ref{thm:Nx1}: (a) Setting $g_i/\lambda_i=1/2, i=1,2,3$ satisfies \eqref{eqn:Nx1} and leads to minimum delay; (b) Setting $g_i\geq \lambda_i, i=1,2,3$ satisfies \eqref{eqn:Nx1} and leads to minimum delay, despite different queue growth rates compared with (a); (c) Setting $\mathbf{g}=\{3,5,4\}$ does not satisfy \eqref{eqn:Nx1} and thus does not incur minimum $\bar{D}_{\text{avg}}$ and $\bar{D}_{\text{max}}$, although all the 3 links rates are greater than those in (a).}
\label{fig:3x1_optimal_region_example}
\end{figure}

We visualize the min-delay region when $N=2$ for detailed explanation in Fig.~\ref{fig:result-2x1-switch}(a), which is marked as the orange area: a line segment connecting the points $\left(\frac{\lambda_1}{\lambda_1+\lambda_2}\mu, \frac{\lambda_2}{\lambda_1+\lambda_2}\mu\right)$ and $(\lambda_1,\lambda_2)$, and the polytope $\{\mathbf{g}\mid g_{i}\geq \lambda_i, i=1,2\}$. We mark the $\mathcal{R}_1$ to $\mathcal{R}_4$ in \eqref{eqn:R1_R4} in Fig.~\ref{fig:result-2x1-switch}(a). We have the following insights: (i) Setting $g_{i}$ no less than $\lambda_i$ for both $i=1,2$ achieves minimum queueing delay, while further increasing $g_{1}$ and $g_{2}$ does not make a difference. This is because for any $\mathbf{g}$ that $g_{i}\geq \lambda_i$, the buffers of $s_1$ and $s_2$ are empty, hence all the queueing delay is at the egress node $d$ bottlenecked by $\mu$. 
(ii) We can achieve the minimum delay in $\mathcal{R}_1$ using lower transmission rates compared with those in $\mathcal{R}_2$ by consider the rate-proportional policy where $\mathbf{g} = \{g_1, g_2\}$ satisfies
$g_{1}/g_{2} = \lambda_1/\lambda_2$, and meanwhile maximum throughput is guaranteed, i.e., $g_{1}+g_{2}\geq \mu$. 
The minimum total link rate is $\mu$ at the intersection point $\left(\frac{\lambda_1}{\lambda_1+\lambda_2}\mu, \frac{\lambda_2}{\lambda_1+\lambda_2}\mu\right)$.
(iii) It is not true that serving with higher rates leads to lower queueing delay. For example serving with transmission rates in $\mathcal{R}_3$ and $\mathcal{R}_4$ is inferior to controlling the transmission rates on the optimal line segment in $\mathcal{R}_1$. The counter-intuition is because packets from $s_1$ and $s_2$ share an egress node, where the imbalance between $g_{1}$ and $g_{2}$ leads to severe delay increase of packets that arrive to the ingress node with lower link rate downstream. 

\begin{figure}[!htbp]
\centering
\includegraphics[width=0.95\linewidth]{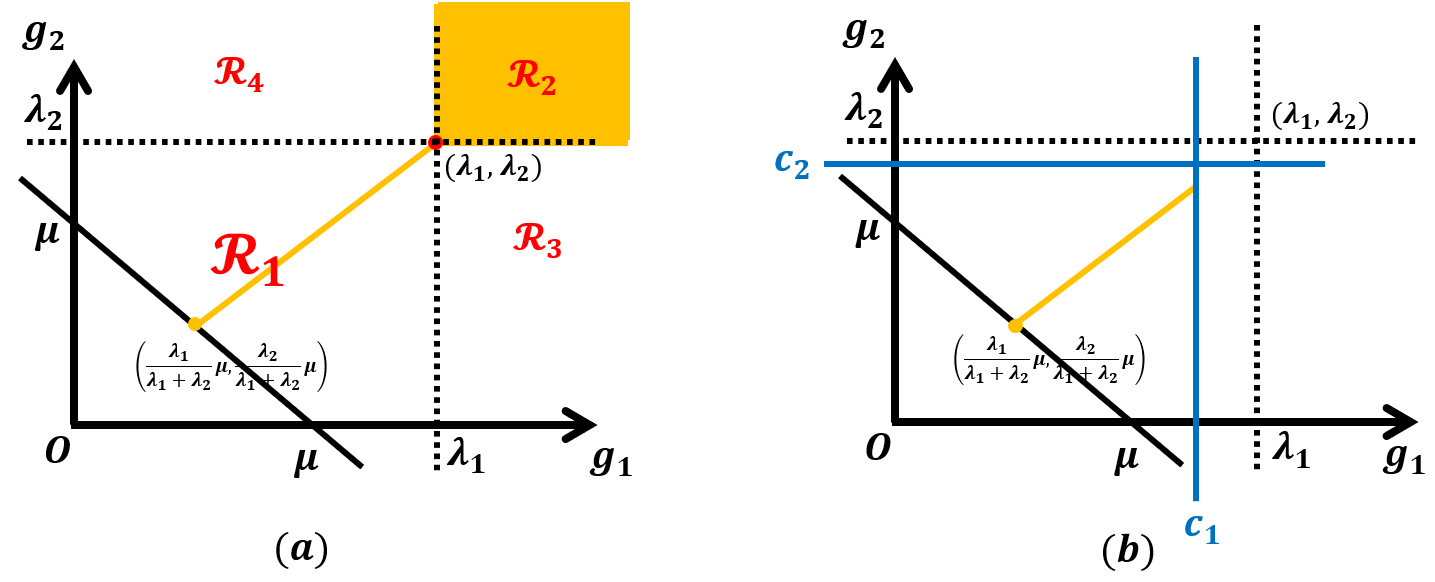}
\caption{The min-delay region in a $2\times 1$ single-hop network: (a) unlimited capacity; (b) limited capacity $(c_i\leq \lambda_i,~i=1,2)$}
\label{fig:result-2x1-switch}
\end{figure}

We further extend Theorem \ref{thm:Nx1} to the case of limited link capacities (i.e., capacity of link $(s_i, d)$ is $c_i$). We can obtain directly that the min-delay region for limited capacity case is simply the intersection of \eqref{eqn:Nx1} and $\{\mathbf{g}\mid g_{i}\leq c_{i}, i=1,\dots,N\}$
as limited capacity does not affect the proof, where $c_i$ is the abbreviation of $c_{s_id}$. We illustrate the min-delay region given that $\lambda_i>c_{i},~i=1,2$ in Fig.~\ref{fig:result-2x1-switch}(b), which is a single solid orange line segment. We observe that serving both links with maximum rates equal to the link capacity does not lead to minimum delay in general cases, which validates the necessity of refined control of link rates based on \eqref{eqn:Nx1}. This result also gives insights on the demand-aware bandwidth allocation in data center networks \cite{nagaraj2016numfabric, kumar2015bwe, zhang2021gemini}, where allocating bandwidth in proportion to the demands from different ingress nodes leads to minimum delay when overload occurs. The minimum total bandwidth $c_1 + c_2$ required to achieve the global minimum delay as in the case with unlimited capacity is $\mu$, where $c_i = \frac{\lambda_i}{\lambda_1+\lambda_2}\mu,~i=1,2$.


{
Finally we discuss the impact of the initial queue length $\mathbf{q}(t_0)$ on the result. Consider $N=2$ for example. We can follow the proof of Theorem \ref{thm:Nx1} and obtain the min-delay region in $\mathcal{R}_1$ to be
\begin{equation}
\label{eqn:opt_init_queue}
 \frac{g_1}{g_2} = \sqrt{\frac{\lambda_1(\lambda_1+q_{s_1}(t_0)/T)}{\lambda_2(\lambda_2+q_{s_2}(t_0)/T)}},   
\end{equation}
under which the queue length at any ingress node will not reduce to zero. Note that \eqref{eqn:opt_init_queue} also follows the rate-proportional pattern with initial queue length and duration $T$ included.
For $\mathcal{R}_2$, $\mathcal{R}_3$ and $\mathcal{R}_4$, the derivation is of higher complexity as we need to analyze if the queue length at the ingress nodes will or will not change from non-zero to zero within $[t_0,t_0+T]$, which involves at least two cases for each ingress node. In practice, the initial queue length $q_{s_1}(t_0)$ and $q_{s_2}(t_0)$ are generally very small before overload occurs, and we generally care about rate control for relatively long $T$ instead of instantaneous overload. Therefore $q_{s_1}(t_0)/T$ is generally small and thus \eqref{eqn:opt_init_queue} is approximately $g_1/g_2=\lambda_1/\lambda_2$, matching \eqref{eqn:Nx1}. For these reasons and the conciseness of proof, we neglect initial queue length and verify empirically in Section \ref{sec:evaluation} that initial queue length does not affect the overall performance.
}

\subsection{General Single-Hop and Multi-Stage Networks}
\label{subsec:general}

We extend the {rate-proportional} policy shown in Theorem \ref{thm:Nx1} to general single-hop networks and multi-stage networks and show that it is a sufficient condition for queueing delay minimization. {The extended rate-proportional policy requires that all the nodes at the same  layer share the same ratio between their total ingress rates and egress rates of packets.}

\subsubsection{$N_S\times N_D$ single-hop networks} We derive a sufficient condition on $\mathbf{g}$ to achieve minimum $\bar{D}_{\text{avg}}$ and $\bar{D}_{\text{max}}$ in Theorem \ref{thm:NxM} given unlimited capacity in $N_S\times N_D$ single-hop networks. We can extend the result to the case of limited capacity by adding the constraints $g_{ij} \leq c_{ij}, \forall (i,j)\in \mathcal{E}$ as discussed in $N\times 1$ networks.

\begin{theorem}
\label{thm:NxM}
Given an $N_S\times N_D$ single-hop network with unlimited link capacity. For $\forall T>0$, a sufficient condition to globally minimize both $\bar{D}_{\text{avg}}$ and $\bar{D}_{\text{max}}$ of the packets that arrive within $[t_0,t_0+T]$ where $\mathbf{q}(t_0)=\boldsymbol{0}$ is
\begin{equation}
\label{eqn:optimum_NxM}
\begin{cases}
\frac{\sum_{k=1}^{N_D} g_{s_id_k}}{\sum_{k=1}^{N_D} g_{s_jd_k}}=\frac{\lambda_i}{\lambda_j}, \forall i,j=1,\dots,N_S \\
\\
\frac{\sum_{k=1}^{N_S} g_{s_kd_i}}{\sum_{k=1}^{N_S} g_{s_kd_j}}=\frac{\mu_i}{\mu_j}, \forall i,j=1,\dots,N_D \\
\\
\sum_{k=1}^{N_S} g_{s_kd_j} \geq \mu_j, \forall j = 1,\dots,N_D
\end{cases}
\end{equation}
under which $\bar{D}_{\text{avg}}=\bar{D}_{\text{max}}=\frac{T}{2\sum_{j=1}^{N_D} \mu_j}\max\{\sum_{i=1}^{N_S} \lambda_i -\sum_{j=1}^{N_D} \mu_j,0\}$. Furthermore, 
\eqref{eqn:optimum_NxM} is both sufficient and necessary for minimizing $\bar{D}_{\text{avg}}$ and $\bar{D}_{\text{max}}$ over the policies in $\{\mathbf{g}\mid \sum_{j=1}^{N_D} g_{s_id_j}\leq \lambda_i,~\forall i=1,\dots,N_S\}$.
\end{theorem}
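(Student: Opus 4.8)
The plan is to reduce the $N_S \times N_D$ case to the $N\times 1$ case established in Theorem~\ref{thm:Nx1} by treating each egress node separately, and then to show that the coupled conditions in \eqref{eqn:optimum_NxM} simultaneously satisfy all these single-egress constraints while attaining a lower bound on $\bar{D}_{\text{avg}}$ and $\bar{D}_{\text{max}}$. First I would write out $D_{s_id_j}(t)$ using \eqref{eqn:two-hop-delay}: under zero initial queue and a static policy with $\sum_{k} g_{s_kd_j}\geq \mu_j$ for every $j$, each egress buffer $d_j$ grows linearly at rate $\sum_k g_{s_kd_j}-\mu_j$ and each ingress buffer $s_i$ grows at rate $\max\{\lambda_i - \sum_k g_{s_id_k}, 0\}$, so $D_{s_id_j}(t)$ is linear in $t$ and $\bar{D}_i = \frac{1}{T}\int_{t_0}^{t_0+T} D_i(t)\,dt$ becomes an explicit expression of the form $\frac{T}{2}\cdot(\text{linear/quadratic in }\mathbf{g})$. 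The key structural observation is that the weight $g_{s_id_j}/\sum_k g_{s_id_k}$ inside \eqref{eqn:bar_D_i} combines with the $1/\mu_j$-weighted egress term so that $\bar{D}_i$ depends on $\mathbf{g}$ only through the aggregate ingress rate $G_i := \sum_k g_{s_id_k}$ of node $s_i$ and the aggregate egress rates $H_j := \sum_k g_{s_kd_j}$ of the downstream nodes.

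Next I would establish a lower bound. Summing the egress-node dynamics, the total backlog across all egress buffers grows at rate $\sum_i \lambda_i - \sum_j \mu_j$ (once the ingress buffers are draining at full rate, or faster if some are backlogged), so by a conservation/averaging argument the total delay experienced by the fluid cannot beat $\frac{T}{2\sum_j\mu_j}\max\{\sum_i\lambda_i-\sum_j\mu_j,0\}$; this gives the claimed value as a floor for $\bar D_{\text{avg}}$, and since $\bar D_{\text{max}}\geq \bar D_{\text{avg}}$ it is a floor for $\bar D_{\text{max}}$ too. Then I would verify that \eqref{eqn:optimum_NxM} meets it: the first condition $G_i/G_j = \lambda_i/\lambda_j$ is exactly the rate-proportional pattern that — by the same Cauchy--Schwarz step as in Case~1 of Theorem~\ref{thm:Nx1} — equalizes the per-ingress delays and removes the penalty from imbalance at the ingress side; the second condition $H_i/H_j = \mu_i/\mu_j$ together with the third $H_j\geq\mu_j$ forces $H_j = \mu_j\cdot\frac{\sum_i\lambda_i}{\sum_j\mu_j}$ (after using flow balance $\sum_j H_j = \sum_i G_i = \sum_i\lambda_i$ in the relevant regime), which is precisely the egress-side rate-proportional allocation that makes every egress buffer contribute the bottleneck delay $\frac{T}{2\sum_j\mu_j}(\sum_i\lambda_i-\sum_j\mu_j)$ and no more. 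Plugging back in yields $\bar D_{\text{avg}} = \bar D_{\text{max}} = \frac{T}{2\sum_j\mu_j}\max\{\sum_i\lambda_i-\sum_j\mu_j,0\}$, matching the floor, so \eqref{eqn:optimum_NxM} is sufficient.

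For the necessity claim restricted to $\{\mathbf{g}\mid \sum_j g_{s_id_j}\leq\lambda_i,~\forall i\}$ — the regime where every ingress buffer is genuinely backlogged — I would show that in this region $\bar D_{\text{avg}}$ is a strictly convex function of $(G_1,\dots,G_{N_S},H_1,\dots,H_{N_D})$ subject to the linear flow-balance constraint $\sum_i G_i = \sum_j H_j$ and $H_j\geq\mu_j$, so its minimizer is unique and must be the point identified above; any deviation from the proportionality conditions strictly increases $\bar D_{\text{avg}}$, hence also increases $\bar D_{\text{max}}\geq\bar D_{\text{avg}}$ away from the common optimal value. I expect the main obstacle to be the bookkeeping in the reduction step: carefully justifying that the doubly-averaged metric \eqref{eqn:bar_D_i}--\eqref{eqn:metric_avg} collapses to a function of the aggregates $\{G_i\},\{H_j\}$ alone (so that the internal splitting $g_{s_id_j}$ beyond these aggregates is irrelevant), and handling the boundary between the backlogged and drained ingress regimes so that the lower bound and the necessity argument are tight exactly on the stated policy set. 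Once that reduction is clean, the optimization itself is the two-sided Cauchy--Schwarz argument already rehearsed in Theorem~\ref{thm:Nx1}.
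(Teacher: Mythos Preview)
Your key structural claim --- that $\bar D_i$ depends on $\mathbf{g}$ only through the aggregates $G_i=\sum_k g_{s_id_k}$ and $H_j=\sum_k g_{s_kd_j}$ --- is false. In the region $G_i\leq\lambda_i$ one computes
\[
\bar D_i=\frac{T}{2}\Bigl(\frac{\lambda_i}{G_i^2}\sum_j \frac{g_{s_id_j}H_j}{\mu_j}-1\Bigr),
\]
and the term $\sum_j g_{s_id_j}H_j/\mu_j$ genuinely depends on the internal split $\{g_{s_id_j}\}_j$ unless $H_j/\mu_j$ is already constant in $j$ --- which is precisely one of the conditions you are trying to establish, so assuming it would be circular. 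This invalidates both the proposed reduction and the strict-convexity-in-aggregates argument for necessity.

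The paper keeps the individual $g_{ij}$'s and applies Cauchy--Schwarz \emph{twice}. Up to an additive constant, $\bar D_{\text{avg}}$ is proportional to $\sum_j\frac{H_j}{\mu_j}\sum_i\frac{\lambda_i^2 g_{s_id_j}}{G_i^2}$. For each fixed $j$, Cauchy--Schwarz on the inner sum against $H_j=\sum_i g_{s_id_j}$ gives $\geq\frac{1}{\mu_j}\bigl(\sum_i\lambda_i g_{s_id_j}/G_i\bigr)^2$, with equality exactly when $G_i\propto\lambda_i$ (the first condition in \eqref{eqn:optimum_NxM}). A second Cauchy--Schwarz on the resulting $j$-sum, using that $\sum_j\bigl(\sum_i\lambda_i g_{s_id_j}/G_i\bigr)=\sum_i\lambda_i$, yields the global lower bound $\frac{T}{2\sum_j\mu_j}(\sum_i\lambda_i-\sum_j\mu_j)$ with equality exactly when $H_j\propto\mu_j$ (the second condition). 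Thus the nested Cauchy--Schwarz steps simultaneously deliver the lower bound, sufficiency, and necessity on the restricted region; no separate conservation argument is needed, and no reduction to aggregates ever occurs --- it is the \emph{equality conditions}, not the objective itself, that turn out to be expressible at the aggregate level.
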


We defer the proof to the appendix. We explain the min-delay conditions \eqref{eqn:optimum_NxM}: The first constraint requires that the total egress rates of different ingress nodes should be in the same proportion to their packet arrival rates $\{\lambda_i\}_{i=1}^{N_S}$; The second constraint requires that the total ingress rates of different egress nodes should be in the same proportion to their service rates $\{\mu_j\}_{j=1}^{N_D}$; The third constraint guarantees maximum throughput. Any transmission policy that satisfies these three conditions guarantees minimum $\bar{D}_{\text{avg}}$ and $\bar{D}_{\text{max}}$. Compared with \eqref{eqn:Nx1} that requires the rate-proportional property at the ingress layer solely for $N\times 1$ networks, \eqref{eqn:optimum_NxM} requires the rate-proportional property at both the ingress and egress layers for general single-hop networks. 

We further point out that \eqref{eqn:optimum_NxM} is also a necessary condition for delay minimization under limited transmission rates where $\mathbf{g} \in \{\mathbf{g}\mid \sum_{j=1}^{N_D} g_{s_id_j}\leq \lambda_i,~\forall i=1,\dots,N_S\}$, i.e., the egress rate of node $s_i$ is no greater than the packet arrival rate $\lambda_i$. This result indicates that given limited link capacity, the rate-proportional policies are the only ones that minimize both delay metrics in overloaded single-hop networks.

We give a $2\times 2$ example in Fig.~\ref{fig:2x2_optimal_region_example}, which demonstrates that multiple solutions can minimize the delay metrics, as long as they satisfy \eqref{eqn:optimum_NxM} shown in Fig.~\ref{fig:2x2_optimal_region_example}(a) and (b), while higher link rates may even increase the delay shown in Fig.~\ref{fig:2x2_optimal_region_example}(c).

\begin{figure}[!htbp]
\centering
\includegraphics[width=1.0\linewidth]{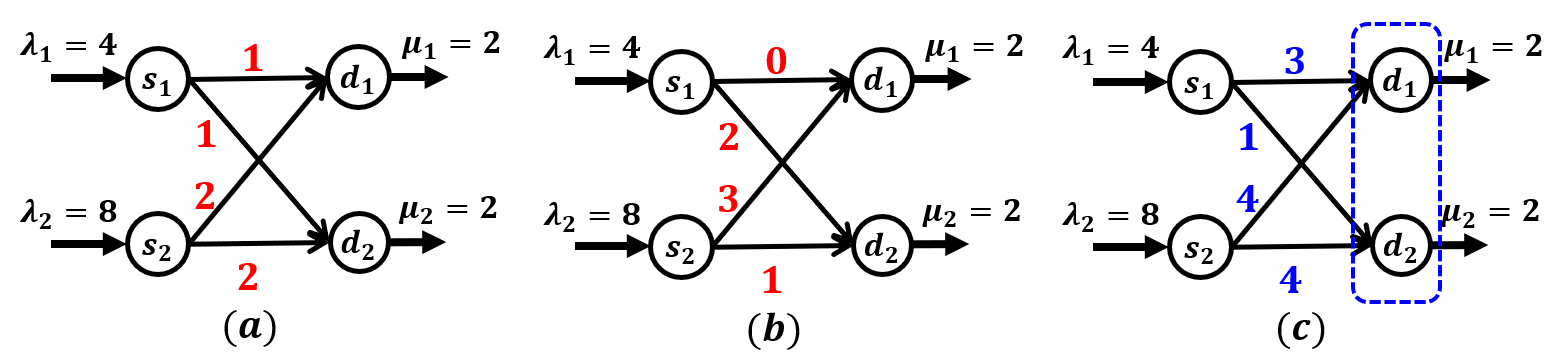}
\caption{A 2x2 example of Theorem \ref{thm:NxM}: (a) and (b) set $\mathbf{g}$ that satisfies $(g_{s_1d_1} + g_{s_1d_2}) / (g_{s_2d_1} + g_{s_2d_2}) = \lambda_1 / \lambda_2$ and  $(g_{s_1d_1} + g_{s_2d_1}) / (g_{s_1d_2} + g_{s_2d_2}) = \mu_1 / \mu_2$ which lead to minimum $\bar{D}_{\text{avg}}$ and $\bar{D}_{\text{max}}$ simultaneously; Setting link rates as in (c) does not lead to minimum delay since $(g_{s_1d_1} + g_{s_2d_1}) / (g_{s_1d_2} + g_{s_2d_2}) \neq \mu_1 / \mu_2$, although the total link rates are higher than (a) and (b).}
\label{fig:2x2_optimal_region_example}
\end{figure}






\subsubsection{Multi-stage networks} We further extend the min-delay conditions for single-hop networks to multi-stage networks with $L$ layers. We show in Theorem \ref{thm:clos-static-flow} that applying the rate-proportional policy design over each of the $L$ layers leads to minimum $\bar{D}_{\text{avg}}$ and $\bar{D}_{\text{max}}$ as long as the maximum throughput is guaranteed. 

\begin{theorem}
\label{thm:clos-static-flow}
Consider an $L$-layer network with unlimited link capacity. 
For $\forall T>0$, a sufficient condition to globally minimize both $\bar{D}_{\text{avg}}$ and $\bar{D}_{\text{max}}$ of the packets that arrive within $[t_0,t_0+T]$ when $\mathbf{q}(t_0)=\boldsymbol{0}$ is
\begin{equation}
\label{eqn:clos-static-flow}
\begin{cases}
\frac{\lambda_{i}}{\sum_{n_j^2\in \mathcal{V}_2} g_{n_i^1,n_j^2}} 
= \gamma_1, \quad \forall n_i^1 \in \mathcal{V}_1 \\
\\
\frac{\sum_{n_k^{l-1}\in \mathcal{V}_{l-1}} g_{n_k^{l-1}, n_i^l}}{\sum_{n_j^{l+1} \in \mathcal{V}_{l+1}} {g}_{n_i^l, n_j^{l+1}}} = \gamma_l, \quad \forall n_i^l \in \mathcal{V}_l, ~\forall l = 2,\dots,L-1 \\
\\
\frac{\sum_{n_k^{L-1}\in \mathcal{V}_{L-1}} {g}_{n_k^{L-1}, n_i^L}}{\mu_i} = \gamma_L, \quad \forall n_i^L \in \mathcal{V}_L
\end{cases}
\end{equation}
for some $\boldsymbol{\gamma}=\{\gamma_l\}_{l=1}^L \in \mathbb{R}_+^{L}$ and the maximum throughput is achieved, where $\bar{D}_{\text{avg}}= \bar{D}_{\text{max}} = \frac{T}{2}\max\left\{\frac{\sum_{i=1}^{N_S} \lambda_i}{\sum_{j=1}^{N_D} \mu_j} - 1, 0\right\}$.
\end{theorem}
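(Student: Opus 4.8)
The plan is to derive both halves of the statement --- that the rate-proportional policy \eqref{eqn:clos-static-flow} attains the value $\frac{T}{2}\max\{\Lambda/M-1,0\}$ with $\Lambda:=\sum_{i=1}^{N_S}\lambda_i$ and $M:=\sum_{j=1}^{N_D}\mu_j$, and that no static policy does better --- from a single structural fact about the fluid model. Under any static policy with $\mathbf q(t_0)=\boldsymbol 0$ every effective rate is constant and every queue grows linearly, so for each node $n_i^l$ I would define the \emph{amplification factor} $\hat\rho_i^l:=\max\{1,\,a_i^l/f_i^l\}$, where $a_i^l,f_i^l$ are the total ingress and egress rates of $n_i^l$ (with $a_i^1=\lambda_i$, $f_i^L=\mu_i$). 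Generalizing the two-node and two-hop computations \eqref{eqn:two-node-model}--\eqref{eqn:two-hop-delay} by induction on the layer index, I would show that a fluid element entering the network at time $t$ along path $p=(n_{p_1}^1,\dots,n_{p_L}^L)$ reaches layer $l$ at time $t_0+\big(\prod_{k<l}\hat\rho_{p_k}^k\big)(t-t_0)$ and departs the network at time $t_0+\big(\prod_{k=1}^{L}\hat\rho_{p_k}^k\big)(t-t_0)$, so that its total queueing delay telescopes to $D_p(t)=\big(\prod_{l=1}^{L}\hat\rho_{p_l}^l-1\big)(t-t_0)$. Integrating over $t\in[t_0,t_0+T]$ then yields the clean identities $\bar D_i=\tfrac{T}{2}\big(\sum_{p:\,p_1=i}w_p\prod_l\hat\rho_{p_l}^l-1\big)$ and $\bar D_{\text{avg}}=\tfrac{T}{2}(\mathcal A-1)$, where $\mathcal A:=\sum_p W_p\prod_l\hat\rho_{p_l}^l$ and $W_p$ is the fraction of all arriving fluid using path $p$, so $\sum_pW_p=1$.

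For achievability, I substitute \eqref{eqn:clos-static-flow} into these identities: the three constraints say precisely that $a_i^l/f_i^l=\gamma_l$ for every node at layer $l$, and feasibility together with ``maximum throughput achieved'' (all egress nodes busy, queues nonnegative) forces $\gamma_l\ge1$, hence $\hat\rho_i^l=\gamma_l$. A telescoping flow-conservation count across layers gives $\prod_{l=1}^L\gamma_l=\Lambda/M$, so \emph{every} path has the same amplification $\prod_l\hat\rho_{p_l}^l=\Lambda/M$; therefore $\bar D_i=\bar D_{\text{avg}}=\bar D_{\text{max}}=\tfrac{T}{2}(\Lambda/M-1)$ in the overloaded regime, which is $\tfrac{T}{2}\max\{\Lambda/M-1,0\}$ (the non-overloaded case being trivial, since then zero delay is attainable by a stabilizing static policy).

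For the matching lower bound --- the ``globally minimize'' part --- fix an arbitrary static policy and follow only the fluid arriving in $[t_0,t_0+T]$ (mass $\Lambda T$). Because service is FIFO at each node, each (source, path) class retains its internal order throughout, so a unit of this fluid on path $p$ arriving at $t$ leaves the network at $t_0+\big(\prod_l\hat\rho_{p_l}^l\big)(t-t_0)$. Hence for any $u\in(0,T)$ the mass of this fluid that has departed by $t_0+u$ equals $\sum_p\Lambda W_p\,u\big/\prod_l\hat\rho_{p_l}^l$ (using $\prod_l\hat\rho_{p_l}^l\ge1$, so $u/\prod_l\hat\rho_{p_l}^l<T$), and this cannot exceed the total network departures over $[t_0,t_0+u]$, which are at most $Mu$. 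Dividing by $\Lambda u$ gives $\sum_pW_p\big/\prod_l\hat\rho_{p_l}^l\le M/\Lambda$; viewing $W$ as a distribution over paths and using convexity of $x\mapsto1/x$, Jensen's inequality yields $\mathcal A=\sum_pW_p\prod_l\hat\rho_{p_l}^l\ \ge\ 1\big/\!\sum_pW_p\big/\!\prod_l\hat\rho_{p_l}^l\ \ge\ \Lambda/M$. Thus $\bar D_{\text{avg}}=\tfrac{T}{2}(\mathcal A-1)\ge\tfrac{T}{2}(\Lambda/M-1)$, and since $\bar D_{\text{max}}\ge\bar D_{\text{avg}}$ the same bound holds for the max metric; combined with the previous paragraph, \eqref{eqn:clos-static-flow} attains the global minimum of both metrics.

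The main obstacle is the structural fact of the first paragraph: making the departure-time formula rigorous requires care with (i) nodes whose egress rate exceeds their ingress rate, where the queue stays empty and the truncation $\hat\rho=\max\{1,\cdot\}$ is genuinely needed, and with any initial transient before all effective rates settle; and (ii) verifying that FIFO service at a node shared by several path-classes preserves the arrival order \emph{within} each class, which is what legitimizes propagating the per-class linear delay through every layer. Once that fact is in place, the telescoping product, the flow-conservation count $\prod_l\gamma_l=\Lambda/M$, and the single Jensen step are all short; in particular this also explains why the optimum coincides with the single-hop values of Theorems~\ref{thm:Nx1} and~\ref{thm:NxM}, since only the aggregate ratio $\Lambda/M$ survives the telescoping.
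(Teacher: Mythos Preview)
Your approach is correct and takes a genuinely different route from the paper. The paper's proof works a $2\times2\times2$ example explicitly: it writes each path delay $\bar D_p$ as a rational function of the link rates (their eq.~(22)), groups paths that share a downstream node, and applies Cauchy--Schwarz layer by layer; each application simultaneously produces a lower bound and the equality condition at that layer, and these conditions assemble into the rate-proportional constraints~\eqref{eqn:clos-static-flow}. You instead isolate one structural fact---$D_p(t)=\big(\prod_l\hat\rho_{p_l}^l-1\big)(t-t_0)$---then obtain achievability from the telescoping identity $\prod_l\gamma_l=\Lambda/M$ and the lower bound from a departure-count plus a single Jensen step. The paper's method is concrete and makes the equality conditions fall out mechanically, but is tied to the small example, with the general case left to ``smooth extension''; your argument is topology-agnostic from the start, and the departure-count/Jensen combination is strictly cleaner than iterated Cauchy--Schwarz while explaining transparently why only the aggregate ratio $\Lambda/M$ survives.

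One caution on the achievability half: your claim that maximum throughput ``forces $\gamma_l\ge1$'' is not literally true---e.g.\ in a $1\times1\times1$ chain with $\lambda=2$, $\mu=1$, $g_{12}=1$, $g_{23}=2$ one has $\boldsymbol\gamma=(2,\tfrac12,2)$, condition~\eqref{eqn:clos-static-flow} holds, and max throughput is achieved with $\gamma_2<1$. The conclusion is still correct (the product of \emph{effective} amplifications along every path is $\Lambda/M$), and you can recover it from your own machinery: under max throughput the departure-count inequality is an equality, so $\sum_pW_p/R_p=M/\Lambda$, and under~\eqref{eqn:clos-static-flow} all $R_p$ coincide, which is exactly the equality case of your Jensen step. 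The paper sidesteps the same issue by assuming ingress~$\ge$~egress at every node, so you are not omitting anything the paper actually proves.
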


We defer the proof to the appendix, whose primary idea is to apply $L-1$ times the proof idea of Theorem \ref{thm:NxM} for single-hop networks. Theorem \ref{thm:clos-static-flow} shows that we guarantee minimum delay given that maximum throughput is achieved by setting link rates such that at each layer $l$, the ingress rates of all the nodes are in the same proportion to their egress rates. We denote the ratio between the ingress and egress rates at nodes at the $l$-th layer by $\gamma_l$. Note that we do not need to guarantee $\gamma_i = \gamma_j$ for different layers $i$ and $j$. We give an example of $\mathbf{g}$ that satisfies \eqref{eqn:clos-static-flow} in Fig. \ref{fig:Paper_multi-layer_example}. An important implication of the explicit sufficient condition is that they simplify the problem of link rate control for delay minimization to finding a feasible solution to \eqref{eqn:clos-static-flow} given a feasible $\boldsymbol{\gamma}$, which can be formulated as a linear programming (LP) problem. We defer the discussion of its feasibility analysis and its wider applications to Section \ref{subsec:practical}. We point out that it is challenging to derive the necessary condition for multi-stage networks and we leave it to future work. 

\begin{figure}[!htbp]
\centering
\includegraphics[width=0.95\linewidth]{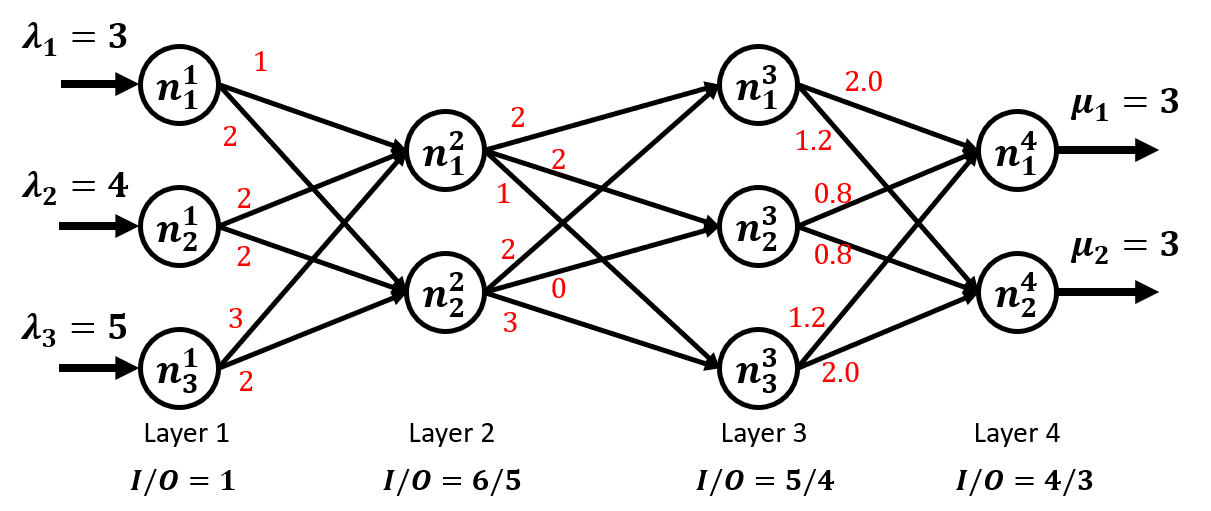}
\caption{An example of link rate control in a 4-layer multi-stage network which minimizes $\bar{D}_{\text{avg}}$ and $\bar{D}_{\text{max}}$. The red numbers are link rate values. The ingress and egress rates of nodes at layer 1 to 4 are $1$, $6/5$, $5/4$, and $4/3$ respectively.}
\label{fig:Paper_multi-layer_example}
\end{figure}

We leave a final remark to conclude this section. Note that the rate-proportional policy design is not equivalent to overload balancing in the buffers of nodes at the same layer  \cite{wu2022overload, li2014dynamic}. The former is to maintain identical ratios between ingress and egress rates, while the latter is to maintain identical differences between them. This means that the delay minimization and overload balancing among node buffers cannot be simultaneously achieved when the network is overloaded in general.


\section{Queue-based Min-delay Policy Design}
\label{sec:queue_policy}

In this section, we develop queue-based dynamic policies where link rates can be adjusted according to real-time queue backlog information in the network, based on the static min-delay policy design from Section \ref{sec:static_delay_optimal}. The motivation to study queue-based min-delay policies is that the static policies require the complete knowledge of network parameters, which in real networks may be difficult to estimate or unavailable, for example the packet arrival rate vector $\boldsymbol{\lambda}$ \cite{neely2010stochastic}, 
while the real-time queue backlog $\mathbf{q}(t)$ is often accessible. 
We demonstrate that the \emph{queue-proportional} policies can achieve minimum queueing delay asymptotically: setting the egress link rates of nodes at a layer in the same proportion to the queue backlog length in their buffers. 
We first introduce the min-delay queue-based policy design in $N\times 1$ networks, and then extend it to general single-hop and multi-stage networks.

\subsection{$N\times 1$ Networks}

We propose the queue-based min-delay policy for $N\times 1$ networks (as in Fig.~\ref{fig:Nx1}) based on the static counterparts from Theorem \ref{thm:Nx1}.
The min-delay condition \eqref{eqn:Nx1} on static $\mathbf{g}$ which requires $g_i/g_j=\lambda_i/\lambda_j, ~\forall i \neq j$ implies that the dynamic rate control policy where $\mathbf{g}(t)$ satisfies  
$g_i(t)/g_j(t)=\dot{q}_{s_i}(t)/\dot{q}_{s_j}(t), ~\forall i \neq j$
can minimize both $\bar{D}_{\text{avg}}$ and $\bar{D}_{\text{max}}$, since
$$
\frac{g_i(t)}{g_j(t)}=\frac{\dot{q}_{s_i}(t)}{\dot{q}_{s_j}(t)}=\frac{\lambda_i-g_i(t)}{\lambda_j-g_j(t)}\overset{(\ast)}{=}\frac{\lambda_{i}}{\lambda_j}
$$
where $(\ast)$ holds since if for some $a,b,c,d\neq 0$ and $a+c, b+d\neq 0$, $a/b=c/d$, then $a/b=c/d=(a+c)/(b+d)$. This dynamic policy inspires the following queue-based policy without utilizing the information of $\boldsymbol{\lambda}$:
\begin{equation}
\label{eqn:queue_policy_Nx1}
\frac{g_{i}(\mathbf{q}(t))}{g_{j}(\mathbf{q}(t))}=\frac{q_{i}(t)}{q_{j}(t)}, ~\forall i \neq j,\quad 
\sum_{i=1}^N g_{i}(\mathbf{q}(t)) \geq \mu
\end{equation}
where the link rates are set in the same proportion to the real-time queue length at the ingress nodes. We term any policy that follows \eqref{eqn:queue_policy_Nx1} as a queue-proportional policy. We show in Theorem \ref{thm:optimum_queue_Nx1} that \eqref{eqn:queue_policy_Nx1} achieves optimal $\bar{D}_{\text{avg}}$ and $\bar{D}_{\text{max}}$ with zero initial queue length at all nodes, and further in Theorem \ref{thm:asymp_queue_Nx1} that \eqref{eqn:queue_policy_Nx1} asymptotically converges to the min-delay policy \eqref{eqn:Nx1} given arbitrary initial queue vector $\mathbf{q}(t_0)$.

\begin{theorem}
\label{thm:optimum_queue_Nx1}
With $\mathbf{q}(t_0)=\boldsymbol{0}$, then the policy \eqref{eqn:queue_policy_Nx1} achieves minimum $\bar{D}_{\text{avg}}$ and $\bar{D}_{\text{max}}$ as in \eqref{eqn:Nx1}.
\end{theorem}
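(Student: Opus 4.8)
The plan is to reduce the dynamic, queue-driven rule \eqref{eqn:queue_policy_Nx1} to the static rate-proportional policy of Theorem \ref{thm:Nx1}: I will show that, starting from $\mathbf q(t_0)=\boldsymbol 0$, the queue-proportional rule keeps the ingress queues proportional to the arrival rates for all $t\ge t_0$, so that the induced link rates satisfy $g_1/\lambda_1=\dots=g_N/\lambda_N$ (with $\sum_i g_i\ge\mu$) at every instant; then I will compute the delay directly and check it equals the minimum value from Theorem \ref{thm:Nx1}. Throughout I restrict to the overloaded regime $\Lambda:=\sum_k\lambda_k>\mu$; when $\Lambda\le\mu$ the rule degenerates to $g_i\equiv\lambda_i$ with empty buffers and zero delay, which already equals the minimum $\frac{T}{2\mu}\max\{\Lambda-\mu,0\}=0$.

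\emph{Step 1 (ingress queues stay proportional to $\boldsymbol\lambda$).} Put $r_i(t):=q_{s_i}(t)/\lambda_i$ and $G(t):=\sum_k g_k(\mathbf q(t))\ge\mu$. Under \eqref{eqn:queue_policy_Nx1} one has $g_i(\mathbf q(t))=q_{s_i}(t)\,G(t)/\sum_k q_{s_k}(t)$, so \eqref{eqn:ODE} gives $\dot r_i(t)=1-\frac{r_i(t)}{\sum_k\lambda_k r_k(t)}G(t)$. Subtracting the equations for indices $i$ and $j$, the difference $\phi_{ij}(t):=r_i(t)-r_j(t)$ satisfies the homogeneous linear ODE
\[
\dot\phi_{ij}(t)=-\alpha(t)\,\phi_{ij}(t),\qquad \alpha(t):=\frac{G(t)}{\sum_k\lambda_k r_k(t)}\ \ge\ 0 .
\]
Since $\phi_{ij}(t_0)=0$ and $\phi_{ij}$ is continuous, solving the linear equation on $[t_0+\varepsilon,t]$ and letting $\varepsilon\downarrow 0$ forces $\phi_{ij}\equiv 0$. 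Hence $q_{s_i}(t)=\frac{\lambda_i}{\Lambda}q_{\mathrm{tot}}(t)$ with $q_{\mathrm{tot}}:=\sum_k q_{s_k}$, and therefore $g_i(\mathbf q(t))=\frac{\lambda_i}{\Lambda}G(t)$, i.e.\ $\mathbf g(t)$ lies on the rate-proportional line of \eqref{eqn:Nx1} at every time.

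\emph{Step 2 (delay computation).} For a packet entering $s_i$ at time $t$, its wait $\tau_i(t)$ at $s_i$ solves $\int_t^{t+\tau_i(t)}g_i(\mathbf q(s))\,ds=q_{s_i}(t)$; by Step 1 this is equivalent to $\int_t^{t+\tau_i(t)}G(s)\,ds=q_{\mathrm{tot}}(t)$, which is independent of $i$, so $\tau_i(t)=\tau(t)$. Moreover $q_{\mathrm{tot}}(t)>0$ for every $t>t_0$, because $q_{\mathrm{tot}}(t)=0$ would force every $g_k=0$ and hence $G(t)=0<\mu$, contradicting \eqref{eqn:queue_policy_Nx1}; consequently node $d$ receives fluid at rate $G(s)\ge\mu$ on $(t_0,\infty)$ and is never idle there, so its cumulative output by any $t>t_0$ is $\mu(t-t_0)$. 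Carrying through the same telescoping used to obtain \eqref{eqn:two-node-model} and \eqref{eqn:two-hop-delay} for the aggregated two-node tandem (ingress block feeding $d$ at rate $G(s)\ge\mu$) then yields $D_i(t)=\tau(t)+q_d(t+\tau(t))/\mu=\dfrac{q_{\mathrm{tot}}(t)+q_d(t)}{\mu}$, independent of $i$ and of the (unspecified) magnitude $G(\cdot)$. Finally, work conservation of the whole network gives $q_{\mathrm{tot}}(t)+q_d(t)=\Lambda(t-t_0)-\mu(t-t_0)$, so $D_i(t)=\frac{\Lambda-\mu}{\mu}(t-t_0)$ and $\bar D_i=\frac1T\int_{t_0}^{t_0+T}D_i(t)\,dt=\frac{T}{2\mu}(\Lambda-\mu)$ for all $i$.

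\emph{Step 3 and conclusion.} Since $\bar D_1=\dots=\bar D_N=\frac{T}{2\mu}\max\{\Lambda-\mu,0\}$, both $\bar D_{\mathrm{avg}}$ (a convex combination of the $\bar D_i$) and $\bar D_{\max}$ (their maximum) equal this value, which Theorem \ref{thm:Nx1} identifies as the global minimum. I expect the main obstacle to be making Step 1 rigorous at the initial instant $t_0$: the ratio $q_i/q_j$ in \eqref{eqn:queue_policy_Nx1} is $0/0$ there and the coefficient $\alpha(t)$ blows up as $t\downarrow t_0$, so one must argue carefully — using continuity of $\mathbf q$, the model constraint $g_i=0$ when $q_{s_i}=0$, and the zero initial condition — that the trajectory selected by the dynamics is the proportional one and not a spurious branch. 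A secondary point to handle cleanly is that \eqref{eqn:queue_policy_Nx1} does not fix $G(t)$, so the delay argument of Step 2 must be (and, as shown, is) insensitive to that degree of freedom.
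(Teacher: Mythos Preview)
Your proof is correct and reaches the same conclusion as the paper, but the technical machinery differs in a way worth noting. The paper's proof is a discrete Euler-step induction: it evaluates the ratio $g_i/g_j=q_{s_i}/q_{s_j}$ at $t_0+\epsilon$, shows it tends to $\lambda_i/\lambda_j$ as $\epsilon\to 0$, and then propagates this ratio forward one $\epsilon$-step at a time; having established $g_i/g_j=\lambda_i/\lambda_j$ for all $t$, it simply invokes Theorem~\ref{thm:Nx1}. Your Step~1 replaces the time-stepping with a continuous ODE argument (the differences $\phi_{ij}=r_i-r_j$ satisfy $\dot\phi_{ij}=-\alpha(t)\phi_{ij}$ with zero initial data), which is cleaner and makes the uniqueness mechanism explicit. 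Your Step~2 is also more careful than the paper's: Theorem~\ref{thm:Nx1} is stated for \emph{static} $\mathbf g$, whereas the policy \eqref{eqn:queue_policy_Nx1} leaves $G(t)=\sum_i g_i(\mathbf q(t))$ free to vary in $[\mu,\infty)$; you close this gap by computing the delay directly and showing it collapses to $(q_{\mathrm{tot}}(t)+q_d(t))/\mu$, independent of the choice of $G(\cdot)$, whereas the paper tacitly assumes the static-policy result carries over. Both proofs share the same genuine soft spot, namely the $0/0$ indeterminacy and the conflict between $g_i=0$ when $q_{s_i}=0$ and $\sum_i g_i\ge\mu$ at $t=t_0$; you flag this explicitly, the paper handles it only implicitly through the $\epsilon\to 0$ limit.
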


\begin{proof}
Initially at $t_0$, we take $\epsilon\rightarrow 0$ and have $\forall i \neq j$, 
$$
\small
\begin{aligned}
&\frac{g_i(\mathbf{q}(t_0+\epsilon))}{g_j(\mathbf{q}(t_0+\epsilon))}=\frac{q_{s_i}(t_0+\epsilon)}{q_{s_j}(t_0+\epsilon)}
=\frac{\int_{t_0}^{t_0+\epsilon} \lambda_i-g_i(\mathbf{q}(s))ds}{\int_{t_0}^{t_0+\epsilon} \lambda_j-g_j(\mathbf{q}(s))ds}
\\&= \frac{\int_{t_0}^{t_0+\epsilon} \lambda_i ds - g_i(\mathbf{q}(t_0+\alpha_i\epsilon))}{\int_{t_0}^{t_0+\epsilon} \lambda_j ds- g_j(\mathbf{q}(t_0+\alpha_j\epsilon))}\rightarrow \frac{\int_{t_0}^{t_0+\epsilon} \lambda_i ds}{\int_{t_0}^{t_0+\epsilon} \lambda_j ds}  =\frac{\lambda_i}{\lambda_j}
\end{aligned}
$$
where $\alpha_i, \alpha_j \in [0,1]$. Then in the time interval $[t_0+\epsilon,t_0+2\epsilon]$,
$$
\small
\begin{aligned}
\frac{g_i(\mathbf{q}(t_0+2\epsilon))}{g_j(\mathbf{q}(t_0+2\epsilon))}&=\frac{q_{s_i}(t_0+2\epsilon)}{q_{s_j}(t_0+2\epsilon)}
=\frac{q_{s_i}(t_0+\epsilon)+\epsilon\dot{q}_{s_i}}{q_{s_j}(t_0+\epsilon)+\epsilon\dot{q}_{s_j}}
\\&=\frac{q_{s_i}(t_0+\epsilon)+\epsilon(\lambda_i-g_i(\mathbf{q}(t_0+\epsilon)))}{q_{s_j}(t_0+\epsilon)+\epsilon(\lambda_j-g_j(\mathbf{q}(t_0+\epsilon)))}
=\frac{\lambda_i}{\lambda_j}
\end{aligned}.
$$
Iteratively, we can obtain
$$
\small
\frac{g_{i}(\mathbf{q}(t))}{g_{j}(\mathbf{q}(t))}=\frac{q_{s_i}(t)}{q_{s_j}(t)}=\frac{\lambda_i}{\lambda_j}, ~\forall t
$$
which minimizes $\bar{D}_{\text{avg}}$ and $\bar{D}_{\text{max}}$ at any time $t$ according to Theorem \ref{thm:Nx1}.
\end{proof}

\begin{theorem}
\label{thm:asymp_queue_Nx1}
With arbitrary $\mathbf{q}(t_0)$, \eqref{eqn:queue_policy_Nx1} converges to the state where $\lim_{t\rightarrow \infty} {g_i(\mathbf{q}(t))}/{g_j(\mathbf{q}(t))}=\lambda_i/\lambda_j, ~\forall i \neq j$ which minimizes $\bar{D}_{\text{avg}}$ and $\bar{D}_{\text{max}}$.
\end{theorem}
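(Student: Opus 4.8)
The plan is to reduce the closed-loop fluid dynamics under \eqref{eqn:queue_policy_Nx1} to a decoupled linear system for the \emph{normalized queue fractions}, and to read off convergence from an exponential-decay estimate. Write $Q(t):=\sum_{k=1}^N q_{s_k}(t)$ and $G(t):=\sum_{k=1}^N g_k(\mathbf{q}(t))$; since Theorem \ref{thm:optimum_queue_Nx1} already covers $\mathbf{q}(t_0)=\boldsymbol{0}$, assume $Q(t_0)>0$. The common-ratio requirement in \eqref{eqn:queue_policy_Nx1} is equivalent to $g_i(\mathbf{q}(t))=q_{s_i}(t)\,G(t)/Q(t)$ with $G(t)\ge\mu$, and with unlimited capacity Definition \ref{def:overload_single_hop} says the network is overloaded exactly when $\Lambda:=\sum_{k=1}^N\lambda_k>\mu$. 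Substituting into \eqref{eqn:ODE} gives $\dot{q}_{s_i}(t)=\lambda_i-q_{s_i}(t)G(t)/Q(t)$, and summing over $i$ gives $\dot{Q}(t)=\Lambda-G(t)$. (Because $G(t)$ is only pinned down to lie in $[\mu,\infty)$, the closed loop is formally a differential inclusion; every step below holds for any admissible trajectory, so existence/uniqueness of the selection is not needed for the conclusion.)

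First I would track $r_i(t):=q_{s_i}(t)/Q(t)$. A one-line computation from the two equations above gives
\begin{equation}
\dot{r}_i(t)=\frac{\dot{q}_{s_i}(t)}{Q(t)}-\frac{q_{s_i}(t)\dot{Q}(t)}{Q(t)^2}=\frac{\Lambda}{Q(t)}\bigl(r_i^*-r_i(t)\bigr),\qquad r_i^*:=\frac{\lambda_i}{\Lambda},
\end{equation}
in which $G(t)$ cancels, so the convergence is insensitive to how much total rate beyond $\mu$ the policy uses. The coordinates therefore decouple, and integrating the scalar linear ODE yields $r_i(t)-r_i^*=(r_i(t_0)-r_i^*)\exp\!\bigl(-\Lambda\int_{t_0}^{t}\!du/Q(u)\bigr)$. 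In particular $r_i(t)-r_i^*$ keeps a fixed sign and shrinks in magnitude, so $r_i(t)$ stays in the interval between $r_i(t_0)$ and $r_i^*$; assuming $\lambda_i>0$ this keeps every $r_i(t)$ bounded below by a positive constant, hence $Q(t)>0$ for as long as no $q_{s_i}$ hits zero.

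The crux is to show $\int_{t_0}^{\infty}du/Q(u)=+\infty$, which kills the exponential. Since $\dot{Q}(t)=\Lambda-G(t)\le\Lambda-\mu$, the total backlog grows at most linearly, $Q(t)\le Q(t_0)+(\Lambda-\mu)(t-t_0)$, so $1/Q(u)$ decays no faster than $1/u$ and its integral diverges; thus $r_i(t)\to r_i^*$. (If instead $Q(\cdot)$ reaches $0$ at a finite time then, since the $r_i$ are bounded below, all ingress queues are empty there and refill at rates $\lambda_i$, so $q_{s_i}(t)/q_{s_j}(t)\to\lambda_i/\lambda_j$ immediately afterwards, and $\dot{r}_i=0$ once $r_i=r_i^*$ pins the ratio in place --- the optimal ratio is then attained in finite time.) In all cases $g_i(\mathbf{q}(t))/g_j(\mathbf{q}(t))=q_{s_i}(t)/q_{s_j}(t)=r_i(t)/r_j(t)\to r_i^*/r_j^*=\lambda_i/\lambda_j$ for all $i\ne j$; combined with $\sum_{k}g_k(\mathbf{q}(t))\ge\mu$ this places the policy asymptotically in the min-delay region \eqref{eqn:Nx1}, so Theorem \ref{thm:Nx1} delivers the asymptotic optimality of $\bar{D}_{\text{avg}}$ and $\bar{D}_{\text{max}}$. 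I expect the main obstacle to be the bookkeeping around $Q(t)$: ruling out super-linear blow-up --- handled cleanly by the throughput bound $G\ge\mu$ --- and treating the degenerate drain-to-zero case; establishing a well-behaved closed-loop trajectory when $G(t)$ is only constrained to $[\mu,\infty)$ is a secondary technical point that the $r_i$-recursion sidesteps.
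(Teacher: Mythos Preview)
Your proposal is correct and takes a genuinely different route from the paper. The paper applies L'Hospital's rule directly to the ratio $q_{s_i}(t)/q_{s_j}(t)$ (relying on the footnote that all ingress queues diverge), identifies its limit with the derivative ratio $(\lambda_i-g_i)/(\lambda_j-g_j)$, and then uses the policy identity $g_i/g_j=q_{s_i}/q_{s_j}$ together with the mediant rule $a/b=c/d\Rightarrow a/b=(a+c)/(b+d)$ to conclude $\lambda_i/\lambda_j$. You instead pass to the normalized fractions $r_i=q_{s_i}/Q$, derive the decoupled linear ODE $\dot r_i=(\Lambda/Q)(r_i^*-r_i)$ in which $G(t)$ cancels, and close with $\int_{t_0}^{\infty}du/Q(u)=+\infty$ from the linear upper bound $Q(t)\le Q(t_0)+(\Lambda-\mu)(t-t_0)$. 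Your route yields an explicit exponential decay in the time change $\int du/Q$, is insensitive to the particular selection $G(t)\in[\mu,\infty)$, and sidesteps the existence-of-limit hypothesis that L'Hospital needs on the derivative ratio; the paper's argument is shorter but implicitly assumes that limit exists. Your handling of the drain-to-zero case by falling back on Theorem~\ref{thm:optimum_queue_Nx1} is also more explicit than the paper, which restricts attention to the diverging-queue regime via its footnote.
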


\begin{proof}
Under \eqref{eqn:queue_policy_Nx1}, for any $i\neq j$, when $t\rightarrow \infty$,
$$
\small
\begin{aligned}
\left|\frac{q_{s_i}(t)}{q_{s_j}(t)} - \frac{\lambda_i}{\lambda_j}\right| &= \left|\frac{q_{s_i}(t_0)+\int_{t_0}^t \lambda_i - g_i(\mathbf{q}(s)) ds}{q_{s_j}(t_0)+\int_{t_0}^t \lambda_j - g_j(\mathbf{q}(s)) ds} - \frac{\lambda_i}{\lambda_j}\right|
\\& \overset{\text{L'hos}}{=} \left|\frac{\lambda_i - g_i(\mathbf{q}(t))}{\lambda_j - g_j(\mathbf{q}(t))} - \frac{\lambda_i}{\lambda_j}\right| 
\end{aligned}
$$
and 
\begin{equation}
\label{eqn:l_hos}
\frac{g_{i}(\mathbf{q}(t))}{g_{j}(\mathbf{q}(t))}=\frac{q_{s_i}(t)}{q_{s_j}(t)} \overset{\text{L'hos}}{=}  \frac{\dot{q}_{s_i}(t)}{\dot{q}_{s_j}(t)} = \frac{\lambda_i - g_i(\mathbf{q}(t))}{\lambda_j - g_j(\mathbf{q}(t))}
\end{equation}
where L'hos means applying the L'hospital's rule\footnote{We solely need to consider the case where the queue backlogs in ingress nodes keep growing (i.e, $\lim_{t\rightarrow \infty}\int_{t_0}^t \lambda_i - g_i(\mathbf{q}(s)) ds \rightarrow \infty,~\forall i=1,\dots,N_S$), since otherwise the min-delay condition is trivial by serving with link rates higher than packet arrival rates at all the ingress nodes, as shown in Theorem \ref{thm:Nx1}.}. We further derive based on \eqref{eqn:l_hos} that
$
\lim_{t\rightarrow \infty} \frac{q_{s_i}(t)}{q_{s_j}(t)} =  \lim_{t\rightarrow \infty} \frac{g_{i}(\mathbf{q}(t))}{g_{j}(\mathbf{q}(t))} =  \frac{\lambda_i - \lim_{t\rightarrow \infty} g_i(\mathbf{q}(t))}{\lambda_j - \lim_{t\rightarrow \infty} g_j(\mathbf{q}(t))} = \frac{\lambda_i}{\lambda_j}.
$ 
This shows that the transmission rates asymptotically become proportional to the corresponding packet arrival rates at the ingress nodes, which reaches minimum $\bar{D}_{\text{avg}}$ and $\bar{D}_{\text{max}}$ based on Theorem \ref{thm:Nx1}. 
\end{proof}

Although not necessarily achieving minimum delay at any time given arbitrary $\mathbf{q}(t_0)$, the policy \eqref{eqn:queue_policy_Nx1} keeps driving the queueing dynamics to the state under which delay is minimized. Intuitively, it drives $q_i(t) / q_j(t) \rightarrow \lambda_i / \lambda_j$: Suppose $q_i(t) / q_j(t) > \lambda_i / \lambda_j$, then $g_i(\mathbf{q}(t)) / g_j(\mathbf{q}(t)) > \lambda_i / \lambda_j$ which drives down $q_i(t) / q_j(t)$, and when $q_i(t) / q_j(t) < \lambda_i / \lambda_j$, the policy increases $q_i(t) / q_j(t)$ closer to $\lambda_i / \lambda_j$. 


\subsection{General Single-Hop and Multi-Stage Networks}

%

We extend the asymptotic min-delay policy \eqref{eqn:queue_policy_Nx1} to general single-hop and multi-stage networks. We show that adjusting link rates so that the egress rates of nodes at the same layer in the same proportion to their queue backlogs at every timestamp leads to minimum delay asymptotically. Theorem \ref{thm:optimum_queue_NxM} delivers a sufficient condition that minimizes $\bar{D}_{\text{avg}}$ and $\bar{D}_{\text{max}}$ asymptotically in  single-hop networks. Theorem \ref{thm:clos-queue-based-policy} further generalizes the result to multi-stage networks.
\begin{theorem}
\label{thm:optimum_queue_NxM}
Consider an $N_S\times N_D$ single-hop network. Any queue-based policy $\mathbf{g}(\mathbf{q}(t)),~\forall t$ that satisfies 
\begin{equation}
\label{eqn:queue_policy_NxM}
\small
\begin{cases}
\frac{\sum_{k=1}^{N_D} g_{s_id_k}(\mathbf{q}(t))}{\sum_{k=1}^{N_D} g_{s_jd_k}(\mathbf{q}(t))}=\frac{q_{s_i}(t)}{q_{s_j}(t)},~\forall i,j=1,\dots,N_S \\
\\
\frac{\sum_{k=1}^{N_S} g_{s_kd_i}(\mathbf{q}(t))}{\sum_{k=1}^{N_S} g_{s_kd_j}(\mathbf{q}(t))}=\frac{\mu_i}{\mu_j},~\forall i,j=1,\dots,N_D \\
\\
\sum_{k=1}^{N_S} g_{s_kd_j}(\mathbf{q}(t))\geq \mu_j,~\forall j = 1,\dots,N_D
\end{cases}
\end{equation}
achieves 
asymptotically minimum $\bar{D}_{\text{avg}}$ and $\bar{D}_{\text{max}}$ as in \eqref{eqn:optimum_NxM} with arbitrary initial queue backlog.
\end{theorem}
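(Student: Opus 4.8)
The plan is to reduce Theorem \ref{thm:optimum_queue_NxM} to two independent asymptotic ``proportionality'' arguments of the kind already used in Theorem \ref{thm:asymp_queue_Nx1}, one at the ingress layer and one at the egress layer, and then invoke Theorem \ref{thm:NxM} to conclude that the induced static limit minimizes both delay metrics. First I would observe that the second and third constraints of \eqref{eqn:queue_policy_NxM} already force the egress-side condition of \eqref{eqn:optimum_NxM} at \emph{every} time $t$: since $\sum_k g_{s_kd_i}(\mathbf{q}(t))/\sum_k g_{s_kd_j}(\mathbf{q}(t)) = \mu_i/\mu_j$ exactly, the total ingress rate of egress node $d_j$ is always $\mu_j$ times a common factor, and the work-conservation constraint pins that factor to be at least $1$; so egress buffers never starve and the second line of \eqref{eqn:optimum_NxM} holds identically. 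The only thing that needs an asymptotic argument is the first line, i.e.\ that $\sum_k g_{s_id_k}(\mathbf{q}(t))/\sum_k g_{s_jd_k}(\mathbf{q}(t)) \to \lambda_i/\lambda_j$.

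For that, define $G_i(t) := \sum_{k=1}^{N_D} g_{s_id_k}(\mathbf{q}(t))$, the aggregate egress rate of ingress node $s_i$. The dynamics \eqref{eqn:ODE} give $\dot{q}_{s_i}(t) = \lambda_i - G_i(t)$, and the first constraint of \eqref{eqn:queue_policy_NxM} says $G_i(t)/G_j(t) = q_{s_i}(t)/q_{s_j}(t)$. This is exactly the scalar situation of Theorem \ref{thm:asymp_queue_Nx1} with $g_i(\mathbf{q}(t))$ replaced by $G_i(t)$: in the nontrivial case where the ingress backlogs grow unboundedly, applying L'H\^opital's rule to $q_{s_i}(t)/q_{s_j}(t) = \big(q_{s_i}(t_0)+\int_{t_0}^t \lambda_i - G_i(s)\,ds\big)\big/\big(q_{s_j}(t_0)+\int_{t_0}^t \lambda_j - G_j(s)\,ds\big)$ yields $\lim_t q_{s_i}(t)/q_{s_j}(t) = \lim_t (\lambda_i - G_i(t))/(\lambda_j - G_j(t))$, and combining with $G_i(t)/G_j(t) = q_{s_i}(t)/q_{s_j}(t)$ and the elementary mediant identity used in Section \ref{sec:queue_policy} forces the common limit to be $\lambda_i/\lambda_j$; in the trivial case some ingress backlog stays bounded, which by Theorem \ref{thm:NxM}'s reasoning already corresponds to the $g_i \geq \lambda_i$ branch where delay is minimized. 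One must also check that $\sum_i G_i(t) \geq \sum_j \mu_j$ eventually, which follows from the third constraint since $\sum_i G_i(t) = \sum_j \sum_k g_{s_kd_j}(\mathbf{q}(t)) \geq \sum_j \mu_j$ holds identically.

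Having shown $G_i(t)/G_j(t) \to \lambda_i/\lambda_j$ while the egress-side proportionality $\sum_k g_{s_kd_i}/\sum_k g_{s_kd_j} = \mu_i/\mu_j$ and the throughput condition hold at all times, I would then argue that the queueing delay $D_{s_id_j}(t)$ of \eqref{eqn:two-hop-delay}, integrated as in \eqref{eqn:bar_D_i}, is asymptotically governed by the limiting static rate vector; since that limit satisfies all three conditions of \eqref{eqn:optimum_NxM}, Theorem \ref{thm:NxM} gives $\bar{D}_{\text{avg}} = \bar{D}_{\text{max}} = \frac{T}{2\sum_j \mu_j}\max\{\sum_i \lambda_i - \sum_j \mu_j, 0\}$ in the limit, matching the claim. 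The main obstacle I anticipate is the same subtlety flagged in the footnote to Theorem \ref{thm:asymp_queue_Nx1}: carefully separating the ``all ingress backlogs diverge'' regime (where L'H\^opital applies cleanly) from the mixed regime where some ingress nodes eventually drain, and verifying that in the mixed regime the policy still steers the surviving bottleneck nodes into the rate-proportional configuration relative to the residual service capacity — this requires tracking which egress buffers remain saturated and is the one place where the single-hop bookkeeping is genuinely more delicate than the $N\times 1$ case.
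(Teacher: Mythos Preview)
Your proposal is correct and matches the paper's own approach: the paper does not give a standalone proof of Theorem~\ref{thm:optimum_queue_NxM} but simply remarks that its proof idea follows Theorem~\ref{thm:asymp_queue_Nx1} (and that it is the $L=2$ special case of Theorem~\ref{thm:clos-queue-based-policy}). Your write-up is in fact more explicit than the paper's, since you separate out the observation that the egress-side proportionality and throughput conditions of \eqref{eqn:optimum_NxM} hold identically under \eqref{eqn:queue_policy_NxM}, leaving only the ingress-side ratio to be handled by the L'H\^opital/mediant argument of Theorem~\ref{thm:asymp_queue_Nx1} with $g_i$ replaced by the aggregate $G_i$; this is exactly the reduction the paper intends.
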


\begin{theorem}
\label{thm:clos-queue-based-policy}
Consider an $L$-layer multi-stage network. Any queue-based policy $\mathbf{g}(\mathbf{q}(t)), \forall t$ that satisfies
\begin{equation}
\small
\label{eqn:clos-queue-based}
\begin{cases}
\frac{q_{n_i^l}(t)}{\sum_{n_j^{l+1} \in \mathcal{V}_{l+1}} {g}_{n_i^l,n_j^{l+1}}(\mathbf{q}(t))}  = \gamma_l, \quad \forall n_i^l \in \mathcal{V}_l, ~\forall l = 1,\dots,L-1 \\
\\
\frac{\sum_{n_k^{L-1} \in \mathcal{V}_L} g_{n_k^{L-1}, n_i^L}(\mathbf{q}(t))}{\mu_i} = \gamma_L, \quad \forall n_i^L \in \mathcal{V}_L
\end{cases}
\end{equation}
for some $\boldsymbol{\gamma} = \{\gamma_l\}_{l=1}^L \in \mathbb{R}_+^L$ and guarantees maximum throughput can achieve asymptotic minimum $\bar{D}_{\text{avg}}$ and $\bar{D}_{\text{max}}$ with arbitrary initial queue backlog.
\end{theorem}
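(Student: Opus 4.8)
The plan is to reduce the claim to the static result of Theorem~\ref{thm:clos-static-flow}. Concretely, I will show that under the queue-proportional policy \eqref{eqn:clos-queue-based} the per-node ingress and egress rates converge, as $t\to\infty$, to constants that satisfy the static min-delay condition \eqref{eqn:clos-static-flow} for some $\tilde{\boldsymbol{\gamma}}$; then---exactly as in the proof of Theorem~\ref{thm:asymp_queue_Nx1}, where convergence of the rate ratios was used to conclude that $\bar{D}_{\text{avg}}$ and $\bar{D}_{\text{max}}$ approach their minima---I will conclude that the metrics tend to $\frac{T}{2}\max\{\sum_i\lambda_i/\sum_j\mu_j-1,0\}$. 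So the whole argument is an asymptotic-convergence statement about the dynamics \eqref{eqn:ODE_multi_stage} driven by \eqref{eqn:clos-queue-based}, generalizing the single-hop cascade behind Theorems~\ref{thm:asymp_queue_Nx1} and~\ref{thm:optimum_queue_NxM} to $L$ layers.

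The core is a layer-by-layer induction on $l=1,\dots,L$. Under \eqref{eqn:clos-queue-based}, every node $n_i^l$ with $l\le L-1$ has total egress rate exactly $q_{n_i^l}(t)/\gamma_l$, so its queue obeys the scalar linear ODE $\dot{q}_{n_i^l}(t)=\Lambda_i^l(t)-q_{n_i^l}(t)/\gamma_l$, where $\Lambda_i^l(t):=\sum_k g_{n_k^{l-1},n_i^l}(t)$ is its total ingress rate and $\Lambda_i^1\equiv\lambda_i$. The inductive hypothesis is that the per-node ingress and egress rates among layers $1,\dots,l$ converge; since a stable scalar linear ODE maps an asymptotically constant, nonnegative forcing term to an asymptotically constant state, $q_{n_i^l}(t)\to\gamma_l\Lambda_i^l$ (in particular $q_{n_i^l}$ stays nonnegative and bounded), so the egress rate of $n_i^l$ converges to $\Lambda_i^l$; since each node's egress is split among its downstream neighbours according to (say, static or otherwise convergent) routing proportions, the per-node ingress rate $\Lambda_j^{l+1}(t)$ converges as well, closing the induction. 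At the egress layer, \eqref{eqn:clos-queue-based} forces $\Lambda_i^L(t)\to\gamma_L\mu_i$, and consistency with the flow actually reaching layer $L$ (which tends to $\sum_i\lambda_i>\sum_j\mu_j$) pins $\gamma_L=\sum_i\lambda_i/\sum_j\mu_j$; then $\dot{q}_{n_i^L}(t)\to(\gamma_L-1)\mu_i>0$, so $q_{n_i^L}(t)$ grows linearly at rate $(\gamma_L-1)\mu_i$ and the delay it induces, $q_{n_i^L}(t)/\mu_i$, grows at the common rate $\gamma_L-1$ for every $i$.

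Given this, I would verify that the limiting configuration satisfies \eqref{eqn:clos-static-flow}: because each middle-node queue converges, its limiting ingress and egress rates coincide, so the ratio in \eqref{eqn:clos-static-flow} tends to $1$ uniformly across layers $1,\dots,L-1$; at the egress layer the ratio tends to $\gamma_L$ uniformly; and since the egress nodes serve at full rate $\mu_i$ while the flow delivered to them exceeds $\sum_j\mu_j$, maximum throughput holds. Hence the limiting rates satisfy \eqref{eqn:clos-static-flow} with parameters $(\tilde\gamma_1,\dots,\tilde\gamma_{L-1},\tilde\gamma_L)=(1,\dots,1,\gamma_L)$, Theorem~\ref{thm:clos-static-flow} applies, and the delay of a packet reduces asymptotically to its egress-node contribution $(\gamma_L-1)(t-t_0)+o(t)$ (the bounded middle-layer queues add only an $O(1)$ term and the bounded middle-layer delays move the egress arrival time by only $O(1)$); time-averaging over $[t_0,t_0+T]$ then yields $\bar{D}_{\text{avg}}=\bar{D}_{\text{max}}\to\frac{T}{2}(\gamma_L-1)$ for every ingress node, which is the claimed minimum.

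The main obstacle is the cascaded convergence: the forcing term $\Lambda_i^l(t)$ is only asymptotically constant, so one must make precise (via an $\varepsilon$-$t$ estimate) that a stable scalar linear ODE forwards a convergent input to a convergent state, and then propagate this cleanly through all $L$ layers while simultaneously controlling the routing degrees of freedom---the split of each node's egress among its downstream links---so that the \emph{per-node} ingress totals converge, not merely the network-wide aggregate. A related subtlety is that for arbitrary initial backlog the constraints \eqref{eqn:clos-queue-based} are over-determined relative to the dynamics and need not be exactly satisfiable at every instant (e.g.\ the egress-layer constraint fixes an aggregate inflow that must be supplied by the time-varying egress of layer $L-1$), so one must argue the policy nonetheless drives the queue configuration toward consistency, and make precise---paralleling Theorems~\ref{thm:optimum_queue_Nx1}--\ref{thm:asymp_queue_Nx1}---the sense in which ``asymptotic minimum'' is meant, namely that the transients induced by $\mathbf{q}(t_0)$ decay and the long-run time-averaged delay matches the static optimum.
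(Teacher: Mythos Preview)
Your proposal is correct and follows the paper's overall strategy---the paper offers essentially no proof for this theorem beyond the remark that ``the proof idea of both theorems follows Theorem~\ref{thm:asymp_queue_Nx1}'', i.e., show that the dynamic rates converge to a configuration satisfying the static condition \eqref{eqn:clos-static-flow} and then invoke Theorem~\ref{thm:clos-static-flow}.

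Where you differ is in the mechanism of convergence. The paper's Theorem~\ref{thm:asymp_queue_Nx1} uses L'Hospital on ratios of unbounded queues; you instead observe that policy \eqref{eqn:clos-queue-based} fixes the absolute egress rate as $q_{n_i^l}/\gamma_l$, which turns each middle-layer queue into a stable scalar linear ODE $\dot q=\Lambda-q/\gamma_l$, so those queues stay \emph{bounded} and their egress rates converge to their (asymptotic) ingress rates, yielding $\tilde\gamma_1=\cdots=\tilde\gamma_{L-1}=1$ in the limiting static picture. This is not just a stylistic difference: the L'Hospital route of Theorem~\ref{thm:asymp_queue_Nx1} is not directly available here precisely because the intermediate queues do not diverge, so your ODE argument is the right tool for this particular policy formulation. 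It also makes transparent why all overload accumulates at the egress layer and why the limiting per-packet delay is $(\gamma_L-1)(t-t_0)+O(1)$.

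The caveats you raise---that \eqref{eqn:clos-queue-based} under-specifies the per-link split (only node-level egress totals and egress-layer ingress totals are fixed), and that the two lines of \eqref{eqn:clos-queue-based} can be instantaneously inconsistent for arbitrary initial backlog---are genuine and are not addressed in the paper either. Your handling (assume convergent routing proportions; interpret ``asymptotic minimum'' as transient decay to the static optimum) is consistent with the spirit of Theorems~\ref{thm:optimum_queue_Nx1}--\ref{thm:optimum_queue_NxM}, and is as much as the paper itself claims.
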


The proof idea of both theorems follows Theorem  \ref{thm:asymp_queue_Nx1}. Theorem \ref{thm:optimum_queue_NxM} is a special case of Theorem \ref{thm:clos-queue-based-policy} with $L=2$. Both demonstrate the idea of min-delay link rate control by maintaining the  egress rates of all the nodes in the same proportion to the current queue backlogs in these nodes at the same layer. 

In summary, we determined that the format of the min-delay queue-based policy is similar to the static policies. The above analysis reduces the optimization problem in dynamical systems to finding feasible solutions to an explicit set of queue-proportional constraints. We show below in Section \ref{subsec:practical} that the explicit form facilitates the co-optimization of queueing delay together with other metrics. 



\section{Performance Evaluation}
\label{sec:evaluation}

In this section, we evaluate the proposed min-delay policies in overloaded networks. We compare the performance of $\bar{D}_{\text{avg}}$ and $\bar{D}_{\text{max}}$ of our proposed methods with (i) the \emph{Max-link-rate} policy where all links are activated with rates equal to their capacities, and (ii) the \emph{Backpressure} policy that achieves optimal throughput and low latency \cite{georgiadis2006optimal}, which serves packets over a link $(i,j)$ with rate equal to its capacity if and only if node $i$ has longer queue backlog than node $j$. We use the following abbreviations in the evaluation: OPT for the proposed min-delay policies, MAX for the max-link-rate policy, and BP for the backpressure policy.

We validate that our proposed methods achieve minimum $\bar{D}_{\text{avg}}$ and $\bar{D}_{\text{max}}$ in various network settings: (i) different topologies of both single-hop and multi-stage networks; (ii) different values of $\boldsymbol{\lambda}$ and $\boldsymbol{\mu}$; (iii) different capacity vectors $\mathbf{c}$ (single-hop networks only). 
We consider multiple network instances with randomly sampled values of the above parameters, and measure the empirical cumulative distribution functions (CDFs) of $\bar{D}_{\text{avg}}$ and $\bar{D}_{\text{max}}$. We solely present the results for the queue-based policies, where the static policies result in similar performance. Moreover, packets are transmitted in discrete time intervals during simulation, and the results demonstrate that the min-delay property of our proposed policies based on the continuous fluid model holds under discrete transmission.

\subsection{$N\times 1$ Networks}

We evaluate the delay performance over $32\times 1$ single-hop networks. We consider $500$ different combinations of parameter settings sampled based on the following rules: (i) The arrival rate $\lambda_i$ to each ingress node $s_i$ is uniformly distributed in $[12,20]$; (ii) The service rate of the shared egress node is $0.4\times \sum_{i=1}^{32}\lambda_i$ so that the network is overloaded; (iii) Link capacities are uniformly distributed within $[20,35]$ to represent the case of sufficient capacity where $\lambda_i \leq c_i$ for each node $s_i$, and $[5,15]$ to represent the case of limited capacity where $\lambda_i$ may exceed $c_i$.
We round any rational number to an integer to characterize discrete packet transmission. We consider the initial queue length in each node to be a random integer within $[101,300]$, and we consider the $\bar{D}_{\text{avg}}$ and $\bar{D}_{\text{max}}$ of packets that arrive within the first $200$ time units that the network is overloaded. 

Fig.~\ref{fig:32x1_suff} illustrates the CDF curves of both $\bar{D}_{\text{avg}}$ and $\bar{D}_{\text{max}}$ under sufficient capacity. The curves of the proposed min-delay policy and the max-link-rate policy 
highly overlap, which matches the result shown in Fig.~\ref{fig:result-2x1-switch}(a). Their $\bar{D}_{\text{avg}}$ and $\bar{D}_{\text{max}}$ are lower than the backpressure policy: For $\bar{D}_{\text{avg}}$, the backpressure policy induces $5\%$ higher delay on average and a maximum of $12\%$ higher delay over the $500$ tested samples; For $\bar{D}_{\text{max}}$, the backpressure induces $61\%$ higher delay on average and a maximum of $159\%$ higher delay over the $500$ tested samples. 
Fig.~\ref{fig:32x1_limited} illustrates the results under limited capacity. A major contrast to the sufficient capacity case is the significantly poor delay performance of the max-link-rate policy, which echoes Theorem \ref{thm:Nx1} and Fig.~\ref{fig:result-2x1-switch}. We find that the $\bar{D}_{\text{avg}}$ and $\bar{D}_{\text{max}}$ of the max-link-rate policy are $18\%$ and $123\%$ higher than the proposed min-delay policy respectively on average.

\begin{figure}[!htbp]
\centering
\includegraphics[width=1.0\linewidth]{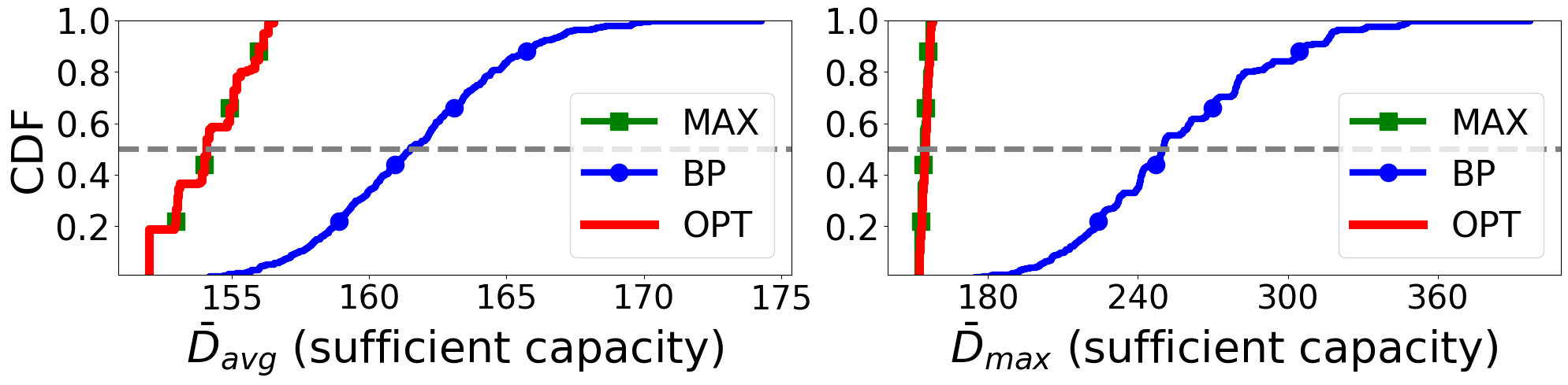}
\caption{CDFs of $\bar{D}_{\text{avg}}$ and $\bar{D}_{\text{max}}$ in $32\times 1$ single-hop networks with sufficient capacity (OPT and MAX are overlapped)}
\label{fig:32x1_suff}
\end{figure}


\begin{figure}[!htbp]
\centering
\includegraphics[width=1.0\linewidth]{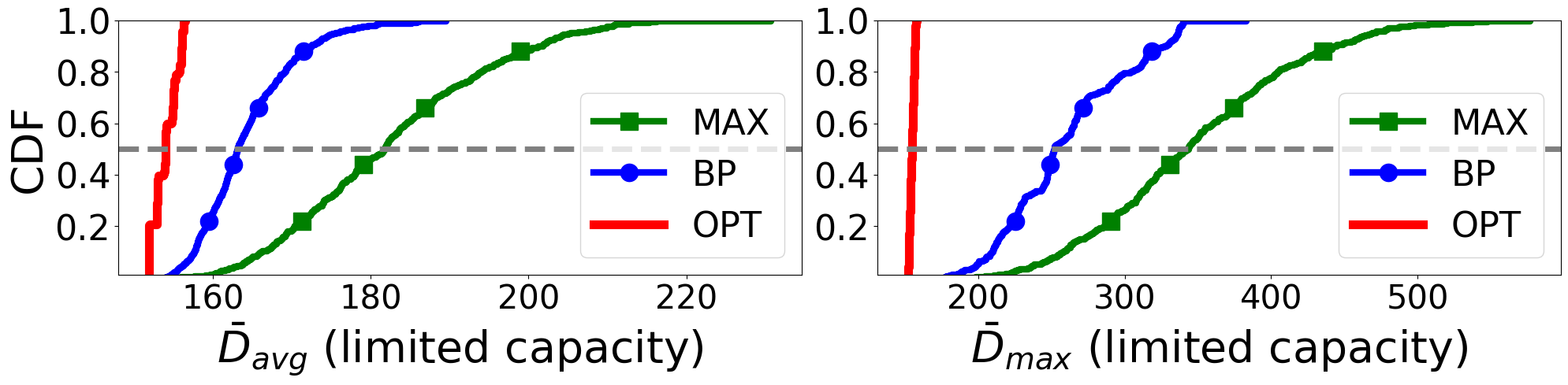}
\caption{CDFs of $\bar{D}_{\text{avg}}$ and $\bar{D}_{\text{max}}$ in $32\times 1$ single-hop networks with limited capacity}
\label{fig:32x1_limited}
\end{figure}


\subsection{General Single-Hop Networks}

We evaluate the delay performance over $32\times 16$ single-hop networks. We consider $500$ different combinations of parameter settings based on the following rules: (i) The arrival rate $\lambda_i$ to each ingress node $s_i$ is uniformly distributed in $[60,100]$; (ii) The service rate $\mu_j$ of each egress node $d_j$ is determined as $0.4\times \alpha_j\times \sum_{i=1}^{32} \lambda_i$ where $\alpha_j$ is a randomly picked weight for $d_j$ with $\sum_{j=1}^{16} \alpha_j = 1$. We consider sufficient link capacity for each pair of ingress and egress nodes. We evaluate the $\bar{D}_{\text{avg}}$ and $\bar{D}_{\text{max}}$ of packets that arrive within the first $50$ time units.


Fig.~\ref{fig:32x16} illustrates the CDF curves of both $\bar{D}_{\text{avg}}$ and $\bar{D}_{\text{max}}$. Table \ref{tab:statistics_performance_gap_single_hop} summarizes the mean and maximum ratio, in terms of both $\bar{D}_{\text{avg}}$ and $\bar{D}_{\text{max}}$, between the backpressure policy, or max-link-rate policy and the min-delay policy. Results show significant reduction of $\bar{D}_{\text{avg}}$ and $\bar{D}_{\text{max}}$ under policies that follow \eqref{eqn:queue_policy_NxM}: On average the backpressure policy incurs $32\%$ higher $\bar{D}_{\text{avg}}$ and $111\%$ higher $\bar{D}_{\text{max}}$, while these metrics for the max-link-rate policy are $258\%$ and $1166\%$ higher than the min-delay policy. We also observe that the worst case for both the backpressure and the max-link-rate policy leads to more than $10\times$ delay compared with the min-delay policies in terms of both $\bar{D}_{\text{avg}}$ and $\bar{D}_{\text{max}}$. We further find that unlike $N\times 1$ networks, the max-link-rate policy no longer minimizes the delay in general single-hop networks in spite of sufficient capacity.
Moreover, we demonstrate that the $\bar{D}_{\text{avg}}$ and $\bar{D}_{\text{max}}$ of the proposed min-delay policy keep stable among all the tested cases. This matches Theorem \ref{thm:optimum_queue_NxM} where the minimum $\bar{D}_{\text{avg}}$ and $\bar{D}_{\text{max}}$ depend only on the ratio  $\sum_{i=1}^{N_S} \lambda_i / \sum_{j=1}^{N_D} \mu_j$ which are the same among all the tested samples\footnote{There are mild fluctuations over different tested samples due to rounding the real numbers to integers.}.

\begin{figure}[!htbp]
\centering
\includegraphics[width=1.0\linewidth]{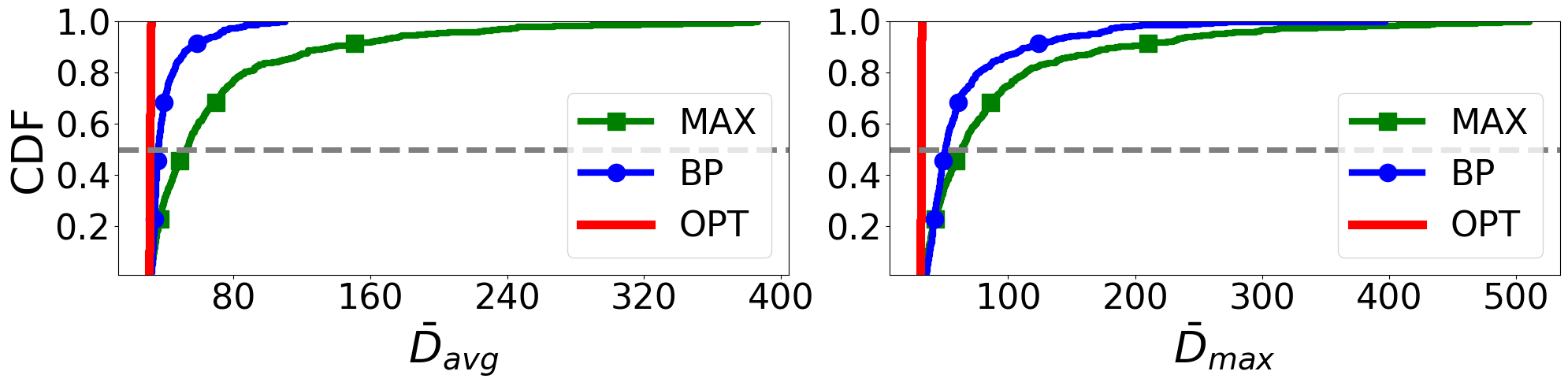}
\caption{CDFs of $\bar{D}_{\text{avg}}$ and $\bar{D}_{\text{max}}$ in $32\times 16$ single-hop networks}
\label{fig:32x16}
\end{figure}

\begin{table}[!htbp]
\centering
\begin{adjustbox}{width=\columnwidth}
\begin{tabularx}{0.95\linewidth}
{c|*{2}{>{\centering\arraybackslash}X}|*{2}{>{\centering\arraybackslash}X}}
    \toprule
     & \multicolumn{2}{c|}{$\bar{D}_{\text{avg}}$} & \multicolumn{2}{c}{$\bar{D}_{\text{max}}$} \\
     & BP/OPT & MAX/OPT & BP/OPT & MAX/OPT \\
    \midrule
    Mean &    1.32 &    2.32 &   2.11 &    2.93 \\
    Max  &    3.58 &   12.38 &  12.66 &   16.29 \\
    \bottomrule
    \end{tabularx}
\end{adjustbox}
\caption{Mean and maximum ratios between the two policies for comparison and the min-delay policy in terms of $\bar{D}_{\text{avg}}$ and $\bar{D}_{\text{max}}$ in $32\times 16$ single-hop networks}
\label{tab:statistics_performance_gap_single_hop}
\end{table}


\subsection{General Multi-Stage Networks}

We further evaluate the performance of the tested policies over general multi-stage networks with $L$ layers. We consider 7 multi-stage network structures listed in Table \ref{table:multi_stage_complete}, including different numbers of layers and fan-in-fan-out topologies. We consider full connection between adjacent layers. We consider $500$ different combinations of parameter settings based on the following rules: (i) The arrival rate $\lambda_i$ to each node $n_i^1$ at the ingress layer is uniformly distributed in $[30,50]$; (ii) The service rate $\mu_j$ of each node $n_j^L$ at the egress layer is determined as $0.4 \times \alpha_j\times \sum_{i=1}^{|\mathcal{V}_1|} \lambda_i$ where $\alpha_j$ is a randomly picked weight for $n_j^L$ with $\sum_{j=1}^{|\mathcal{V}_L|} \alpha_j = 1$. 
We consider sufficient capacity over each pair of nodes at adjacent layers. We evaluate the $\bar{D}_{\text{avg}}$ and $\bar{D}_{\text{max}}$ of packets that arrive within the first $50$ time units.

Table \ref{table:multi_stage_complete} lists the evaluation results over all the tested multi-stage networks.
Columns 1 and 2 show the mean and maximum ratio between the backpressure policy, or the max-link-rate policy and the proposed min-delay policy with respect to $\bar{D}_{\text{avg}}$ and $\bar{D}_{\text{max}}$ respectively. Column 3 shows $\bar{D}_{\text{max}}/\bar{D}_{\text{avg}}$ under the backpressure policy and the max-link-rate policy, which reflects the level of delay fairness of packets injected into different ingress nodes. Note that $\bar{D}_{\text{max}}/\bar{D}_{\text{avg}}=1$ under the min-delay policy as given in Theorem \ref{thm:clos-static-flow}. {The key takeaway is that both the backpressure and max-link-rate policies lead to at least $1.3\times$ $\bar{D}_{\text{avg}}$ and $\bar{D}_{\text{max}}$ of the min-delay policy on average in all $7$ tested structures,}  demonstrating the delay reduction of the proposed policy in general multi-stage networks. Moreover the mean imbalance ratios $\bar{D}_{\text{max}}/\bar{D}_{\text{avg}}$ for both the backpressure and max-link-rate policy are up to $20\%$ higher than the optimal case over the tested instances. We visualize the CDFs of the $\bar{D}_{\text{avg}}$ and $\bar{D}_{\text{max}}$ of two multi-stage networks: $16\times 12 \times 8 \times 6$ in Fig.~\ref{fig:8x6x4x3} and $15\times 12 \times 9 \times 12 \times 15$ in Fig.~\ref{fig:5x4x3x4x5} 
for detailed characterization of their distributions. We can observe the backpressure and the max-link-rate policy have different performance in these two topologies, while our proposed policy achieves minimum delay in all the tested instances.

\begin{table}[!htbp]
\begin{adjustbox}{width=\columnwidth}
\begin{tabular}{c|c|cc|cc|cc}
\toprule
     \multirow{2}{*}{\textbf{Topology}}           &  \multirow{2}{*}{\textbf{Policy}}   & \multicolumn{2}{c|}{Ratio $(\bar{D}_{\text{avg}})$} & \multicolumn{2}{c|}{Ratio $(\bar{D}_{\text{max}})$} & \multicolumn{2}{c}{$\bar{D}_{\text{max}} / \bar{D}_{\text{avg}}$} \\
               &     &               Mean &   Max &               Mean &   Max &                          Mean &   Max \\
\midrule
\multirow{2}{*}{16x12x16} 
               & BP &               1.46 &  2.76 &               1.71 &  3.17 &                          1.17 &  1.63 \\
               & MAX &               1.46 &  2.16 &               1.51 &  2.23 &                          1.03 &  1.11 \\
               \hline
\multirow{2}{*}{12x16x12} 
               & BP &               1.77 &  3.06 &               2.14 &  4.55 &                          1.20 &  1.65 \\
               & MAX &               1.47 &  2.29 &               1.52 &  2.43 &                          1.03 &  1.08 \\
               \hline
\multirow{2}{*}{16x12x8x6} 
               & BP &               1.33 &  1.85 &               1.42 &  2.02 &                          1.07 &  1.21 \\
               & MAX &               1.50 &  3.53 &               1.53 &  3.62 &                          1.02 &  1.07 \\
               \hline
\multirow{2}{*}{6x8x12x16}
               & BP &               1.31 &  2.41 &               1.34 &  2.63 &                          1.02 &  1.11 \\
               & MAX &               1.35 &  2.13 &               1.38 &  2.16 &                          1.02 &  1.06 \\
               \hline
\multirow{2}{*}{15x12x9x12x15} 
               & BP &               1.34 &  1.93 &               1.37 &  2.11 &                          1.02 &  1.15 \\
               & MAX &               1.49 &  2.28 &               1.52 &  2.34 &                          1.02 &  1.07 \\
               \hline
\multirow{2}{*}{9x12x15x12x9} 
               & BP &               1.53 &  2.70 &               1.56 &  2.71 &                          1.02 &  1.09 \\
               & MAX &               1.45 &  2.74 &               1.47 &  2.76 &                          1.01 &  1.07 \\
               \hline
\multirow{2}{*}{12x12x12x12x12} 
               & BP &               1.41 &  2.49 &               1.44 &  2.72 &                          1.02 &  1.09 \\
               & MAX &               1.51 &  2.64 &               1.54 &  2.70 &                          1.02 &  1.07 \\
\bottomrule
\end{tabular}
\end{adjustbox}
\caption{Mean and maximum ratios between each of the two tested policies for comparison and the proposed min-delay policy in terms of $\bar{D}_{\text{avg}}$ (Column 1) and $\bar{D}_{\text{max}}$ (Column 2), and the ratios $\bar{D}_{\text{max}}/\bar{D}_{\text{avg}}$ of the tested policies themselves reflecting the delay imbalance of packets injected into different ingress nodes (Column 3), over $7$ multi-stage topologies.}
\label{table:multi_stage_complete}
\end{table}

\begin{figure}[!htbp]
\centering
\includegraphics[width=1.0\linewidth]{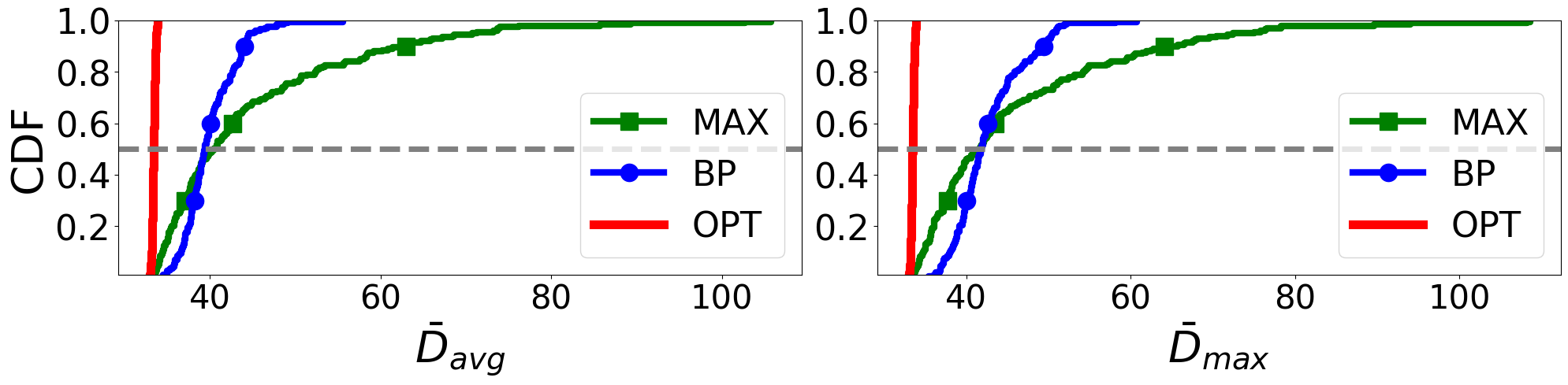}
\caption{CDFs of $\bar{D}_{\text{avg}}$ and $\bar{D}_{\text{max}}$ in $16\times 12 \times 8 \times 6$ networks}
\label{fig:8x6x4x3}
\end{figure}

\begin{figure}[!htbp]
\centering
\includegraphics[width=1.0\linewidth]{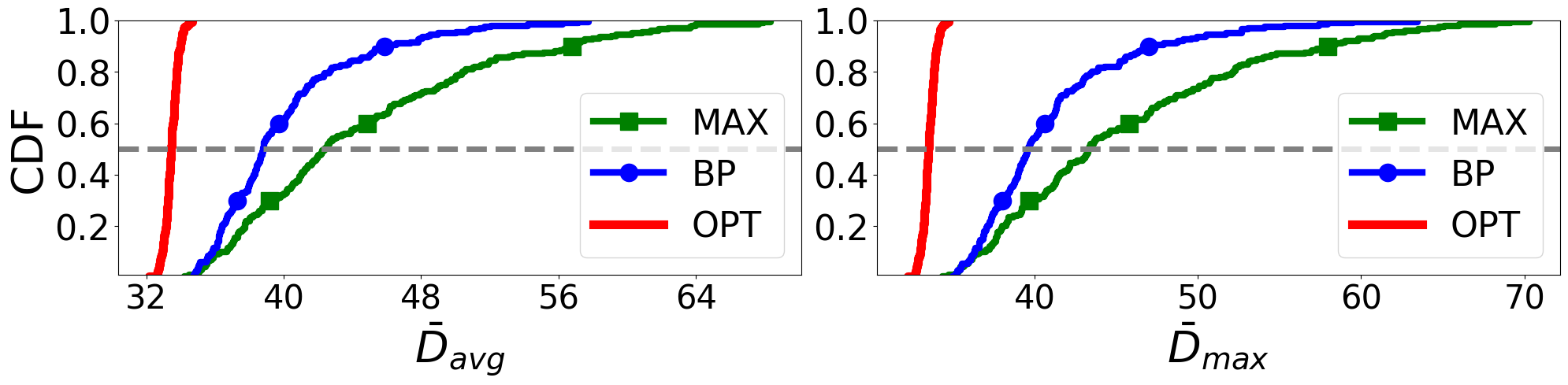}
\caption{CDFs of $\bar{D}_{\text{avg}}$ and $\bar{D}_{\text{max}}$ in $15\times 12 \times 9 \times 12 \times 15$ networks}
\label{fig:5x4x3x4x5}
\end{figure}

\section{Extensions and Discussions}
\label{sec:extension}

In this section, we extend the main results on min-delay policy design to a wider range of scenarios both in practice and in theory. For practice, we demonstrate that the explicit form of the min-delay condition \eqref{eqn:clos-static-flow} (i) facilitates co-optimization of delay and other metrics in data center networks with routing constraints, (ii) provides high flexibility in selecting $\boldsymbol{\lambda}$ based on different objectives in network control, and (iii) guarantees feasibility when adjacent layers are fully connected with sufficient capacity. For theory, we derive a more relaxed min-delay condition for tree data center structures than the general condition \eqref{eqn:clos-static-flow}, and discuss our conjecture on the sufficient and necessary condition for delay minimization in multi-stage networks. We consider static policies in the following discussion for brevity, where the queue-based policies follow the same idea.

\subsection{Practical Extensions}
\label{subsec:practical}

\subsubsection{Multi-Objective Optimization with Routing Constraints}

We have shown that a transmission rate vector $\mathbf{g}$ can achieve global minimum queueing delay if it is a feasible solution subject to the constraints \eqref{eqn:clos-static-flow}. This result implies that we can co-optimize delay with some other metric $f(\mathbf{g})$ under the min-delay constraints \eqref{eqn:clos-static-flow} as follows.

\begin{equation}
\label{eqn:multi-objective}
\min_{\mathbf{g}} \quad f(\mathbf{g}) \qquad 
\textrm{s.t.} \quad
\mathbf{g} \in \eqref{eqn:clos-static-flow}.
\end{equation}

As long as $f(\mathbf{g})$ is a convex function, \eqref{eqn:multi-objective} is a convex optimization problem that can be efficiently solved since the min-delay constraints \eqref{eqn:clos-static-flow} are linear given any $\boldsymbol{\gamma}\in \mathbb{R}_+$. Common examples of $f(\mathbf{g})$ include (i) minimizing total required bandwidth $f(\mathbf{g}) = \sum_{(i,j)\in \mathcal{E}} g_{ij}$ \cite{zhang2005designing}; (ii) minimizing the maximum link rate $f(\mathbf{g}) = \max_{(i,j)\in \mathcal{E}} g_{ij} / c_{ij}$ and average link rate $f(\mathbf{g}) =  |\mathcal{E}|^{-1} \sum_{(i,j)\in \mathcal{E}} \left(g_{ij} / c_{ij} \right)$ \cite{zhang2021gemini}; (iii) minimizing the maximum queue overload rate $f(\mathbf{g}) = \max_{i\in \mathcal{V}} \sum_{k:(k,i)\in \mathcal{E}} g_{ki} - \sum_{j:(i,j)\in \mathcal{E}} g_{ij}$ \cite{wu2022overload}; (iv) minimizing the maximum total queue growth at a layer $f(\mathbf{g}) = \max_{l=1,\dots,L} \sum_{i\in \mathcal{V}_l} \left(\sum_{k:(k,i)\in \mathcal{E}} g_{ki} - \sum_{j:(i,j)\in \mathcal{E}} g_{ij}\right)$. The high generality of $f(\mathbf{g})$ demonstrates the wide application of \eqref{eqn:multi-objective} for multi-objective optimization together with delay minimization in data center networks. 

We can further add constraints on the transmission rate vector $\mathbf{g}$ in \eqref{eqn:multi-objective} in order to capture the restrictions on link rate control. Note that introducing any convex constraint on $\mathbf{g}$ does not violate the convex property of \eqref{eqn:multi-objective}. We give some common examples. (i) Packets cannot go through link $(i,j)$: $g_{ij} = 0$; (ii) A maximum ratio $\beta$ of packets can be transmitted from node $i$ to $j$: $g_{ij} \leq \beta \left(\sum_{k: (k,i)\in \mathcal{E}} g_{ki}\right)$; (iii) The maximum link utilization should not be higher than a threshold $\theta$ in order to avoid link overflow: $\max_{(i,j)\in \mathcal{E}} g_{ij} / c_{ij} \leq \theta$.

\subsubsection{Choices of $\boldsymbol{\gamma}$}

Theorem \ref{thm:clos-static-flow} shows that both of the minimum $\bar{D}_{\text{avg}}$ and $\bar{D}_{\text{max}}$ are independent of $\boldsymbol{\gamma}$, however we show that $\boldsymbol{\gamma}$ affects the overload levels at different nodes and layers, where a lower value of $\gamma_l$ means a higher proportion of packets ingress to layer $l$ will be backlogged. 
We give an example on choosing $\boldsymbol{\gamma}$ to balance the total queue growth rates among all the $L$ layers. Denote the total arrival rates at layer 1 by $\lambda_{\text{sum}}:=\sum_{i=1}^{|\mathcal{V}_1|}\lambda_i$ and the total service rates at layer $L$ by $\mu_{\text{sum}}:=\sum_{j=1}^{|\mathcal{V}_L|} \mu_j$. Then the state we pursue is that the total queue growth rate at each layer is equal to $(\lambda_{\text{sum}} - \mu_{\text{sum}}) / L$. The corresponding $\boldsymbol{\gamma}$ that reaches this state is 
\begin{equation}
\label{eqn:gamma_choice}
\gamma_l = \frac{\lambda_{\text{sum}}-\frac{l-1}{L}(\lambda_{\text{sum}} - \mu_{\text{sum}})}{\lambda_{\text{sum}}-\frac{l}{L}(\lambda_{\text{sum}} - \mu_{\text{sum}})}, ~l = 1,\dots,L.
\end{equation}
Note that we can easily derive a weighted version of \eqref{eqn:gamma_choice} based on the node buffers that can balance the buffer utilization at each layer \cite{ addanki2022abm, roy2015inside, zhang2017high}. The main point we convey from this example is the high flexibility of choosing $\boldsymbol{\gamma}$ according to different objectives in network control.

\subsubsection{Feasibility}

We point out that it is difficult to characterize the conditions on the existence of a feasible transmission rate vector $\mathbf{g}$ subject to \eqref{eqn:clos-static-flow} in multi-stage networks under general network topology and link capacities. Instead, we show in Proposition \ref{prop:existence} the existence of a solution under any given $\boldsymbol{\gamma}\in \mathbb{R}_+$, considering full connection between adjacent layers and sufficient link capacities so that the minimum network cut is $\sum_{i=1}^{|\mathcal{V}_L|} \mu_j$. 
We leave the discussion of general topologies and link capacities to future work.

\begin{proposition}
\label{prop:existence}
Consider an $L$-layer multi-stage network with full connections and sufficient link capacities between adjacent layers. Given the arrival rate vector $\boldsymbol{\lambda}$ and the service rate vector $\boldsymbol{\mu}$, there exists a feasible $\mathbf{g}$ that satisfies the min-delay conditions \eqref{eqn:clos-static-flow} under any given $\boldsymbol{\gamma}\in \mathbb{R}_+$.
\end{proposition}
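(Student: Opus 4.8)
The plan is to construct a feasible $\mathbf{g}$ explicitly, layer by layer, by first determining the total ingress/egress rate that each node must carry and then distributing these totals evenly across the fully connected links. Fix any $\boldsymbol{\gamma}=\{\gamma_l\}_{l=1}^L\in\mathbb{R}_+^L$. From the first constraint of \eqref{eqn:clos-static-flow}, the total egress rate that node $n_i^1$ must send into layer $2$ is forced to be $\lambda_i/\gamma_1$. Writing $R_i^{l}$ for the total egress rate leaving node $n_i^l$ toward layer $l+1$, the middle constraints of \eqref{eqn:clos-static-flow} say that the total ingress rate into $n_i^l$ equals $\gamma_l R_i^l$; since that ingress total must also equal what the previous layer routes into $n_i^l$, flow conservation over the network forces a consistent set of layer totals. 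Concretely, the aggregate rate crossing the cut between layers $l$ and $l+1$ is $S_l := \big(\sum_{i}\lambda_i\big)\big/\prod_{m=1}^{l}\gamma_m$ after recursively substituting; one checks that the egress-layer constraint $\sum_k g_{n_k^{L-1},n_i^L} = \gamma_L\mu_i$ is consistent with $S_{L-1}$ precisely when the $\gamma_l$ are allowed to be free (the final constraint determines $\mu_{\text{sum}}$ in terms of $\lambda_{\text{sum}}$ and $\prod\gamma_m$, but since we only need \emph{existence} of $\mathbf{g}$ for the given $\boldsymbol\lambda,\boldsymbol\mu$, we treat $\gamma_L$ — or equivalently rescale — so that this holds; alternatively, note that $\sum_i \gamma_L \mu_i$ need not equal $S_{L-1}$ because middle-layer buffers absorb the overload, so the egress constraint is a separate equation $\sum_k g_{n_k^{L-1},n_i^L}=\gamma_L\mu_i$ that we satisfy by choosing the last layer's link rates independently).

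The key construction step is then routine averaging: once the required egress total $R_i^l$ of every node $n_i^l$ and the required ingress total $T_j^{l+1}$ of every node $n_j^{l+1}$ are pinned down, set
\begin{equation}
\label{eqn:feasible_construction}
g_{n_i^l,n_j^{l+1}} \;=\; \frac{R_i^l\,T_j^{l+1}}{\sum_{k} R_k^l}, \qquad \forall\, n_i^l\in\mathcal{V}_l,\ n_j^{l+1}\in\mathcal{V}_{l+1},
\end{equation}
which is the natural product-form (biproportional) allocation. Because layer $l$ and layer $l+1$ are fully connected, every such $g_{n_i^l,n_j^{l+1}}$ is a valid link rate; it is nonnegative, and by sufficient capacity (minimum cut $=\mu_{\text{sum}}$, hence each individual link may carry up to the needed amount) it respects $g_{n_i^l,n_j^{l+1}}\le c_{n_i^l,n_j^{l+1}}$. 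Summing \eqref{eqn:feasible_construction} over $j$ recovers $\sum_j g_{n_i^l,n_j^{l+1}} = R_i^l$ (since $\sum_j T_j^{l+1}=\sum_k R_k^l$ by the cut-total bookkeeping of the previous paragraph), and summing over $i$ recovers $\sum_i g_{n_i^l,n_j^{l+1}} = T_j^{l+1}$; these are exactly the node-total identities that the constraints \eqref{eqn:clos-static-flow} impose, so the constructed $\mathbf{g}$ satisfies all of them. Finally, maximum throughput holds because the egress layer runs work-conservingly at rate $\mu_j$ and the first-layer constraint pushes all of $\lambda_i$ into the network, so the realized throughput equals $\mu_{\text{sum}}$, the network capacity.

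The main obstacle is the bookkeeping that makes the recursively-defined layer totals mutually consistent: one must verify that the per-node ingress totals $T_j^{l+1}$ can be chosen so that $\sum_j T_j^{l+1}$ equals the cut total $S_l$ handed down from layer $l$ \emph{and} so that the next layer's egress totals $R_j^{l+1}=T_j^{l+1}/\gamma_{l+1}$ are themselves realizable at layer $l+2$ — and, at the last step, that the egress constraint tying $R_j^{L-1}$-derived ingress totals to $\gamma_L\mu_j$ is compatible. The cleanest way to handle this is to distribute each cut total $S_l$ across the $N_{l+1}$ nodes in proportion to a fixed reference weight (e.g.\ proportional to $\mu_j$ for the egress layer and uniformly, or proportional to any positive vector, for the middle layers), which guarantees all totals are strictly positive and the sums telescope correctly; verifying that such a weight choice simultaneously meets the fully-connected-capacity bound is then immediate from sufficiency of capacity. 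I expect this consistency argument, rather than the averaging step \eqref{eqn:feasible_construction}, to be where the proof needs the most care.
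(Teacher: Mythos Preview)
Your construction is essentially the paper's: both determine the per-node ingress/egress totals from $\boldsymbol{\gamma}$ layer by layer and then split those totals across the fully connected links via the same product-form (biproportional) allocation $g_{n_i^l,n_j^{l+1}}\propto R_i^l\,T_j^{l+1}$. One caveat: your parenthetical ``alternative'' that middle-layer buffers absorb the mismatch between $S_{L-1}$ and $\gamma_L\mu_{\text{sum}}$ is incorrect, since the rate identities in \eqref{eqn:clos-static-flow} force the cut totals to telescope exactly and hence impose $\prod_{l=1}^L\gamma_l=\lambda_{\text{sum}}/\mu_{\text{sum}}$; your first resolution (treating $\gamma_L$ as determined by the others) is the right one, and indeed the paper's own proof quietly begins with ``Given a feasible $\boldsymbol{\gamma}$'' rather than an arbitrary one.
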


\begin{proof}
Given a feasible $\boldsymbol{\gamma}$, we can construct a feasible solution to \eqref{eqn:clos-static-flow} as follows based on the full connection and sufficient capacity constraints. For the $l$-th layer $(l < L)$, we first set its total egress rates as $\sum_{n_i^l \in \mathcal{V}_l} \sum_{n_j^{l+1}: (n_i^l,n_j^{l+1})\in\mathcal{E}} g_{n_i^l, n_j^{l+1}}= \left(\sum_{i=1}^{|\mathcal{V}_1|} \lambda_i\right)\prod_{l^{\prime}=1}^{l} \gamma_l$, and we randomly set a positive value for each $n_i^{l} \in \mathcal{V}_l$, denoted by $g_{n_i^l}^{E}$, which represents the egress rate of node $n_i^{l}$, subject to $\sum_{n_i^l\in \mathcal{V}_l} g_{n_i^l}^{E} = \sum_{n_i^l \in \mathcal{V}_l} \sum_{n_j^{l+1}: (n_i^l,n_j^{l+1})\in\mathcal{E}} g_{n_i^l, n_j^{l+1}}$. We can now figure out a feasible solution that satisfies \eqref{eqn:clos-static-flow} by setting 
$$
g_{n_i^l, n_j^{l+1}} = \gamma_l \times  g_{n_i^l}^{I} \times \left(g_{n_j^{l+1}}^{E} / \sum_{k=1}^{|\mathcal{V}_{l+1}|} g_{n_k^{l+1}}^{E}\right)
$$ 
for $\forall n_i^l\in \mathcal{V}_l$ and $\forall n_j^{l+1}\in \mathcal{V}_{l+1}$ iteratively from layer 1 to $L-1$ to calculate $g_{n_i^l}^I$, the ingress to each node $n_i^l$, where we define $g_{n_i^{1}}^{I} := \lambda_i$ for each of the ingress nodes $n_i^{1}$ and $g_{n_i^{l}}^{I} := \sum_{n_k^{l-1}: (n_k^{l-1}, n_i^l) \in \mathcal{E}} g_{n_k^{l-1}, n_i^l}$.
\end{proof}

\textbf{Remark:} The existence of feasible solutions to \eqref{eqn:clos-static-flow} itself does not indicate minimum delay, which requires achieving maximum throughput in the meantime.

\subsection{Theoretical Extensions}
\label{subsec:theory}

\subsubsection{Tree Data Center Structure} A tree structure is a special case of the multi-stage network whose undirected topology is a tree. We visualize an example of a 3-layer tree in Fig.~\ref{fig:Paper_tree_example}, which contains $6$ ingress nodes and a single egress node. We focus on the fan-in structure where $|\mathcal{V}_{l+1}| \leq |\mathcal{V}_l|, \forall l=1,\dots,L-1$ below, while the discussion of general fan-out structures follows in a similar manner.
We derive a sufficient condition on the link rates to achieve minimum $\bar{D}_{\text{avg}}$ and $\bar{D}_{\text{max}}$ in Theorem \ref{thm:fat-tree} based on the \emph{parent source set} of an  node $n_i^l$ at a layer $l$, denoted by $PSS[n_i^l]$, which is defined as all the nodes at the ingress layer that are the parents of $n_i^l$, and we set $PSS[n_i^1] = \{n_i^1\}$ for each ingress node $n_i^1$. In Fig.~\ref{fig:Paper_tree_example}, we have $PSS[n_1^2] = \{n_1^1, n_2^1\}$, $PSS[n_2^2] = \{n_3^1, n_4^1\}$, $PSS[n_3^2] = \{n_5^1, n_6^1\}$, and $PSS[n_1^3] = \mathcal{V}_S$.

\begin{figure}[!htbp]
\centering
\includegraphics[width=0.98\linewidth]{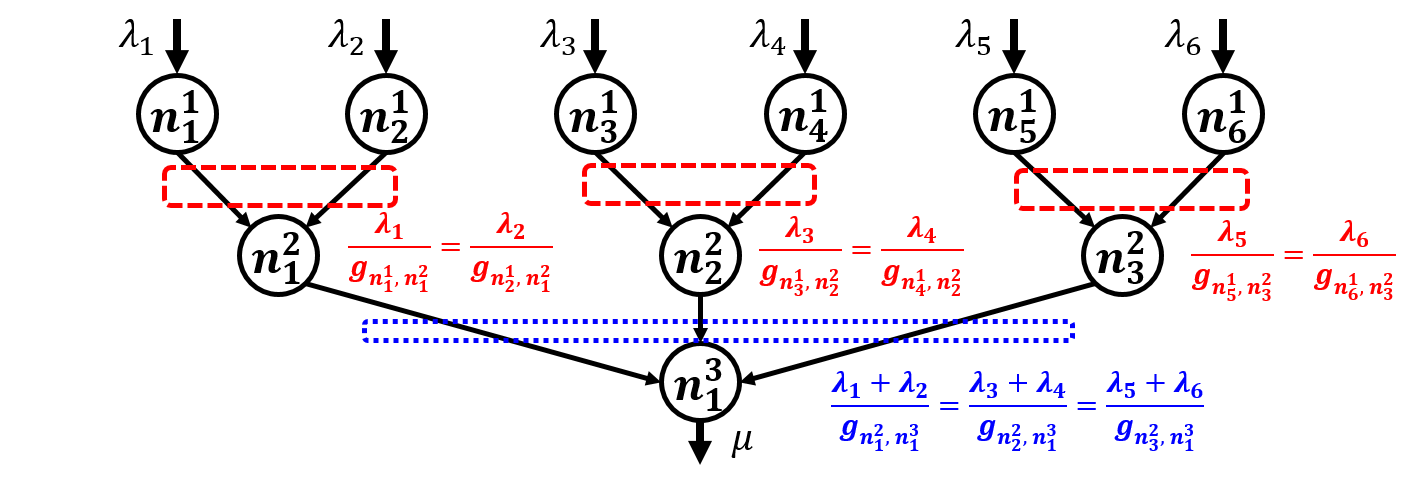}
\caption{A 3-layer fan-in tree data center example under min-delay transmission policies derived in Theorem \ref{thm:fat-tree}.}
\label{fig:Paper_tree_example}
\end{figure}

\begin{theorem}
\label{thm:fat-tree}
Consider an $L$-layer tree data center structure. Any transmission policy $\mathbf{g}$ that guarantees maximum throughput and in the meantime satisfies the following conditions achieves minimum $\bar{D}_{\text{avg}}$ and $\bar{D}_{\text{max}}$: 
For $\forall l \in 2,\dots,L$ and $\forall n_j^{l}\in \mathcal{V}_l$, we have  $\frac{\sum_{k \in PSS[n_i^{l-1}]} \lambda_k}{{g}_{n_i^{l-1}, n_j^l}} = \gamma_{n_j^l}$ for some $\gamma_{n_j^l}>0$ for $\forall n_i^{l-1} \in \mathcal{V}_{l-1}$ with $(n_i^{l-1}, n_j^l)\in \mathcal{E}$.
\end{theorem}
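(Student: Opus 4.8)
The plan is to reduce the tree case to repeated application of the $N\times 1$ analysis from Theorem~\ref{thm:Nx1}, exploiting the defining feature of a fan-in tree: every node $n_j^l$ has a uniquely determined set of descendants and a uniquely determined parent set $PSS[n_j^l]$, so the traffic that ever reaches $n_j^l$ is exactly $\sum_{k\in PSS[n_j^l]}\lambda_k$, independent of the policy. First I would verify the bookkeeping identity that under any throughput-maximizing policy the aggregate ingress rate to $n_j^l$ equals $\Lambda_{n_j^l}:=\sum_{k\in PSS[n_j^l]}\lambda_k$, and that the stated condition $\sum_{k\in PSS[n_i^{l-1}]}\lambda_k / g_{n_i^{l-1},n_j^l}=\gamma_{n_j^l}$ forces the egress rates of the children of $n_j^l$ (equivalently, the link rates feeding into $n_j^l$) to be in the same proportion $\gamma_{n_j^l}^{-1}$ to their own through-traffic $\Lambda_{n_i^{l-1}}$. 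This is precisely the rate-proportional pattern of Theorem~\ref{thm:Nx1} applied to the ``local $m\times 1$ gadget'' consisting of node $n_j^l$ and its children, with the children's through-traffic playing the role of the arrival rates $\lambda_i$ and $g_{n_j^l}$ (which by flow conservation equals $\gamma_{n_j^l}^{-1}\Lambda_{n_j^l}$, or $\mu_j$ at the egress layer) playing the role of $\mu$.

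Next I would set up the delay expansion. Because the tree topology gives each ingress node $n_i^1$ a single path to the (unique, in the fan-in-to-one case, or otherwise still unique per source) egress node, the path weights $w_p$ degenerate and $\bar D_i$ is a linear combination of the queue lengths $q_{n}(t)$ along the single path from $n_i^1$ to the egress, with coefficients determined by the link rates, exactly as in the generalization of \eqref{eqn:two-hop-delay} alluded to after that equation. Under the stated conditions, the queue at every node $n$ grows linearly with rate $(\gamma_{(\cdot)}-1)$-scaled versions of $\Lambda_n$, so $q_n(t)=(\Lambda_n^{\text{in}}-\Lambda_n^{\text{out}})(t-t_0)$, and one can telescope the per-node delay contributions along any source-to-egress path. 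The key algebraic step is that the rate-proportionality collapses this telescoping sum: just as in the proof of Theorem~\ref{thm:Nx1} where the $\max$/Cauchy–Schwarz argument yields $\bar D_i=\frac{T}{2\mu}(\sum\lambda-\mu)$ independent of $i$, here each $\bar D_i$ reduces to $\frac{T}{2}\max\{\Lambda_{\text{root}}/\mu_{\text{root}}-1,0\}$ (or the appropriate ratio when the egress layer has several nodes), identical across all ingress nodes, so $\bar D_{\text{avg}}=\bar D_{\text{max}}$ and both equal the layer-cut lower bound.

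Then I would argue optimality: the quantity $\frac{T}{2}\max\{\sum_i\lambda_i/\sum_j\mu_j-1,0\}$ is a universal lower bound on $\bar D_{\text{avg}}$ for any policy (it is the average delay forced by the egress bottleneck alone, which appears already in Theorems~\ref{thm:NxM} and~\ref{thm:clos-static-flow}), and since $\bar D_{\text{max}}\ge\bar D_{\text{avg}}$ always, matching this bound certifies that the constructed policy minimizes both metrics. Formally I would induct on $L$: the base case $L=2$ is Theorem~\ref{thm:Nx1} (or~\ref{thm:NxM} with the tree's star structure); for the inductive step, peel off the egress layer, treat each egress node together with the subtree above it, apply the induction hypothesis to bound the delay accumulated in layers $2,\dots,L$, and then add the final hop's contribution, checking that the rate-proportional condition at layer $L$ makes the two terms combine into the single $\max$ expression rather than strictly exceeding it.

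The main obstacle I anticipate is the telescoping/collapsing step in the presence of the $\max\{\cdot,0\}$ truncations: a node's queue may stay empty (when its outgoing rate meets or exceeds its incoming rate) so the naive linear-combination formula for $D_p(t)$ needs case analysis at every node, exactly the complication the authors flagged for $\mathcal{R}_2,\mathcal{R}_3,\mathcal{R}_4$ and for nonzero initial queues in the $N\times 1$ discussion. The clean way around it is to observe that rate-proportionality plus maximum-throughput forces a consistent regime — either every node on a given source-to-egress path has a strictly growing queue, or the whole network is underloaded and delay is zero — so the truncations never activate in the nontrivial case; establishing this dichotomy rigorously (essentially, that $\gamma_l\ge 1$ propagates down the tree once the egress layer is the bottleneck) is the technical heart of the argument. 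A secondary subtlety is handling fan-in trees whose egress layer has more than one node, where one must also impose the $\mu$-proportionality of Theorem~\ref{thm:NxM} at the last layer; I would note that the $PSS$-based condition as stated already encodes this when specialized to $l=L$.
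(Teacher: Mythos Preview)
Your proposal is essentially correct in its ingredients but organizes them differently from the paper. The paper's proof (given for a $4\times 2\times 1$ tree and then asserted to extend) does not proceed by induction or by explicit reduction to Theorem~\ref{thm:Nx1}. Instead it writes out $\bar D_{\text{avg}}=\sum_i\lambda_i\bar D_i/\sum_i\lambda_i$ as a function of the link rates \emph{without} assuming the theorem's conditions, and then applies the Cauchy--Schwarz inequality once per layer, from the leaves inward: the first application collapses $\lambda_1^2(g_{11}+g_{21})/g_{11}+\lambda_2^2(g_{11}+g_{21})/g_{21}\ge(\lambda_1+\lambda_2)^2$ (and similarly for the other layer-2 node), the second collapses the resulting expression over layer~3, and the equality conditions at each step are precisely the $PSS$-proportional conditions of the theorem. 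So the paper obtains the lower bound and the sufficient condition simultaneously from a single chain of inequalities, whereas you plan to verify the achieved value by a telescoping computation under the stated conditions and then appeal to a separately established egress-bottleneck lower bound. Both routes work; the paper's is more self-contained (it does not need to import the lower bound from Theorems~\ref{thm:NxM}--\ref{thm:clos-static-flow}), while yours makes the structural reason for the telescoping ($\Lambda_{n_i^{l-1}}/\Lambda_{n_j^l}$ factors cancel layer by layer) more transparent.

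Two cautions on your write-up. First, the ``bookkeeping identity'' that under any throughput-maximizing policy the aggregate ingress rate to $n_j^l$ equals $\Lambda_{n_j^l}$ is false as stated (upstream queues can hold back traffic while throughput at the egress layer is still maximal); fortunately your telescoping computation does not need it---what you actually use is that the \emph{link rate} $g_{n_i^{l-1},n_j^l}=\Lambda_{n_i^{l-1}}/\gamma_{n_j^l}$ under the hypothesis, which makes the ratio $(\sum_k g_{n_k^{l-1},n_j^l})/g_{n_i^{l-1},n_j^l}$ equal $\Lambda_{n_j^l}/\Lambda_{n_i^{l-1}}$, and these ratios telescope along the path. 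Second, the induction-by-peeling-off-the-egress-layer is more awkward than necessary: the direct telescoping along each source's unique path already gives $\bar D_i=\frac{T}{2\mu}(\sum_k\lambda_k-\mu)$ for every $i$ in one line once you substitute the proportional rates, so no induction is needed.
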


We prove Theorem \ref{thm:fat-tree} under a $3$-layer $4\times 2\times 1$ tree structure in the appendix which can be applied to general multi-layer tree structures. We explain Theorem \ref{thm:fat-tree} through the example in Fig.~\ref{fig:Paper_tree_example}. Layer 2 contains three nodes $n_1^2$, $n_2^2$, $n_3^2$. The parents of these three nodes at layer 1 are $\{n_1^1, n_2^1\}$, $\{n_3^1, n_4^1\}$, and $\{n_5^1, n_6^1\}$ respectively. Then a sufficient condition to achieve minimum delay in Theorem \ref{thm:fat-tree} corresponding to these three nodes at layer 2 are 
\begin{equation}
\label{eqn:fat_tree_1}
\begin{cases}
\lambda_1 / g_{n_1^1,n_1^2} = \lambda_2 / g_{n_2^1, n_1^2} = \gamma_{n_1^2} \\
\lambda_3 / g_{n_3^1,n_2^2} = \lambda_4 / g_{n_4^1, n_2^2} = \gamma_{n_2^2} \\
\lambda_5 / g_{n_5^1,n_3^2} = \lambda_6 / g_{n_6^1, n_3^2} = \gamma_{n_3^2}
\end{cases}
\end{equation}
for some $\gamma_{n_1^2}, \gamma_{n_2^2}, \gamma_{n_3^2}>0$. Similarly, layer 3 contains a single node $n_1^3$, whose parents are $\{n_1^2, n_2^2, n_3^2\}$. The corresponding condition for $n_1^3$ is that for some $\gamma_{n_1^3}>0$,
\begin{equation}
\label{eqn:fat_tree_2}
\frac{\lambda_1 + \lambda_2}{g_{n_1^2,n_1^3}} = \frac{\lambda_3 + \lambda_4}{g_{n_2^2,n_1^3}} = \frac{\lambda_5 + \lambda_6}{g_{n_3^2,n_1^3}} = \gamma_{n_1^3}
\end{equation}
Combining \eqref{eqn:fat_tree_1} and \eqref{eqn:fat_tree_2} gives a sufficient condition on $\mathbf{g}$ to minimize $\bar{D}_{\text{avg}}$ and $\bar{D}_{\text{max}}$ for the 3-layer tree in Fig.~\ref{fig:Paper_tree_example} as long as maximum throughput is achieved. 

An implication of Theorem \ref{thm:fat-tree} is that its min-delay condition also follows the rate-proportional property, where the transmission rates of the egress links from the same layer should be in the same proportion to the total external packet arrival rates among their parent source sets. Moreover, it is straightforward to verify that any policy that satisfies the general min-delay condition \eqref{eqn:clos-static-flow} in the tree structure must satisfy the condition in Theorem \ref{thm:fat-tree}. This means Theorem \ref{thm:fat-tree} gives a more relaxed sufficient condition on link rate control to minimize delay than applying \eqref{eqn:clos-static-flow} directly to the tree.





\subsubsection{Conjecture on Sufficient and Necessary Conditions}
\label{subsec:conjecture_necessary}

We have yet derived the necessary conditions to achieve minimum delay in general multi-stage networks. The challenge is to characterize all the min-delay policies due to the possible switching from non-zero to zero queue length at some nodes whose egress rates are greater than  ingress rates in the time window $[t_0, t_0+T]$. The $N\times 1$ network is the special case that we can fully characterize the complete set of min-delay policies, as shown in Theorem \ref{thm:Nx1}. 

We conjecture that the crux lies in that the actual transmission rate of a link $(i,j)$ starting from a node with zero queue length are not equal to the transmission rate we originally set.  Consider the example where a node $i$ at a layer $l$ with links $(1,i)$, $(2,i)$, $(i,3)$, and $(i,4)$, where node 1 and 2 are at layer $l-1$, node 3 and 4 are at layer $l+1$. Suppose we set $g_{1i} = 1$, $g_{2i} = 2$, $g_{i3} = g_{i4} = 3$. Clearly $g_{1i} + g_{2i} < g_{i3} + g_{i4}$ thus $q_i(t) = 0$ starting from some time. Since the definition of link rate is the number of packets transmitted over this link in a time unit, there are $3$ units of packets injected into node $i$ at time $t$, and $1.5$ units will be served to each of the egress links $(i,3)$ and $(i,4)$, and thus the actual transmission rates over $(i,3)$ and $(i,4)$ are both $1.5$, less than $g_{i3}=g_{i4}=3$. Following this idea, given the transmission rate vector $\mathbf{g}$ we set, we can obtain the actual transmission rates denoted by $\tilde{\mathbf{g}}$ based on \eqref{eqn:actual_service_rates} starting from the ingress layer in multi-stage networks: For $\forall l = 1,\dots, L-1, ~\forall i \in \mathcal{V}_l, j \in \mathcal{V}_{l+1}$,
\begin{equation}
\label{eqn:actual_service_rates}
\small
\begin{aligned}
\tilde{g}_{ij} &= 
\begin{cases}
g_{ij}, \text{if} \sum\limits_{k\in\mathcal{V}_{l-1}} \tilde{g}_{ki} \geq  \sum\limits_{k\in\mathcal{V}_{l+1}} g_{ik} \\
g_{ij} \frac{\sum_{k\in\mathcal{V}_{l-1}} \tilde{g}_{ki}}{\sum_{k\in\mathcal{V}_{l+1}} g_{ik}}, \text{o.w.}
\end{cases} \hspace{-5mm}= g_{ij}\min\left\{1,\frac{\sum_{k\in\mathcal{V}_{l-1}} \tilde{g}_{ki}}{\sum_{k\in\mathcal{V}_{l+1}} g_{ik}}\right\} 
\end{aligned}
\end{equation}
where we define $\sum_{k\in \mathcal{V}_0} \tilde{g}_{ki}:=\lambda_i, ~\forall i \in \mathcal{V}_1$ to incorporate the ingress layer in \eqref{eqn:actual_service_rates}. We have the following conjecture of the sufficient and necessary condition to minimize $\bar{D}_{\text{avg}}$ and $\bar{D}_{\text{max}}$ based on actual transmission rate vector $\tilde{\mathbf{g}}$.




\begin{conjecture}
\label{conj:sufficient_and_necessary}
Consider an $L$-layer multi-stage network. The sufficient and necessary condition of a transmission policy $\mathbf{g}$ to minimize both $\bar{D}_{\text{avg}}$ and $\bar{D}_{\text{max}}$ is that its corresponding actual transmission rate vector $\tilde{\mathbf{g}}$ satisfies \eqref{eqn:clos-static-flow} and achieves maximum throughput.
\end{conjecture}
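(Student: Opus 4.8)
The plan is to reduce Conjecture~\ref{conj:sufficient_and_necessary} to Theorem~\ref{thm:clos-static-flow} through the actual-rate vector $\tilde{\mathbf{g}}$ of \eqref{eqn:actual_service_rates}, and then to supply the missing necessity argument by a layerwise convexity analysis. The first ingredient I would establish is an \emph{equivalence lemma}: for any static $\mathbf{g}$ with $\mathbf{q}(t_0)=\boldsymbol{0}$, the queue trajectories generated by \eqref{eqn:ODE_multi_stage} under $\mathbf{g}$ coincide with those generated under $\tilde{\mathbf{g}}$, hence every path delay $D_p(t)$, and therefore $\bar{D}_{\text{avg}}$ and $\bar{D}_{\text{max}}$, are identical for $\mathbf{g}$ and $\tilde{\mathbf{g}}$. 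The structural fact behind this, proved by induction on the layer index from \eqref{eqn:actual_service_rates}, is that $\sum_{j}\tilde{g}_{ij}=\min\{\sum_j g_{ij},\sum_k\tilde{g}_{ki}\}\le\sum_k\tilde{g}_{ki}$ at every node, so starting from empty queues no queue ever drains: each is either identically zero or strictly increasing at a constant rate. This is exactly the regime in which the clean delay formula \eqref{eqn:two-hop-delay} and its multi-stage generalization hold with every $\max\{\cdot,0\}$ term inactive, so the argument of Theorem~\ref{thm:clos-static-flow} applies verbatim to $\tilde{\mathbf{g}}$. Sufficiency then follows at once: if $\tilde{\mathbf{g}}$ obeys \eqref{eqn:clos-static-flow} and attains maximum throughput, then $\bar{D}_{\text{avg}}(\mathbf{g})=\bar{D}_{\text{avg}}(\tilde{\mathbf{g}})=\frac{T}{2}\max\{\lambda_{\text{sum}}/\mu_{\text{sum}}-1,0\}$, the global lower bound of Theorem~\ref{thm:clos-static-flow}.

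For necessity I would argue in two halves. First, \emph{maximum throughput of $\tilde{\mathbf{g}}$ is necessary}: if the actual egress flow leaves some node $n_j^L$ starved over a set of positive measure, then strictly fewer than $\mu_{\text{sum}}T$ fluid units depart in any window, which by fluid conservation forces a larger residual backlog and, after averaging, strictly larger $\bar{D}_{\text{avg}}$ than the bound --- a short counting argument paralleling the easy cases of Theorem~\ref{thm:Nx1}. Second, \emph{the rate-proportional structure is necessary}: restricting to $\tilde{\mathbf{g}}$ that attain maximum throughput, I would write $\bar{D}_{\text{avg}}(\tilde{\mathbf{g}})$ explicitly via the multi-stage version of \eqref{eqn:two-hop-delay}, so that the contribution of layer $l$ appears in a Cauchy--Schwarz form $\sum_i(\mathrm{in}_i^l)^2/\mathrm{out}_i^l$ with $\mathrm{in}_i^l$ fixed by the routing in lower layers; a backward induction from the egress layer should then show this is minimized precisely when $\mathrm{in}_i^l/\mathrm{out}_i^l$ is constant in $i$, i.e.\ the $\gamma_l$ identities of \eqref{eqn:clos-static-flow} hold, and that the same configuration equalizes all $\bar{D}_i$ and hence also minimizes $\bar{D}_{\text{max}}$. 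Combined with the equivalence lemma, this yields that $\mathbf{g}$ is optimal iff $\tilde{\mathbf{g}}$ satisfies \eqref{eqn:clos-static-flow} and achieves maximum throughput.

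The hard part will be this last step, where the phenomenon the conjecture is designed to neutralize resurfaces at the level of the analysis. Two difficulties compound. First, the map $\mathbf{g}\mapsto\tilde{\mathbf{g}}$ is piecewise linear --- one linear piece per subset of draining nodes --- so optimizing over $\mathbf{g}$ is optimizing over a non-convex feasible region; I would sidestep this by optimizing over the image $\{\tilde{\mathbf{g}}\}$ directly, which the equivalence lemma makes lossless and which coincides with the convex set of non-draining rate vectors (total egress $\le$ total ingress at every node). Second, even on that convex set the per-layer delay terms do not decouple: the layer totals are linked by $\sum_i\mathrm{out}_i^l=\sum_i\mathrm{in}_i^{l+1}$, so shrinking one layer's flow inflates the next layer's term, and Theorem~\ref{thm:clos-static-flow} already shows these tradeoffs cancel exactly at rate-proportional configurations (the optimum is independent of $\boldsymbol{\gamma}$). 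The genuine crux is therefore to show that \emph{off} the rate-proportional manifold this cancellation fails and $\bar{D}_{\text{avg}}$ strictly increases; my intended tool is a simultaneous perturbation / exchange argument: given a putatively optimal non-rate-proportional $\tilde{\mathbf{g}}$, construct a feasible perturbation that equalizes the offending ratio at one layer while compensating downstream, and use strict convexity of quadratic-over-linear to force a strict decrease, contradicting optimality. Link-capacity constraints would be folded in exactly as in the $N\times 1$ and $N_S\times N_D$ cases of Theorems~\ref{thm:Nx1} and \ref{thm:NxM}.
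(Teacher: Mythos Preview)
The statement is a \emph{Conjecture}, and the paper explicitly defers its proof to future work, so there is no paper argument to compare against; you are attempting to close an open problem. Your equivalence lemma is correct and is the right reduction: from \eqref{eqn:actual_service_rates} one checks inductively that $\tilde g_{ij}/\sum_k\tilde g_{ik}=g_{ij}/\sum_k g_{ik}$, so the routing fractions (hence the path weights $w_p$) coincide under $\mathbf{g}$ and $\tilde{\mathbf{g}}$; each queue starting empty is either identically zero or linearly increasing, and in both cases the per-node waiting time matches. Sufficiency then follows from Theorem~\ref{thm:clos-static-flow} exactly as you say.

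The gap is necessity, which you flag yourself. Two concrete holes remain. First, your throughput-necessity step (``larger residual backlog $\Rightarrow$ larger $\bar D_{\text{avg}}$'') is asserted, not proved: $\bar D_{\text{avg}}$ is a weighted average of per-packet path delays, not a backlog integral, and with multiple egress nodes the connection is not immediate. Second, your perturbation/exchange argument for the $\gamma_l$ identities is only a sketch and does not resolve the cross-layer coupling you correctly identify. On this second point a more direct route is already implicit in the paper's own proof of Theorem~\ref{thm:clos-static-flow} (Appendix~B): the lower bound on $\bar D_{\text{avg}}$ is a \emph{chain} of Cauchy--Schwarz inequalities, one per layer, each valid regardless of whether earlier equality conditions hold, and tracking equality through the chain shows the bound is met iff every step is tight --- which, after the paper's own algebra, is precisely the full set of $\gamma_l$ identities in \eqref{eqn:clos-static-flow}. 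Combined with your equivalence lemma this would close the conjecture for fully connected layers with unlimited capacity; what remains is to make that chain rigorous for general $L$ and layer widths (the paper only does $2\times2\times2$), to handle non-full connectivity where \eqref{eqn:clos-static-flow} may be infeasible, and to supply an honest argument for the first hole.
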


We can validate the conjecture in $N\times 1$ networks based on Fig.~\ref{fig:Nx1}: Any $\mathbf{g}$ such that $g_i\geq \lambda_i,~\forall i =1,\dots,N$ leads to minimum delay, and its corresponding $\tilde{g}_i = \lambda_i$ and thus $\{\tilde{g}_i\}_{i=1}^N$ are in proportion to $\{\lambda_i\}_{i=1}^N$. However, if there is a single $i_0\in \mathcal{V}_S$ such that $g_{i_0}<\lambda_{i_0}$, which does not induce minimum delay, then clearly $\tilde{g}_{i_0}=g_{i_0}$ and thus $\tilde{g}_{i_0}/\lambda_{i_0} < \tilde{g}_{i}/\lambda_{i} = 1,~\forall i \neq i_0$, i.e., \eqref{eqn:clos-static-flow} does not hold for $\tilde{\mathbf{g}}$. We further visualize an example of $2\times 2$ networks in Fig.~\ref{fig:Paper_conjecture_2x2_example} to explain the conjecture. Furthermore, we point out that the conjecture can be generalized to queue-based policies, where we require that the corresponding actual transmission rate vector $\tilde{\mathbf{g}}(\mathbf{q}(t))$ meets the queue-proportional condition \eqref{eqn:clos-queue-based} and maximum throughput requirement in Theorem \ref{thm:clos-queue-based-policy}. We leave the proof of this conjecture to future work.


\begin{figure}[!htbp]
\centering
\includegraphics[width=0.98\linewidth]{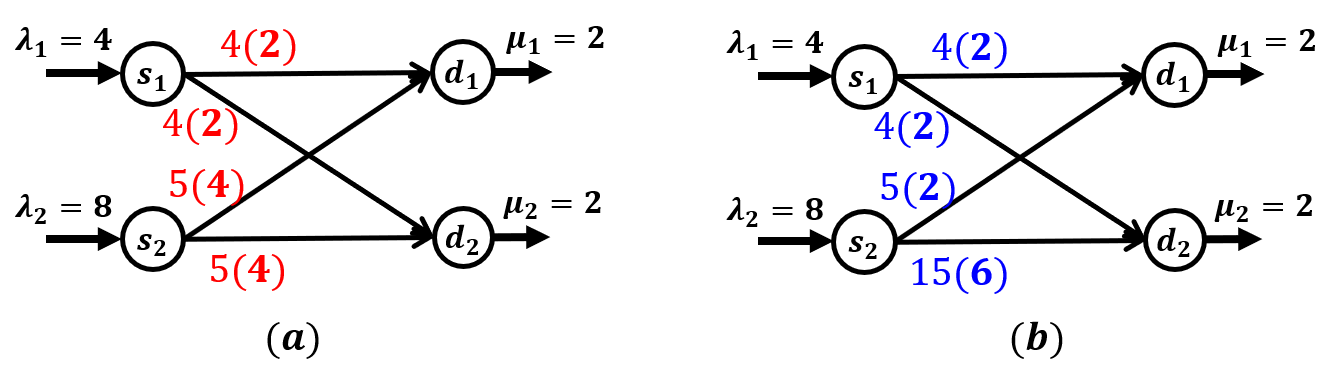}
\caption{Validation of Conjecture \ref{conj:sufficient_and_necessary} in $2\times 2$ single-hop networks, where the notation $x(y)$ over a link $(i,j)$ means $g_{ij}=x$ and its actual transmission rate $\bar{g}_{ij}=y$ calculated based on \eqref{eqn:actual_service_rates}: (a) setting $\mathbf{g}=[4,4,5,5]$ leads to $\tilde{\mathbf{g}}=[2,2,4,4]$ which satisfies \eqref{eqn:clos-queue-based} thus min-delay; (b) setting $\mathbf{g}=[4,4,5,15]$ leads to $\tilde{\mathbf{g}}=[2,2,2,6]$ which does not satisfy \eqref{eqn:clos-queue-based} thus not min-delay.}
\label{fig:Paper_conjecture_2x2_example}
\end{figure}


\section{Conclusion and Future Work}

We study link rate control for queueing delay minimization in overloaded networks. Leveraging the fluid queueing model, we show that any static rate-proportional policy, which guarantees identical ratios between the ingress and egress rates of all the nodes at the same layer, minimizes the average delay $\bar{D}_{\text{avg}}$ and the maximum ingress delay $\bar{D}_{\text{max}}$ in general single-hop and multi-stage networks. We further extend the result to the queue-proportional policies which can achieve asymptotically minimum delay based on real-time queue information agnostic of packet arrival rates. We evaluate the performance of our proposed policies under numerous network settings, validate their min-delay property, and demonstrate their superiority in delay reduction compared with  the backpressure policy and the max-link-rate policy. We finally discuss the extensions of the main results in practice and in theory. We envision multiple future directions including proving sufficient and necessary conditions on link rate control for delay minimization in general multi-stage networks, the extension of main results to multi-hop networks, and the implementation of the proposed policies in real data center networks.

\bibliography{DelayMinimization}
\bibliographystyle{IEEEtran}

\appendix

\subsection{Proof of Theorem \ref{thm:NxM}}

\begin{proof}
We present the proof sketch here. Due to the space limit, we only prove for $\bar{D}_{\text{avg}}$, where $\bar{D}_{\text{max}}$ is similar. For packets that arrive at the ingress node $s_1$ at time $t$ and finally transmitted to the egress node $d_1$, the total queueing delay is
$$
\small
\begin{aligned}
&D_{s_1d_1}(t)=\frac{q_{s_1}(t)}{g_{11}+g_{12}}+\frac{q_{d_1}\left(t+\frac{q_{s_1}(t)}{g_{11}+g_{12}}\right)}{\mu_1}
\\&=\frac{q_{s_1}(t)}{g_{11}+g_{12}} + \frac{1}{\mu_1} \max\left\{0, q_{d_1}(t)+\frac{q_{s_1}(t)}{g_{11}+g_{12}}(g_{11}+g_{21}-\mu)\right\}
\\&=\begin{cases}
\frac{q_{d_1}(t)}{\mu_1} + \frac{q_{s_1}(t)}{\mu_1}\frac{g_{11}+g_{21}}{g_{11}+g_{12}},\qquad g_{11}+g_{21}\geq \mu_1 \\
\frac{q_{s_1}(t)}{g_{11}+g_{12}},\qquad g_{11}+g_{21}<\mu_1
\end{cases}
\end{aligned}
$$
Since $\mathbf{q}(t_0)=\boldsymbol{0}$, then the average delay for packets from $s_1$ to $d_1$, denoted as $\bar{D}_{s_1d_1}$, is
$$
\small
\begin{aligned}
&\bar{D}_{s_1d_1}:=\frac{1}{T}\int_{t_0}^{t_0+T} D_{s_1d_1}(t)dt
\\&=\begin{cases}
\frac{T}{2\mu_1}(g_{11}+g_{21}-\mu_1) \\
\quad + \frac{T}{2\mu_1}\frac{g_{11}+g_{21}}{g_{11}+g_{12}}\max\{\lambda_1-g_{11}-g_{12}, 0\},~g_{11}+g_{21}\geq \mu_1 \\
\frac{T}{2(g_{11}+g_{21})}\max\left\{\lambda_1-g_{11}-g_{12}, 0\right\},~g_{11}+g_{21}\leq \mu_1
\end{cases}
\end{aligned}
$$
We can verify that among all transmission rate vectors $\mathbf{g}$'s that $g_{11}+g_{21}\leq \mu_1$, the $\mathbf{g}$'s that satisfy $g_{11}+g_{21}=\mu_1$ achieve minimum delay\footnote{The intuition is clear that $g_{11}+g_{21}<\mu_1$ does not fully utilize the service capability of node $d_1$.}. Therefore the minimum delay achieved under $g_{11}+g_{21}\geq \mu_1$ is exactly the global optimum, under which
$$
\begin{aligned}
\bar{D}_{s_1d_1}
&=\begin{cases}
\frac{T}{2\mu_1} \left(\lambda_1\frac{g_{11}+g_{21}}{g_{11}+g_{12}}-\mu_1\right), ~g_{11}+g_{12}\leq \lambda_1 \\
\frac{T}{2\mu_1} \left(g_{11}+g_{21}-\mu_1\right),~g_{11}+g_{12}\geq \lambda_1
\end{cases}
\end{aligned}
$$

Generally, we can obtain that for any $(i,j)\in \{(1,1), (1,2), (2,1), (2,2)\}$,
$$
\bar{D}_{s_id_j}
=\begin{cases}
\frac{T}{2\mu_j} \left(\lambda_i\frac{g_{1j}+g_{2j}}{g_{i1}+g_{i2}}-\mu_j\right), ~g_{i1}+g_{i2}\leq \lambda_i \\
\frac{T}{2\mu_j} \left(g_{1j}+g_{2j}-\mu_2\right),~g_{i1}+g_{i2}\geq \lambda_i
\end{cases}
$$

Therefore, we have
$$
\small
\begin{aligned}
\bar{D}_{\text{avg}} &= \frac{\lambda_1}{\lambda_1+\lambda_2} \frac{g_{11}}{g_{11}+g_{12}}\bar{D}_{s_1d_1} + \frac{\lambda_1
}{\lambda_1+\lambda_2} \frac{g_{12}}{g_{11}+g_{12}}\bar{D}_{s_1d_2}\\&
+\frac{\lambda_2}{\lambda_1+\lambda_2} \frac{g_{21}}{g_{21}+g_{22}}\bar{D}_{s_2d_1} + \frac{\lambda_2
}{\lambda_1+\lambda_2} \frac{g_{22}}{g_{21}+g_{22}}\bar{D}_{s_2d_2}
\end{aligned}
$$
where $\frac{\lambda_i}{\lambda_1+\lambda_2}\frac{g_{ij}}{g_{i1}+g_{i2}}$ denotes the portion of packets from $s_i$ to $d_j$ that arrive within $[t_0,t_0+T]$. We again consider the four regions $\{\mathbf{g}\mid g_{11}+g_{12} \lessgtr \lambda_1, g_{21}+g_{22} \lessgtr \lambda_2\}$ and prove that the optimal solutions constrained in each region are all global optimum. Due to space limit we only show the details of the case $g_{11}+g_{12}\leq \lambda_1,~g_{21}+g_{22}\leq \lambda_2$.
%
In this case, the average delay of packets that arrive in $[t_0,t_0+T]$ among all ingress nodes is
$$
\footnotesize
\begin{aligned}
&\bar{D}_{\text{avg}}
\sim \frac{g_{11}+g_{21}}{\mu_1} \left(\frac{\lambda_1^2}{g_{11}}\left(\frac{g_{11}}{g_{11}+g_{12}}\right)^2+\frac{\lambda_2^2}{g_{21}}\left(\frac{g_{21}}{g_{21}+g_{22}}\right)^2\right)
\\&\qquad \quad +\frac{g_{12}+g_{22}}{\mu_2} \left(\frac{\lambda_1^2}{g_{12}}\left(\frac{g_{12}}{g_{11}+g_{12}}\right)^2+\frac{\lambda_2^2}{g_{22}}\left(\frac{g_{22}}{g_{21}+g_{22}}\right)^2\right)
\\&=\frac{1}{\mu_1} \left(\lambda_1^2x^2+\lambda_2^2y^2+\lambda_1^2x^2\frac{g_{21}}{g_{11}}+\lambda_2^2 y^2 \frac{g_{11}}{g_{21}}\right)
\\&\quad +\frac{1}{\mu_2} \left(\lambda_1^2(1-x)^2+\lambda_2^2(1-y)^2+\lambda_1^2(1-x)^2\frac{g_{22}}{g_{12}}+\lambda_2^2 (1-y)^2 \frac{g_{12}}{g_{22}}\right)
\\&
\overset{\text{(i)}}{\geq} \frac{1}{\mu_1}(\lambda_1x+\lambda_2y)^2+\frac{1}{\mu_2}(\lambda_1(1-x)+\lambda_2(1-y))^2 
\\&
\overset{\text{(ii)}}{\geq} \frac{T}{2(\mu_1+\mu_2)}(\lambda_1+\lambda_2) - \frac{T}{2}
= \frac{T}{2(\mu_1+\mu_2)}(\lambda_1+\lambda_2-\mu_1-\mu_2)
\end{aligned}
$$
where $\sim$ means removing constant terms, $x:=\frac{g_{11}}{g_{11}+g_{12}}$, and $y:=\frac{g_{21}}{g_{21}+g_{22}}$. The inequality (i) stems from Cauchy-Schwartz Inequality, which turns into equality when
$\frac{g_{11}}{g_{21}}=\frac{\lambda_1x}{\lambda_2y},~\frac{g_{12}}{g_{22}}=\frac{\lambda_1(1-x)}{\lambda_2(1-y)}$
and equivalently,
\begin{equation}
\label{eqn:2x2-cond1}
\frac{g_{11}+g_{12}}{g_{21}+g_{22}} = \frac{\lambda_1}{\lambda_2}.
\end{equation}
The inequality (ii) holds due to solving
$$
\min_{x,y\in[0,1]} \quad \frac{1}{\mu_1}(\lambda_1x+\lambda_2y)^2+\frac{1}{\mu_2}(\lambda_1(1-x)+\lambda_2(1-y))^2
$$
where the optimal $(x,y)$ satisfies
$$
\begin{cases}
\lambda_1x+\lambda_2y=\frac{\lambda_1+\lambda_2}{\mu_2}\left(\frac{1}{\mu_1}+\frac{1}{\mu_2}\right)^{-1} \\
\lambda_1(1-x)+\lambda_2(1-y)=\frac{\lambda_1+\lambda_2}{\mu_1}\left(\frac{1}{\mu_1}+\frac{1}{\mu_2}\right)^{-1}
\end{cases}
$$
which, combined with \eqref{eqn:2x2-cond1}, is equivalent to 
$$
\begin{cases}
\frac{\lambda_2}{\lambda_1+\lambda_2} \frac{g_{11}+g_{21}}{g_{21}+g_{22}} = \frac{\mu_1}{\mu_1+\mu_2} \\
\\
\frac{\lambda_2}{\lambda_1+\lambda_2} \frac{g_{12}+g_{22}}{g_{21}+g_{22}} = \frac{\mu_2}{\mu_1+\mu_2}
\end{cases}
$$
and thus
\begin{equation}
\label{eqn:2x2-cond2}
\frac{g_{11}+g_{21}}{g_{12}+g_{22}} = \frac{\mu_1}{\mu_2}
\end{equation}
suffices to make the inequality (ii) achieve its lower bound. Therefore \eqref{eqn:2x2-cond1} and \eqref{eqn:2x2-cond2} with $g_{11}+g_{21}\geq \mu_1,~g_{12}+g_{22}\geq \mu_2$ give us the sufficient and necessary condition on $\mathbf{g}$ to minimize $\bar{D}_{\text{avg}}$ under $g_{11}+g_{12}\leq \lambda_1$ and $g_{21}+g_{22}\leq \lambda_2$. 
\end{proof}

\subsection{Proof of Theorem \ref{thm:clos-static-flow}}

We prove the min-delay conditions on static transmission policies under a 3-layer network with 2 nodes at each layer, and links connecting each pair of nodes at adjacent layers, as shown in Fig.~\ref{fig:2x2x2}. Note that we re-index each node from 1 to 6 to increase readability of the proof. The results can be smoothly extended to general $L$-layer networks with arbitrary numbers of nodes at each layer and links between adjacent layers. For simplicity, we consider zero initial queue length of each node.


\begin{figure}[!htbp]
\centering
\includegraphics[width=0.8\linewidth]{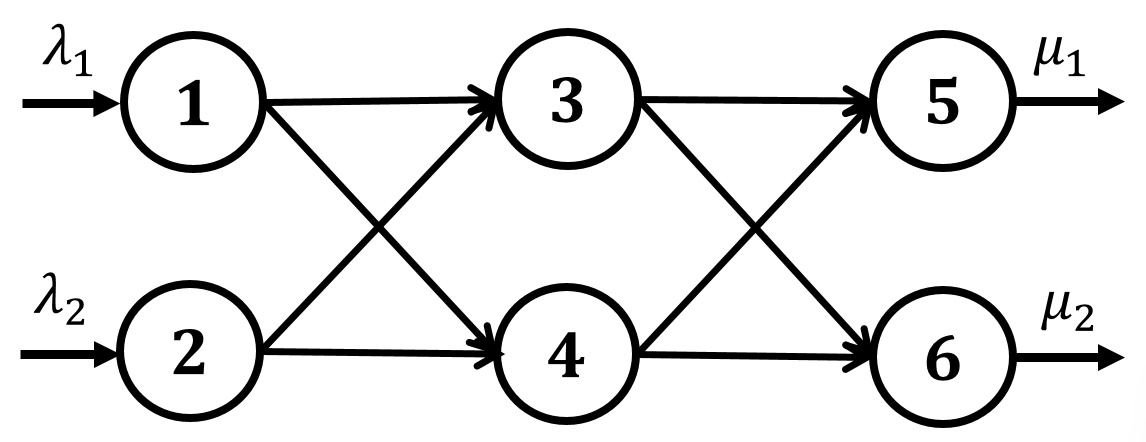}
\caption{An example of a 3-layer $2\times2\times2$ network}
\label{fig:2x2x2}
\end{figure}

Note that we can classify all the packets going through the 3-layer network by their paths. We characterize the total queueing delay of a packet taking the path $1 \rightarrow 3 \rightarrow 5$ as in \eqref{eqn:delay_135}, where the other possible paths are in similar form. We solely present the derivation assuming that the total ingress rate is not lower than the total egress rate at each node for simplification, while the min-delay policy derived in this case holds its optimality in other cases. Recall that $D_p(t)$ represents the queueing delay of a packet arriving to the network at the ingress node $p[0]$ at time t. We have

\begin{equation}
\footnotesize
\begin{aligned}
D_{135}(t) &= \frac{q_1(t)}{g_{13} + g_{14}} + \frac{q_3\left(t + \frac{q_1(t)}{g_{13} + g_{14}}\right)}{g_{35} + g_{36}} \\& \qquad+ \frac{q_5\left(t + \frac{q_1(t)}{g_{13} + g_{14}} + \frac{q_3\left(t + \frac{q_1(t)}{g_{13} + g_{14}}\right)}{g_{35} + g_{36}} \right)}{\mu} 
\\& \overset{(a)}{=} q_1(t) \times \frac{1}{g_{13} + g_{14}} \times\frac{g_{13} + g_{23}}{g_{35} + g_{36}} \times \frac{g_{35} + g_{45}}{\mu_1}
\\& \qquad + q_3(t) \times \frac{1}{g_{35} + g_{36}} \times \frac{g_{35} + g_{45}}{\mu_1} + q_5(t) \times \frac{1}{\mu_1} 
\\& \overset{(b)}{=} \left(\frac{\lambda_1}{g_{13} + g_{14}} \times\frac{g_{13} + g_{23}}{g_{35} + g_{36}} \times\frac{g_{35} + g_{45}}{\mu_1} - 1\right)t
\end{aligned}
\label{eqn:delay_135}
\end{equation}
where $(a)$ stems from the expansion of $q_3\left(t + \frac{q_1(t)}{g_{13} + g_{14}}\right)$ and $q_5\left(t + \frac{q_1(t)}{g_{13} + g_{14}} + \frac{q_3\left(t + \frac{q_1(t)}{g_{13} + g_{14}}\right)}{g_{35} + g_{36}} \right)$ into a linear combination of $q_1(t)$, $q_3(t)$, and $q_5(t)$, for example 
$
q_3\left(t + \frac{q_1(t)}{g_{13} + g_{14}}\right) = q_3(t) + \frac{q_1(t)}{g_{13} + g_{14}} \times (g_{13} + g_{23} - g_{34} - g_{35})
$
and similarly for $q_5\left(t + \frac{q_1(t)}{g_{13} + g_{14}} + \frac{q_3\left(t + \frac{q_1(t)}{g_{13} + g_{14}}\right)}{g_{35} + g_{36}} \right)$; (b) is due to $q_1(t) = (\lambda_1 - g_{13} - g_{14})t$, $q_3(t) = (g_{13} + g_{23} - g_{34} - g_{35})t$, and $q_5(t) = (g_{35} + g_{45} - \mu_1)t$. We can similarly express the delay of packets taking the other 7 paths, and further obtain the time-average values $\bar{D}_p$ for each path $p$ in $[t_0, t_0+T]$.

We can then express the average queueing delay of packets that arrive to the network in overload within time window $[0,T]$ as $\bar{D}_{\text{avg}}:=\sum_{p\in \mathcal{P}} w_p \bar{D}_p$, where $\mathcal{P}$ denotes the set of possible paths of packets, and $w_p$ denotes the proportion of the packets that take the path $p$. For example, 
$
w_{135} = \lambda_1 \frac{g_{13}}{g_{13} + g_{14}}\frac{g_{35}}{g_{35} + g_{36}},
$
and similarly for the other paths. In the following, we derive the conditions on the transmission rate vector $\mathbf{g}$ that minimizes $\bar{D}_{\text{avg}}$. We first consider the sum of the weighted delay for two paths $1\rightarrow 3 \rightarrow 5$ and $2\rightarrow 3 \rightarrow 5$, which can be derived as
$$
\footnotesize
\begin{aligned}
&w_{135}\bar{D}_{135} + w_{235}\bar{D}_{235} \\
&= \frac{g_{13} + g_{23}}{\mu_1} \times \frac{g_{35}(g_{35} + g_{45})}{(g_{35} + g_{36})^2}\times \left(\lambda_1^2 \frac{g_{13}}{(g_{13} + g_{14})^2} + \lambda_2^2 \frac{g_{23}}{(g_{23} + g_{24})^2} \right)
\\&= \frac{1}{\mu_1}\times \frac{g_{35}(g_{35} + g_{45})}{(g_{35} + g_{36})^2} \times \left(\lambda_1^2x^2 + \lambda_2^2y^2 + \lambda_1^2 x^2 \frac{g_{23}}{g_{13}} + \lambda_2^2y^2 \frac{g_{13}}{g_{23}}\right)
\\& \geq \frac{1}{\mu_1} \times \frac{g_{35}(g_{35} + g_{45})}{(g_{35} + g_{36})^2} \left(\lambda_1x + \lambda_2y\right)^2
\end{aligned}
$$
where $x=\frac{g_{13}}{g_{13}+g_{14}}$ and $y = \frac{g_{23}}{g_{23}+g_{24}}$, and the last inequality is based on Cauchy-Schwartz inequality, where the condition of link rates to reach the lower bound is 
\begin{equation}
\label{eqn:equality_condition_2x2x2_1}
\frac{\lambda_1x}{\lambda_2y} = \frac{g_{13} }{g_{23}}.
\end{equation}
We can similarly derive the same condition \eqref{eqn:equality_condition_2x2x2_1} for the sum $w_{136}\bar{D}_{136} + w_{236}\bar{D}_{236}$ reaching its lower bound. For the other two sums $w_{145}\bar{D}_{145} + w_{245}\bar{D}_{245}$ and $w_{146}\bar{D}_{146} + w_{246}\bar{D}_{246}$, the condition becomes 
\begin{equation}
\label{eqn:equality_condition_2x2x2_2}
\frac{\lambda_1(1-x)}{\lambda_2(1-y)} = \frac{g_{14}}{g_{24}}.
\end{equation}
Note that with $x=\frac{g_{13}}{g_{13}+g_{14}}$ and $y = \frac{g_{23}}{g_{23}+g_{24}}$, both \eqref{eqn:equality_condition_2x2x2_1} and \eqref{eqn:equality_condition_2x2x2_2} lead to the condition 
\begin{equation}
\label{eqn:equality_condition_2x2x2}
\frac{g_{13}+g_{14}}{g_{23}+g_{24}} = \frac{\lambda_1}{\lambda_2}.
\end{equation}

Suppose that \eqref{eqn:equality_condition_2x2x2} is satisfied below. We can further extend the above idea to derive the min-delay transmission rates between layer 2 and 3. Specifically, let $z = \frac{g_{35}}{g_{35}+g_{36}}$, $w=\frac{g_{45}}{g_{45} + g_{46}}$, then we have
$$
\begin{aligned}
&w_{135}\bar{D}_{135} + w_{235}\bar{D}_{235} + w_{145}\bar{D}_{145} + w_{245}\bar{D}_{245} 
\\& = \frac{1}{\mu_1}
\big((\lambda_1x + \lambda_2y)^2z^2\big(1+\frac{g_{45}}{g_{35}}\big) 
\\& \quad + (\lambda_1(1-x) + \lambda_2(1-y))^2w^2\big(1+\frac{g_{35}}{g_{45}}\big)\big)
\\& \geq \frac{1}{\mu_1}\left((\lambda_1x + \lambda_2y)z + (\lambda_1(1-x) + \lambda_2(1-y))w\right)^2
\end{aligned}
$$
where the condition that achieves the lower bound is 
\begin{equation}
\label{eqn:equality_condition_2x2x2_stage2}
\begin{aligned}
&\frac{z(\lambda_1x + \lambda_2y)}{w(\lambda_1(1-x) +\lambda_2(1-y))} = \frac{g_{35}}{g_{45}} 
\\& \quad = 
\frac{g_{35} + g_{36}}{g_{45} + g_{46}} = \frac{\lambda_1x + \lambda_2y}{\lambda_1(1-x) + \lambda_2(1-y)}.
\end{aligned}
\end{equation}
The other half $w_{136}\bar{D}_{136} + w_{236}\bar{D}_{236} + w_{146}\bar{D}_{146} + w_{246}\bar{D}_{246}$ leads to the same condition as in \eqref{eqn:equality_condition_2x2x2_stage2}.

Suppose that \eqref{eqn:equality_condition_2x2x2_stage2} is satisfied below. The problem to minimize the average queueing delay now becomes
$$
\small
\begin{aligned}
&\min_{x,y,z,w} \frac{1}{\mu_1}\left((\lambda_1x + \lambda_2y)z + (\lambda_1(1-x) + \lambda_2(1-y))w\right)^2 
\\&  + \frac{1}{\mu_2}\left((\lambda_1x + \lambda_2y)(1-z) + (\lambda_1(1-x) + \lambda_2(1-y))(1-w)\right)^2 
\end{aligned}
$$
Based on the Cauchy-Schwartz inequality, the optimal condition is that 
\begin{equation}
\small
\label{eqn:equality_condition_2x2x2_stage3}
\frac{(\lambda_1x + \lambda_2y)z + (\lambda_1(1-x) + \lambda_2(1-y))w}{(\lambda_1x + \lambda_2y)(1-z) + (\lambda_1(1-x) + \lambda_2(1-y))(1-w)} = \frac{\mu_1}{\mu_2}
\end{equation}

We combine the three optimal conditions \eqref{eqn:equality_condition_2x2x2}, \eqref{eqn:equality_condition_2x2x2_stage2}, and \eqref{eqn:equality_condition_2x2x2_stage3} which altogether reach the lower bound of the average queuing delay of packets. We simplify them and obtain the min-delay constraints that reach the lower bound as follows.
\begin{equation}
\label{eqn:delay_optimal_condition_2x2x2}
\begin{cases}
\frac{\lambda_1}{\lambda_2} = \frac{g_{13} + g_{14}}{g_{23} + g_{24}} \\
\frac{g_{13} + g_{23}}{g_{14} + g_{24}} = \frac{g_{35} + g_{36}}{g_{45} + g_{46}} \\
\frac{g_{35} + g_{45}}{g_{36} + g_{46}} = \frac{\mu_1}{\mu_2}
\end{cases}
\end{equation}
which means to guarantee an identical ratio between the ingress and egress rates of each node at a layer. Combining \eqref{eqn:delay_optimal_condition_2x2x2} and the condition to achieve maximum throughput in the network, we obtain a sufficient condition on a policy that minimizes the average queueing delay. We can derive similarly for $\bar{D}_{\text{max}}$ minimization, which is the maximum average queueing delay of a packet in the network over different ingress nodes. 







\subsection{Proof of Theorem \ref{thm:fat-tree}}

We give the proof of a $4\times 2 \times 1$ tree structure, where $\mathcal{V}_1 = \{n_1^1, n_2^1, n_3^1, n_4^1\}$, $\mathcal{V}_2 = \{n_1^2, n_2^2\}$, $\mathcal{V}_3 = \{n_1^3\}$. We denote the total queueing delay of a packet injected into node $n_i^1$ at the ingress layer at time $t$ as $D_{i}(t)$, which is composed of the queueing delay at each of the 3 layers. For example, 
$$
\begin{aligned}
D_1(t) &= \frac{q_{n_1^1}(t)}{g_{n_1^1, n_1^2}} + \frac{q_{n_1^2}\left(t + \frac{q_{n_1^1}(t)}{g_{n_1^1, n_1^2}}\right)}{g_{n_1^2, n_1^3}} 
\\& \quad + \frac{q_{n_1^3}\left(t + \frac{q_{n_1^1}(t)}{g_{n_1^1, n_1^2}} + \frac{q_{n_1^2}\left(t + \frac{q_{n_1^1}(t)}{g_{n_1^1, n_1^2}}\right)}{g_{n_1^2, n_1^3}}\right)}{\mu}
\end{aligned}
$$
which is similar for $D_2(t)$, $D_3(t)$, and $D_4(t)$. We only show the derivation associated with packets that are injected into the network at node $n_1^1$. For simplicity, we only show the derivation of the case where all the nodes have positive length during overload, where the min-delay conditions also hold for the other cases. We can simplify $D_1(t)$ as
$$
\small
\begin{aligned}
D_1(t) &= q_{n_1^1}(t) \left(\frac{g_{n_1^2,n_1^3} + g_{n_2^2, n_1^3}}{\mu} \times \frac{g_{n_1^1,n_1^2} + g_{n_2^1, n_1^2}}{g_{n_1^2,n_1^3}} \times \frac{1}{g_{n_1^1,n_1^2}} \right) 
\\& \quad + q_{n_1^2}(t)\times\frac{g_{n_1^2,n_1^3} + g_{n_2^2,n_1^3}}{\mu} \times \frac{1}{g_{n_1^2,n_1^3}} + q_{n_1^3}\frac{1}{\mu} 
\end{aligned}
$$
and by expressing $q_{i}(t)$ as the multiplication of $t$ (assuming $t_0=0$) and the queue growth rate, we have
$$
\begin{aligned}
\bar{D}_{1} &:= \frac{1}{T}\int_{0}^T D_1(t) dt
\\& = \frac{T}{2\mu} \left( \lambda_1 \times \frac{g_{n_1^2, n_1^3} + g_{n_2^2, n_1^3}}{g_{n_1^2, n_1^3}} \times \frac{g_{n_1^1, n_1^2} + g_{n_2^1, n_1^2}}{g_{n_1^1, n_1^2}} - \mu \right).
\end{aligned}
$$
We can now express the average delay $\bar{D}_{\text{avg}}$ as
\begin{equation}
\small
\label{eqn:tree_d_avg}
\begin{aligned}
&\bar{D}_{\text{avg}} := \frac{\sum_{i=1}^4 \lambda_i \bar{D}_i}{\sum_{i=1}^4 \lambda_i}
\\& 
\overset{(a)}{\sim} \frac{g_{n_1^2,n_1^3} + g_{n_2^2,n_1^3}}{g_{n_1^2,n_1^3}} \left(\lambda_1^2 \frac{g_{n_1^1,n_1^2} + g_{n_2^1,n_1^2}}{g_{n_1^1,n_1^2}} + \lambda_2^2 \frac{g_{n_1^1,n_1^2} + g_{n_2^1,n_1^2}}{g_{n_2^1,n_1^2}}\right)
\\& \quad 
+ \frac{g_{n_1^2,n_1^3} + g_{n_2^2,n_1^3}}{g_{n_2^2,n_1^3}}  
\left(\lambda_3^2 \frac{g_{n_3^1,n_2^2} + g_{n_4^1,n_2^2}}{g_{n_3^1,n_2^2}} + \lambda_4^2 \frac{g_{n_3^1,n_2^2} + g_{n_4^1,n_2^2}}{g_{n_4^1,n_2^2}}\right)
\\& \overset{(b)}{\geq}  \frac{g_{n_1^2,n_1^3} + g_{n_2^2,n_1^3}}{g_{n_1^2,n_1^3}} (\lambda_1 + \lambda_2)^2 + \frac{g_{n_1^2,n_1^3} + g_{n_2^2,n_1^3}}{g_{n_2^2,n_1^3}}  (\lambda_3 + \lambda_4)^2
\\& \overset{(c)}{\geq} (\lambda_1 + \lambda_2 + \lambda_3 + \lambda_4)^2
\end{aligned}
\end{equation}
where at (a) we remove the constant additive and multiplicative terms, and at (b) and (c) we apply the Cauchy-Schwarz inequality. The conditions to achieve equality at (b) are 
\begin{equation}
\label{eqn:tree_eqn_condition_1}
\frac{g_{n_1^1, n_1^2}}{g_{n_2^1, n_1^2}} = \frac{\lambda_1}{\lambda_2},\qquad \frac{g_{n_3^1, n_2^2}}{g_{n_4^1, n_2^2}} = \frac{\lambda_3}{\lambda_4},
\end{equation}
and the condition to achieve equality at (c) is
\begin{equation}
\label{eqn:tree_eqn_condition_2}
\frac{g_{n_1^2, n_1^3}}{g_{n_2^2, n_1^3}} = \frac{\lambda_1 + \lambda_2}{\lambda_3 + \lambda_4}.
\end{equation}
Therefore any policy that satisfies \eqref{eqn:tree_eqn_condition_1} and \eqref{eqn:tree_eqn_condition_2} leads to minimum $\bar{D}_{\text{avg}}$, which is $\frac{T}{2\mu}\max\left\{\left(\sum_{i=1}^4 \lambda_i - \mu \right), 0\right\}$. Similarly, we can show \eqref{eqn:tree_eqn_condition_1} and \eqref{eqn:tree_eqn_condition_2} together form a sufficient condition to minimize the maximum ingress delay $\bar{D}_{\text{max}}$.

\end{document}